\newtheorem{thm}{Theorem}[section]
\newtheorem{rem}[thm]{Remark}
\newtheorem{lem}[thm]{Lemma}
\newtheorem{prop}[thm]{Proposition}
\newtheorem{defa}[thm]{Definition}
\newtheorem{ass}{Assumption}
\DeclareMathOperator*{\esssup}{ess\;sup}
\newcommand{\ind}{1\!\!1}
\newcommand{\nn}{\nonumber}
\newcommand{\f}{\mathcal{F}}
\newcommand{\A}{\mathcal{A}}
\newcommand{\e}{\mathrm{E}}
\newcommand{\R}{\mathbb{R}}
\newcommand{\mep}{\mathbb{P}}
\newcommand{\meq}{\mathbb{Q}}
\newcommand{\ket}{\mathcal{K}_{t,T}}
\newcommand{\cet}{\mathcal{C}_{t,T}}
\newcommand{\zet}{Z_{t,T}}
\newcommand{\zetq}{Z^\meq_{t,T}}
\newcommand{\deta}{\mathcal{D}^\eta_{t,T}}
\newcommand{\ma}{\mathcal{M}^a}
\newcommand{\q}{\mathcal{Q}}
\newcommand{\z}{\mathcal{Z}}
\newcommand{\dqdpt}{\frac{\mathrm{d}\meq}{\mathrm{d}\mep|_{\f_T}}}
\newcommand{\D}{\mathcal{D}}
\newcommand{\alp}{\begin{equation*}\left\{\begin{array}{lcl}}
\newcommand{\dal}{\end{array}\right.\end{equation*}}
\newcommand{\alpn}{\begin{equation}\left\{\begin{array}{lcl}}
\newcommand{\daln}{\end{array}\right.\end{equation}}
\newcommand{\bq}{\begin{equation*}}
\newcommand{\eq}{\end{equation*}}		
\newcommand{\bqn}{\begin{equation}}
\newcommand{\eqn}{\end{equation}}
\newcommand{\bqq}{\begin{eqnarray*}}
\newcommand{\eqq}{\end{eqnarray*}}
\newcommand{\bqqn}{\begin{eqnarray}}
\newcommand{\eqqn}{\end{eqnarray}}
\DeclareMathOperator*{\essinf}{ess\;inf}
\title{Time--consistent investment under model uncertainty: the robust forward criteria\thanks{
The work presented in this paper is part of our ongoing research into optimal portfolio choices under model uncertainty and we welcome and invite all comments. We would like to thank participants in the 
\emph{New developments in stochastic analysis} workshop, CAS, Beijing, July 2013; IGK workshop \emph{Stochastic and real world models} in Bielefeld, July 2013; $6^{\textrm{th}}$ European Summer School in Financial Mathematics, Vienna, September 2013; \emph{Mathematical Finance} seminar at Columbia University and \emph{Financial/Actuarial Mathematics} seminar at University of Michigan for their comments and suggestions.}
}
\author{Sigrid K\"allblad\thanks{CMAP, Ecole Polytechnique, Paris. Email: \texttt{sigrid.kallblad@cmap.polytechnique.fr}. The work was conducted as part of D.Phil.\ thesis at University of Oxford and was supported by Santander Graduate Scholarship and the Oxford-Man Institute of Quantitative Finance.}
\and Jan Ob\l\'oj\thanks{Mathematical Institute, the Oxford-Man Institute of Quantitative Finance and St John's College, University of Oxford, Oxford, UK. Email: \texttt{Jan.Obloj@maths.ox.ac.uk}. The author gratefully acknowledges support from ERC Starting Grant {\sc RobustFinMath} 335421.}
\and Thaleia Zariphopoulou\thanks{Depts.\ of Mathematics and IROM, The University of Texas at Austin, Austin, USA, 78712; Email: \texttt{zariphop@math.utexas.edu}.  The author would like to thank the Oxford-Man Institute of Quantitative Finance, Oxford, for its hospitality and support. She also acknowledges support from NSF RTG-DMS Grant.}
}
\date{First version: September 2013; This version: \today}
\begin{document}

\maketitle

\label{chap3}

\begin{abstract}
We combine forward investment performance processes and ambiguity averse portfolio selection. We introduce the notion of robust forward criteria which addresses the issues of ambiguity in model specification as well as in preferences and investment horizon specification. It describes the evolution of dynamically--consistent ambiguity averse preferences. 

We first focus on establishing dual characterizations of the robust forward criteria. This is advantageous as the dual problem amounts to a search for an infimum whereas the primal problem features a saddle-point. Our approach is based on ideas developed in Schied \cite{schied} and \v{Z}itkovi\'c \cite{gordan}. We then study in detail non-volatile criteria. In particular, we solve explicitly the example of an investor who starts with a logarithmic utility and applies a quadratic penalty function. The investor  builds a dynamic estimate of the market price of risk $\hat \lambda$ and updates her stochastic utility in accordance with the so-perceived elapsed market opportunities. We show that this leads to a time-consistent optimal investment policy given by a fractional Kelly strategy associated with $\hat \lambda$. The leverage is proportional to the investor's confidence in her estimate $\hat \lambda$.

\end{abstract}

\section{Introduction}

This paper is a contribution to optimal investment as a problem of normative decisions under uncertainty. This topic is central to financial economics and mathematical finance, and the relevant body of research is large and diverse. Within it, the expected utility maximisation (EUM), with its axiomatic foundation going back to von Neumann and Morgenstern \cite{MR0021298} and Savage \cite{Savage:1954wo}, is probably the most widely used and extensively studied framework. 
In continuous time optimal portfolio selection it was first explored in Merton \cite{merton}. The resulting problem considers maximisation of expected utility of terminal wealth:
$$ \max_{\pi} \mathrm{E}_{\mathbb{P}}\left[U(X^\pi_T)\right],$$
where $\mathbb{P}$ is the so-called historical probability measure, $T$ the trading horizon, and $U(\cdot)$ the investor's utility at $T$. 

\textbf{Drawbacks of the classical EUM framework}. Despite the popularity of the above model, there has been a considerable amount of criticism of the model fundamentals $(\mathbb{P},T, U)$, for these inputs might be ambiguous, inflexible and difficult to specify. 
Firstly, an investor faces a significant ambiguity as to which market model to use, specifically, how to determine the probability measure $\mathbb{P}$. This is often referred to as the \emph{Knightian uncertainty}, in reference to the original contribution of Knight \cite{Knight:21}. Introduction of \emph{ambiguity aversion} axiom, motivated by Ellsberg \cite{ellsberg1961risk} paradox, led to generalised \emph{robust} EUM paradigm in Gilboa and Schmeidler \cite{gilboa89}. It built on earlier contributions, including Anscombe and Aumann \cite{anscombe1963definition} and Schmeidler \cite{schmeidler1989subjective}, and has since been followed and extended by a large number of works; we refer the reader to Maccheroni et al.\ \cite{maccheroni06}, Schied \cite{schied} and to F\"ollmer, Schied and Weber \cite{tre} and the references therein for an overview.

Secondly, the investment horizon $T$ might not be fixed and/or a priori known. Such situations arise, for example, in investment problems with rolling horizons or problems in which the horizon needs to be modified
due to inflow of new funds, new market opportunities, or new investment options and obligations. One of the issues related to flexible trading horizons is under which model conditions and preference structure one
could extend the standard investment problem beyond a pre-specified horizon in a time-consistent
manner. This question was recently examined by K\"allblad \cite{kallblad}. The flexibility of investment horizon is also directly related to utilities that are not biased by the horizon choice. The concept of \emph{horizon-unbiased utilities} was introduced by Henderson and Hobson \cite{henderson07}; see also Choulli et al. \cite{choulli2007minimal}.

Thirdly, there are various issues with regards to the elucidation, specification and choice of the utility function. Covering all existing works is beyond the scope herein and we only refer to representative lines of research. Firstly, the concept of utility per se might be quite elusive and one should look for different, more pragmatic criteria to use in order to quantify the risk preferences of the investor. We refer the reader to an old note of F. Black \cite{black68} where the criterion is the choice of the optimal portfolio, see also He and Huang \cite{HeHuang:94} and Cox, Hobson and Ob\l\'oj \cite{CoxHobsonObloj:12}, and to Sharpe \cite{sharpe12} and Monin \cite{monin2012dynamic} where the criterion is a targeted wealth distribution. Another line of research accepts the utility as an appropriate device to rank outcomes but challenges the classical EUM, for empirical evidence shows that investors feel differently with respect to gains and losses. Among others, see, Hershey and Schoemaker \cite{Hershey:1985bn} and Kahneman and Tversky \cite{Kahneman:1979wl}. This prompted further ramifications and led to the development of the area of behavioural finance (see, among others, Barberis \cite{barberis2003survey} and Jin and Zhou \cite{Jin:2008ek}). A third line generalises the concept of utility and moves away from a terminal-horizon deterministic utility, as $U(\cdot)$ above, by allowing state- and path-dependence. One of the best known paradigm are the \emph{recursive utilities}, see, among others, \cite{duffie1992stochastic,el1997backward,skiadas03}.
They alleviate several drawbacks of their standard counterparts and have been widely used. State-dependent utilities have been also considered before in static frameworks (see, for example, \cite{dreze1961fondements,karni1985decision,karni1983state}). A new family of state-dependent utilities are the so-called \emph{forward investment performance processes}, recently introduced by Musiela and Zariphopoulou \cite{zari10,zarispde}. Their key property is that they are created forward in time. They are stochastic processes $U(\cdot, T)$ which are defined for all time horizons and thus alleviate the horizon inflexibility. More importantly, they are flexible enough to incorporate changing market opportunities, investorÕs views, benchmarked performance, stochastically involved risk appetite and risk aversion volatility, and specification of present utility rather than utility in the (possibly remote) future. We refer the reader to Musiela and Zariphopoulou \cite{zari10,zarispde} for an overview of the topic.

\textbf{Our motivation and objective}. Our work herein was motivated by the above considerations of the triplet of model inputs $(\mathbb{P},T, U)$. We propose a framework that addresses simultaneously the above drawbacks and combines elements of the robust EUM and the forward performance approaches presented above.

Specifically, we consider an investor who invests in a stochastic market in which she does not know the ``true'' model, nor even if such a true model exists. Instead, she describes the market reality through relative weighting of stochastic models with some models being more  likely than the others, some being excluded all together, etc. These views are expressed by a penalty function and are updated dynamically with time. 
The investor's personal evaluation of wealth is expressed through her utility function. We adopt the axiomatic approach to normative decisions which implies that, when considering a given investment horizon, say $T$, the investor aims to maximise the robust expected utility functional, as in Maccheroni et al.\ \cite{maccheroni06} and Schied \cite{schied}. We then generalise this criterion by considering a stochastic extension, which is defined for all investment horizons. 

We combine the classical approaches to Knightian uncertainty and robust utility maximisation with forward investment performance criteria. These criteria evolve forward in time in contradistinction with the existing ones, which are pre-specified up to a certain horizon and are generated backwards using the Dynamic Programming Principle. Such dynamic-consistency, otherwise known as the self-generation property, see \v{Z}itkovi\'c \cite{gordan} and Zariphopoulou and \v{Z}itkovi\'c \cite{zariphopoulou10}), is a natural property linked to optimality both in robust and model-specific EUM. In contrast, it needs to be imposed in the forward investment framework. It is in fact the fundamental element in their very definition. 
 
\textbf{Main contribution}. 
We investigate pairs of utility fields and penalty functions which are dynamically consistent. Such pairs encode stochastic preferences evolving forward in time and taking account of model ambiguity. Accordingly, we call them \emph{robust forward criteria}. 
Our contribution is twofold. First, our theoretical focus is on defining and further characterising the new investment criteria. We consider their duals and establish an appropriate duality result by 
combining ideas developed in Schied \cite{schied} and \v{Z}itkovi\'c \cite{gordan}. As it is the case in existing works, the study of the dual (robust forward) problem offers various advantages. In particular, in the case of robust preferences the dual problem amounts to the search for an infimum whereas the primal problem features a saddle-point. We use the dual formulation to study the question of time-consistency of the optimal strategies. We show that in general, both in our framework as well as in the classical robust EUM, the optimal strategies may fail to be time-consistent. This is caused by possibly arbitrary dynamics of the penalty functions. We show that  time-consistency of the optimal strategies is guaranteed under suitable assumptions of dynamic-consistency of the penalty functions.

Second, we construct a specific example and solve it explicitly. Namely, we consider an investor who starts with a logarithmic utility and applies a quadratic penalty function. Naturally, our solution shows that family of robust forward criteria is non-empty. More importantly, this example offers a theoretical justification and explanation to strategies often followed by large investors in practice. Specifically, the investor aims at building a dynamic estimate of the market price of risk, say $\hat \lambda$, and updates her stochastic utility in accordance with the so-perceived elapsed market opportunities. We show that this leads to a time-consistent optimal investment policy given by a (time-consistent) fractional Kelly strategy associated with $\hat \lambda$. The leverage is a function of investor's confidence in the estimate $\hat \lambda$.

\textbf{Structure of the paper}. The paper is organised as follows. In Section \ref{model3}, the market model is specified and the notion of robust forward criteria is introduced. In Section \ref{secduality}, equivalent dual characterizations of robust forward criteria are established. We also discuss natural examples of penalty functions, including ones associated with risk measures, and link between dynamic-consistency of penalty functions and time-consistency of optimal investment strategies. Then, in Section \ref{secsmoothlog}, working within a Brownian filtration, we study specific classes and examples of robust forward criteria. Our main example is developed in Section \ref{sec:log_ex} where we show how non-volatile logarithmic preferences lead to fractional Kelly strategies. We then discuss a simple example of criteria leading to time inconsistent optimal investment strategies. The remainder of the section is devoted to a, mostly formal, discussion of various classes of criteria. Our aim is to illustrate the flexibility of the notion and the fact that interesting preferences might be identified under additional evolutionary requirements. In particular, non-volatile criteria are linked to a specific PDE which, formally, is discussed in further detail. Finally, we argue that for each robust forward criterion, there exists a specific (standard) forward criterion in the reference market, giving rise to the same optimal behaviour. Most of the proofs are deferred to Section \ref{sec:proofs}.

\section{The market model and the robust forward criterion}\label{model3}

	\subsection{The market model and notation}\label{secmodel}

	The market consists of $d+1$ securities whose prices $(S^0;S)=(S^0_t,S^1_t,...,S^d_t)_{t\in[0,\infty)}$ are modeled as a $(d+1)$-dimensional c\`adl\`ag semi-martingale on a filtered probability space $(\Omega,\f,\mathbb{F},\mep)$, where the filtration $\mathbb{F}=(\f_t)_{t\in[0,\infty)}$ satisfies the usual conditions. We let $S^0\equiv 1$ and assume $S$ to be locally bounded. An $\mathbb{F}$-predictable process $\pi=(\pi_t)_{t\in[0,\infty)}$ is said to be an admissible portfolio if $\pi$ is $S$-integrable on $[0,T]$ for each $T>0$. The associated wealth-process $X^\pi$ is given by
	\bq
		X^\pi_t=\int_0^t\pi_udS_u, \qquad t\ge 0.
	\eq
The set of admissible portfolio processes available to the investor is denoted by $\A$. It is further specified below for the different cases we consider.

	For each $T>0$, $\mathcal{M}^e_T$ denotes the set of equivalent local martingale measures. That is to say, the set of measures $\meq$ on $\f_T$ such that $\meq\sim\mep|_{\f_T}$ and each component of $S$ is a $\meq$--local martingale. Similarly, $\ma_T$ denotes the set of absolutely continuous local martingale measures. The corresponding sets of density processes are denoted respectively by $\z^e_T$ and $\z^a_T$,  
		\bq
			\z^e_T=\left\{Z=\dqdpt: \meq\in\mathcal{M}^e_T\right\}
		\eq
	and similarly for $\z^a_T$. Following \cite{gordan}, we assume that the set $\mathcal{M}^e_T$ is non-empty for each $T>0$. This assumption is referred to as the absence of arbitrage (FLVR) on finite horizons; see Section 2 in \cite{gordan} for further discussion. 
	Note that while
		\bq
			\mathcal{M}^e_{T_1}=\{\meq|_{\f_{T_1}}:\meq\in\mathcal{M}^e_{T_2}\},\quad \textrm{for all $0\le T_1\le T_2$},
		\eq
	there might \emph{not} exist a set $\mathcal{M}^e$ of probability measures equivalent to $\mep$ such that $\mathcal{M}^e_{T}=\{\meq|_{\f_{T}}:\meq\in\mathcal{M}^e\}$, for all $T>0$.
	
	As argued in \cite{gordan}, the condition of NFLVR on finite horizons implies that, for each $\meq\in\mathcal{M}^e_T$, the density process $Z^\meq_t=\e\big[\dqdpt|\f_t\big]$, $t\in[0,T]$, might be extended to a strictly positive martingale $(Z_t)_{t\in[0,\infty)}$ such that $Z_0=1$ and $ZS$ is a local martingale. The set of all such processes $Z$ will be denoted by $\z^e$. In particular, NFLVR on finite horizons holds if and only if $\z^e$ is non-empty. Furthermore, if the condition of strict positivity is replaced by the one of non-negativity, the obtained family is denoted by $\z^a$. For any $\meq\ll\mep$, we use the notation $\zet^\meq:=Z_T^\meq/Z_t^\meq$, with the convention that $\zet^\meq\equiv 1$ on $\{Z_t^\meq=0\}$.

	\subsection{Robust forward performance criteria}\label{secpenalty}

	We introduce the new concept of robust forward criteria. These performance criteria combine two elements: a utility random field $U(\omega,x,t)$, $t\ge 0$, and a family of penalty functions $\gamma_{t,T}(\meq)$, for $0\le t\le T$ and $T\geq 0$. 
	$U(\omega, \cdot ,t)$ models the utility of investor at time $t$ and may depend on the past $(\omega_s)_{s\leq t}$. 
 The investor faces ambiguity about the ``true model" for dynamics of financial assets and forms a view about the relative plausibility of different probability measures. This is reflected in $\gamma_{t,T}(\meq)(\omega)$ which gives the weighting of measure $\meq$ on $\f_T$. Both $U$ and $\gamma$ are combined in making investment decisions. We define both separately before turning to the crucial consistency condition which defines a robust forward criteria couple $(U,\gamma)$.\\
		
\begin{defa}\label{field}
	For a fixed $a\in\{0,\infty\}$, a random field is a mapping $U:\Omega\times(-a,\infty)\times[0,\infty)\to\R$, which is measurable with respect to the product of the optional $\sigma$-algebra on $\Omega\times[0,\infty)$ and $\mathcal{B}\big((-a,\infty)\big)$. A utility random field is a random field which satisfies the following conditions:
		\begin{itemize}
			\item[i)]{For all $t\in[0,\infty)$, the mapping $x\to U(\omega,x,t)$ is $\mep$-a.s.\ a strictly concave and strictly increasing $C^1(\R)$-function which satisfies the Inada conditions
				\bq
					\lim_{x\to-a}\frac{\partial}{\partial x}U(\omega,x,t)=\infty, \qquad
					\lim_{x\to\infty}\frac{\partial}{\partial x}U(\omega,x,t)=0,
				\eq}
			\item[ii)]{For all $x\in(-a,\infty)$, the mapping $t\to U(\omega,x,t)$ is c\`adl\`ag on $[0,\infty)$,}
			\item[iii)]{For each $x\in(-a,\infty)$ and $T\in[0,\infty)$, $U(\cdot,x,T)\in L^1(\f_T)$.\\}
		\end{itemize}
\end{defa}

For a given utility random field, a set of strategies $\A$ is said to be feasible, if for all $\pi\in\A$ and $t>0$, $X_t^\pi\in(-a,\infty)$, a.s.	
	In what follows, we suppress $\omega$ from the notation and simply write $U(x,t)$.
	\\

	\begin{defa}\label{penaltydef}
	For given $t\le T<\infty$, a mapping $\gamma_{t,T}:\Omega\times\{\meq\sim\mep|_{\f_T}\}\to\R_+\cup\{\infty\}$, is called a penalty function if
		\begin{itemize}
			\item[i)]{$\gamma_{t,T}$ is $\f_t$-measurable,}
			\item[ii)]{$\meq\to\gamma_{t,T}(\meq)$ is convex a.s,}
			\item[iii)]{for $\kappa\in L^\infty_+(\f_t)$, $\meq\to\e[\kappa\gamma_{t,T}(\meq)]$ is weakly lower semicontinuous on $\{\meq\sim\mep|_{\f_T}\}$.}
		\end{itemize}
	Moreover, for a given utility random field $U(x,t)$ and feasible set of strategies $\A$, we say that  $(\gamma_{t,T})$, $0\le t\le T<\infty$, is an admissible family of penalty functions if for all $T>0$ and $\pi\in\A$, $\e^\meq[U(X_T^\pi,T)]$ is well defined in $\R\cup\{\infty\}$ for all $\meq\in\q_{t,T}$, $t\le T$, where $\q_{t,T}$ is the set of measures on $\f_T$ given by
		\bqn
			\q_{t,T}:=\left\{\meq\sim\mep|_{\f_T}\textrm{ and $\gamma_{t,T}(\meq)<\infty$ a.s.}\right\}.\label{qtt}
		\eqn
	\end{defa}

	In the above definition, $\q_{t,T}$ is the set of feasible measures considered at time $t$ when investing over $[t,T]$. It may depend on $t$ and $T$ but is non-random. Both larger and smaller sets could be used, e.g.\ the (random) set of measures $\meq$ with $\gamma_{t,T}(\meq)(\omega)<\infty$ or the set of measures $\meq$ with $\e\big[\gamma_{t,T}(\meq)\big]<\infty$. However, for many natural penalty functions, these different choices lead to the same value function, see Section \ref{rmpenalty} below.  Finally, note that we do not impose any regularity or consistency assumptions on $\gamma_{t,T}(\meq)$ in the time variables. These are not necessary for the abstract results in Section \ref{secduality} and will be introduced later when they appear naturally, see Assumption \ref{gammatc}.
	
	We are now ready to introduce the robust forward criteria. As highlighted above, these are couples $(U,\gamma)$ which exhibit a dynamic-consistency akin to dynamic programming principle.\\	

\begin{defa}\label{forward}
	Let $U$ be a utility random field, $\A$ a feasible set of strategies and $\gamma$ an admissible family of penalty functions. Then, the value field associated with $U$, $\A$ and $\gamma$ is a family of mappings \phantom{.} $\{u(\cdot;t,T):0\le t\le T<\infty\}$, with $u(\cdot;t,T):L^\infty(\f_t)\to L^0(\f_t;\R\cup\{\infty\})$ given by
	\bqn
		u(\xi;t,T):=
		\esssup_{\pi\in\A}\essinf_{\meq\in\q_{t,T}}
		\bigg\{\e^\meq\bigg[U\bigg(\xi+\int_t^T\pi_sdS_s,T\bigg)\bigg|\f_t\bigg]
		+\gamma_{t,T}(\meq)\bigg\}, \quad \textrm{for}\; \xi\in L^\infty(\f_t).\label{u3}
	\eqn
	For a given set of feasible strategies, we say that the combination of a utility random field and a family of penalty functions is a \emph{robust forward  criterion} if
	\bqn
		U(\xi,t)=u(\xi;t,T)\quad\textrm{a.s.,}
		\label{eq:dynamic_consistency}
	\eqn
	for all $0\le t\le T<\infty$ and all $\xi\in L^\infty(\f_t)$.
\end{defa}

We note that the above definition is well posed. Indeed, given the assumptions on $U$ and $\gamma$, the conditional expectations in \eqref{u3} are well-defined (extended valued) random variables (see e.g. Prop 18.1.5 in \cite{simonnet96} or p. 113 in \cite{heinz96} for the definition of conditional expectations of quasi-integrable random variables). As all $\meq\in\q_{t,T}$ are equivalent to $\mep$, it also holds for each $\pi\in\A$ that the essential infimum is well-defined (extended-valued) with respect to the reference measure $\mep$.

Optimisation in \eqref{u3} fits within the robust EUM paradigm. Its use to make investment decisions was considered, for a fixed horizon $t\in [0,T]$, in \cite{schied}. It is based on an axiomatic characterisation of risk and ambiguity averse preferences and their numerical representation as concave monetary utility functionals in Maccheroni et al.\ \cite{maccheroni06}, and the robust representation thereof derived in F\"ollmer and Schied \cite{follmer02}. Equation \eqref{eq:dynamic_consistency} provides a direct extension of the notion of self-generating utility fields studied in \cite{gordan} and, consequently, of the notion of forward performance criteria as discussed in the Introduction, see also Section \ref{secsmoothlog}. Accordingly, we sometimes refer to a robust forward criteria $(U,\gamma)$ as being \emph{self-generating} or \emph{dynamically-consistent}. To relate \eqref{eq:dynamic_consistency} to the more classical dynamic programming principle, note that when applied together with the definition \eqref{u3} it yields
	\begin{equation}\label{eq:classicalDC}
\begin{split}
u(\xi,t,T)&=U(\xi,t)=u(\xi,t,r) = \esssup_{\pi\in\A}\essinf_{\meq\in\q_{t,r}}
		\bigg\{\e^\meq\bigg[U\bigg(\xi+\int_t^r\pi_sdS_s,r\bigg)\bigg|\f_t\bigg]
		+\gamma_{t,r}(\meq)\bigg\}\\
		& = \esssup_{\pi\in\A}\essinf_{\meq\in\q_{t,r}}
		\bigg\{\e^\meq\bigg[u\bigg(\xi+\int_t^r\pi_sdS_s,r,T\bigg)\bigg|\f_t\bigg]
		+\gamma_{t,r}(\meq)\bigg\},\quad 0\leq t\leq r\leq T,
\end{split}
\end{equation}
for $\xi\in L^\infty(\f_t)$.

To the best of our knowledge \eqref{u3} corresponds to the most general robust EUM setting which has been previously considered for optimal investment decisions. However we note that this setup has its limitations. For example, the penalty associated to a given measure, $\gamma_{t,T}(\meq)$, is fixed and independent of wealth. This has important implications for time-consistency of optimal investment strategies. We show below in Proposition \ref{prop:time_consistency} that when $(\gamma_{t,T})$ are dynamically consistent, and if we have saddle points $(\pi^{t,T},\meq^{t,T})$ solving \eqref{u3}, then 
$\meq^{t,r}=\meq^{t,T}|_{\f_r}$, $t\leq r\leq T$, and also the optimal investment strategies are time-consistent. However, in all generality we could have (dynamically consistent) robust forward criteria which lead to time inconsistent optimal strategies. An example is given in Section \ref{sec:ex:non_dc}.
Independence of $\gamma_{t,T}(\meq)$ from investor's wealth is also contrary to the empirical evidence, as discussed in  behavioural finance, see e.g.\ Kahneman and Tversky \cite{Kahneman:1979wl}, which points to the importance of investor's reference point for judging scenarios. In consequence, we believe it might be interesting to study generalisations of the problem in \eqref{u3}. Within the framework of robust EUM, these are possible using quasi-concave utility functionals introduced in Cerreia-Vioglio et al.\ \cite{cerreia}. Their use for (classical) optimal investment problem is being investigated in a parallel paper, see K\"allblad \cite{kallblad2013risk}.

	The set of admissible strategies $\A$ is specified below for the respective cases we consider. Note that the definition of robust forward criteria does not require existence of optimal investment strategies. In that aspect we follow the approach in \cite{gordan} rather than the original definition (cf. \cite{zari10,zarispde}) which required the optimum to be attained. As argued below (cf. Section \ref{secduality}), this flexibility is of particular use for the study of robust forward criteria defined on the entire real line\footnote{For further remarks on the flexibility obtained with this approach, we refer to Remark 3.8 in \cite{gordan}.}. In Section \ref{secsmoothlog} we consider a robust forward criterion of logarithmic type for which the existence of an optimizer is established.

\section{Dual characterization of robust forward criteria}\label{secduality}

Dual methods are well known to be useful for the study of optimal investment problems. For the standard utility maximization problem, they are particularly useful for proving existence of and characterizing the optimal strategy for the primal problem. As we will see below, in our setup there are also clear benefits in passing to the dual domain, even though our focus is on the evolution of the preferences themselves rather than on the optimal strategy. Here the dual problem amounts to a search for an infimum whereas the primal problem features a saddle-point. In consequence, the robust forward criteria are easier to characterize in the dual rather than the primal domain. The aim of this section is to establish such equivalent characterizations. We adopt a convenient set of assumptions with possible extensions discussed in Remarks \ref{dissapoint1} and \ref{dissapoint2} below.

	\subsection{Self-generation in the dual domain} 
	
	We develop the duality theory for utility random fields which are finite on the entire real line. To this end, we set for Definitions \ref{field} and \ref{forward}
		\bq 
		\textrm{$a=\infty$\qquad and \qquad $\A=\A_{bd}$,}
		\eq 
	where $\mathcal{A}_{bd}$ denotes the set of all portfolios producing bounded wealth-processes. Specifically, $\mathcal{A}_{bd}=\mathcal{\bar A}\cap(\mathcal{-\bar A})$, where $\mathcal{\bar A}$ is the set of all admissible portfolio processes for which, for any $T>0$, there exists a constant $c>0$ such that $X^\pi_t\ge -c$, $0\le t\le T$, a.s. The restriction to bounded wealth processes implies that, for many utility fields, the supremum will not be attained. However, this is not really restrictive\footnote{Indeed, the utility field defined on the entire real line does not possess any singularities (cf.\ Assumption \ref{nonsing2} below). The value field defined with respect to a more general (but feasible) set of admissible strategies would therefore coincide with the one defined with respect to bounded strategies. Definition \ref{forward} would still apply, since the notion of robust forward criteria is a consistency requirement placed on the preferences themselves, without a reference to an optimal strategy. In consequence, for utility fields defined on the entire real line, robust forward criteria may be studied and characterized without exactly specifying the domain of optimization.}.
	
	The reason for developing the duality theory for utility random fields finite on the entire real line, is twofold. First, we complement the work of Schied \cite{schied}, since our results are related and therein only utilities defined on the positive half-line are considered. 
	Second, considering utilities finite on the entire real line simplifies certain aspects of the duality theory. This fact is also exploited in, among others, \cite{gundel06}. What usually becomes more complex when allowing for negative wealth, is the definition of an appropriate set of admissible strategies yielding the existence of an optimizer\footnote{Within the present framework where the preferences are not only finite on the entire real line but in addition to that stochastic, the exact specification of a feasible set of admissible, but not necessarily bounded, strategies is highly non-trivial.} (cf. \cite{owari12,schachermayer01}). However, as argued above, for the present purposes it suffices to restrict to the set of bounded wealth processes $\A_{bd}$. In consequence, we may fully benefit from the simplifications of this setup without any further complexity being imposed. While analogous results could be pursued for utilities defined on the half-line, it would imply additional technicalities and we leave it for future research (cf. Remark 3.2 in \cite{gordan}).

	For a given utility random field $U$, the associated dual random field $V:\Omega\times [0,\infty)\times(0,\infty)\to\R$, is given by
	\bqn
		V(y,t)=\sup_{x\in\R}\big(U(x,t)-xy\big)\qquad \textrm{for $t\ge 0$, $y\ge 0$}.\label{vd}
	\eqn
	The dual value field and the notion of self-generation in the dual domain are then naturally defined as follows.\\

\begin{defa}
	For $y>0$ and $0\le t<T<\infty$ the dual value field $v(\cdot;t,T):L^0_+(\f_t)\to L^0(\f_t;\R\cup\{\infty\})$, is given by
	\bqn
		v(\eta;t,T):=
		\essinf_{\meq\in\q_{t,T}}
		\essinf_{Z\in\z^a_T}
		\Big\{\e^\meq\Big[V\left(\eta\zet/\zet^\meq,T\right)\Big|\f_t\Big]
		+\gamma_{t,T}(\meq)\Big\}.\label{lv}
	\eqn
	The combination of a dual random field $V$ and a family of penalty functions $\gamma$ is said to be \emph{self-generating} or \emph{dynamically consistent} if
	\bq
		V(\eta,t)=v(\eta;t,T),\;\;\textrm{a.s.,}
	\eq
	for all $0\le t\le T<\infty$ and all $\eta\in L^0_+(\f_t)$.
	\end{defa}

	\subsection{Equivalence between primal and dual robust self-generation}\label{sec:equivalence}

	We first introduce the following technical assumption:\\
	
	\begin{ass}\label{nonsing2}For each $T>0$ and $0\le t\le T$, the set $\q_{t,T}$ is convex and weakly compact and the set $\{ZU^-(x,T):Z\in\q_{t,T}\}$ is UI, for all $x\in\R$. Furthermore, if $\kappa\in L^\infty_+(\f_t)$ and $\meq\in\q_{t,T}$ are such that $\kappa\zet^\meq U(x,T)\in L^1$, for all $x\in\R$, then
		\bqn
			\tilde U(x,T):=\ind_{\{\kappa=0\}}U(x,T)+\ind_{\{\kappa>0\}}\zet^\meq U(x,T),\quad x\in\R,\label{auxutility}
		\eqn
	satisfies the non-singularity Assumption 3.3 in \cite{gordan}.
	\end{ass}
	
	The above implies that $U(x,t)$ itself satisfies the non-singularity assumption. For further discussion of this concept, we refer to Remark 3.4 in \cite{gordan}. Given that the set $\q_{t,T}$ is weakly compact, a sufficient condition for Assumption \ref{nonsing2} to hold, is that $U(x,t)$ is $(x,\omega)$-uniformly bounded from below by a deterministic utility function. Then, it also trivially holds that any family of penalty functions is admissible. Note also that, due to convexity, the weak compactness of $\q_{t,T}$ is equivalent to closedness in $L^0$ (cf. Lemma 3.2 in \cite{wu}).

	Next, we present the first main result, which yields the conjugacy relations between the functions $u(x;t,T)$ and $v(y;t,T)$. We stress that even for $t=0$, Theorem \ref{main} differs from Theorem 2.4 in \cite{schied} in that the utility function is defined on the entire real line and is also allowed to be stochastic. Moreover, we do not impose any finiteness assumptions on the involved value fields.
\\

	\begin{thm}\label{main}	
	Let $U(x,t)$, $t\ge 0$, be a utility random field, $\gamma_{t,T}$ an admissible family of penalty functions and $V(y,t)$ the associated dual random field. Assume that Assumption \ref{nonsing2} holds. 
	
	Then, for all $\xi\in L^\infty(\f_t)$, $\eta\in L^0_+(\f_t)$ and $0\le t\le T<\infty$, the following assertions hold,
		\bqn
			u(\xi;t,T)=\essinf_{\eta\in L^0_+(\f_t)}\big(v(\eta;t,T)+\xi\eta\big)\quad\textrm{a.s.} \label{a5}
		\eqn 
		and
		\bqn
			v(\eta;t,T)=\esssup_{\xi\in L^\infty(\f_t)}\big(u(\xi;t,T)-\xi\eta\big)
			\quad\textrm{a.s.}\label{a6}
		\eqn
	In consequence, the combination of a utility random field $U(x,t)$ and a family of penalty functions $\gamma_{t,T}$ is self-generating, if and only if, the combination of the dual random field $V(y,t)$ and $\gamma_{t,T}$ is self-generating.
	\end{thm}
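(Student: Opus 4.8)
The plan is to establish the pair of conjugacy relations \eqref{a5}--\eqref{a6} directly, and then read off the self-generation equivalence as an immediate corollary. The key observation is that for fixed $t$, fixed $T$, and fixed $\xi\in L^\infty(\f_t)$, $\eta\in L^0_+(\f_t)$, the map $u(\cdot;t,T)$ is a (conditional) robust utility functional of exactly the type treated in Schied \cite{schied} and \v{Z}itkovi\'c \cite{gordan}, so that the bidual of $u$ equals $u$ and the dual of $u$ equals $v$, in the conditional sense and pointwise in $\omega$. The main technical step is therefore to reduce \eqref{u3} and \eqref{lv} to the unconditional setting on $\f_T$ where the known duality applies. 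Concretely, I would first argue that it suffices to prove \eqref{a5}; then \eqref{a6} follows by taking essential supremum over $\xi\in L^\infty(\f_t)$ on both sides of \eqref{a5}, using concavity of $u(\cdot;t,T)$ in $\xi$ (inherited from concavity of $U$) together with a bipolar-type argument for extended-valued random functions, so that the biconjugate recovers $v$.

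For \eqref{a5}, the strategy is to exchange the order of the essential infimum over $\eta$ with the essinf/esssup structure in $u$. Writing the Fenchel inequality $U(x,T)\le V(\eta\zet/\zet^\meq,T)+x\,\eta\zet/\zet^\meq$ and taking $\meq$-conditional expectations, for any admissible $\pi$ with $\xi+\int_t^T\pi\,dS=:X_T^\pi$ bounded, and any $Z\in\z^a_T$, $\meq\in\q_{t,T}$, one gets
\bq
\e^\meq\!\big[U(X_T^\pi,T)\,\big|\,\f_t\big]\;\le\;\e^\meq\!\big[V(\eta\zet/\zet^\meq,T)\,\big|\,\f_t\big]+\eta\,\e^\meq\!\big[X_T^\pi\zet/\zet^\meq\,\big|\,\f_t\big].
\eq
Since $ZS$ is a local martingale and $X^\pi$ is bounded (so the stochastic integral against $ZS$ is a genuine $\mep$-martingale on $[t,T]$), the supermartingale/martingale property gives $\e^\meq[X_T^\pi\zet/\zet^\meq\mid\f_t]=\e[X_T^\pi Z_T/Z_t\mid\f_t]=\xi$ when $Z\in\z^e_T$, and $\le\xi$ in the non-negative case; adding $\gamma_{t,T}(\meq)$ to both sides, taking esssup over $\pi$ and essinf over $(\meq,Z)$ on the left and over $\eta$ on the right yields ``$\le$'' in \eqref{a5} after also taking essinf over $\eta$. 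The reverse inequality ``$\ge$'' is the substantive direction and is where the non-singularity Assumption \ref{nonsing2} enters: one fixes $\eta$, localizes on $\f_t$-measurable sets to reduce to the case where $\kappa\equiv 1$, passes to the auxiliary utility $\tilde U$ of \eqref{auxutility} so that Assumption 3.3 of \cite{gordan} applies, and then invokes the unconditional minimax/duality result there (combined with the robust minimax of \cite{schied} to handle the infimum over $\meq\in\q_{t,T}$, which is legitimate because $\q_{t,T}$ is convex and weakly compact and the relevant integrands are weakly lower semicontinuous and suitably uniformly integrable) to produce a near-optimal $\pi$ attaining the dual bound up to $\varepsilon$. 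Patching these $\varepsilon$-optimizers across a countable exhausting family of $\f_t$-sets via an exhaustion/measurable-selection argument delivers ``$\ge$'' a.s.

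The self-generation equivalence is then purely formal: if $(U,\gamma)$ is self-generating, i.e.\ $U(\xi,t)=u(\xi;t,T)$ for all $\xi\in L^\infty(\f_t)$, then for $\eta\in L^0_+(\f_t)$,
\bq
v(\eta;t,T)=\esssup_{\xi\in L^\infty(\f_t)}\big(u(\xi;t,T)-\xi\eta\big)=\esssup_{\xi\in L^\infty(\f_t)}\big(U(\xi,t)-\xi\eta\big)=V(\eta,t),
\eq
where the last equality is the (conditional) Fenchel--Moreau identity for the concave random field $U(\cdot,t)$ — note that the esssup over bounded $\xi$ suffices to recover $V$ because $V$ is finite on $(0,\infty)$ and $U$ is finite on all of $\R$, so no singularity obstructs the approximation. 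The converse direction is symmetric, using \eqref{a5}: $U(\xi,t)=u(\xi;t,T)=\essinf_{\eta}(v(\eta;t,T)+\xi\eta)=\essinf_{\eta}(V(\eta,t)+\xi\eta)=U(\xi,t)$, the last step being the biconjugate identity for $U(\cdot,t)$.

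\textbf{Main obstacle.} The crux is the ``$\ge$'' direction of \eqref{a5}: transferring the unconditional duality of \cite{schied,gordan} into the conditional, stochastic-field setting, i.e.\ producing $\f_t$-measurably-parametrized $\varepsilon$-optimal strategies and gluing them. This is exactly why Assumption \ref{nonsing2} is phrased via the auxiliary utility $\tilde U$ and why weak compactness of $\q_{t,T}$ is imposed — these are the hypotheses that let one apply the cited minimax theorems slice-by-slice and then measurably aggregate.
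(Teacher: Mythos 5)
Your easy direction of \eqref{a5} (Fenchel inequality plus the supermartingale property of $X^\pi Z$) and your reading of the self-generation equivalence as a formal corollary both match the paper. The substantive direction, however, has a real gap. You propose to localize on $\f_t$-sets, apply the unconditional duality of \cite{schied,gordan} slice-by-slice, and then glue $\varepsilon$-optimal strategies ``via an exhaustion/measurable-selection argument'' --- but that selection/patching step is precisely the hard part and you give no mechanism for it. The paper avoids it entirely: it argues by contradiction, assuming the conditional identity fails by $\varepsilon$ on a set $A\in\f_t$ of positive measure, multiplies by $\kappa=\ind_A$, takes expectations, and contradicts the scalar conjugacy of $u_\kappa$ and $v_\kappa$ (Proposition \ref{a1}). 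The ingredient that makes this reduction work, and which is absent from your proposal, is Lemma \ref{reduce}: the identity $\e[\essinf_{\meq\in\q_{t,T}}\kappa(\e[\zet^\meq U(\xi+g,T)|\f_t]+\gamma_{t,T}(\meq))]=\inf_{\meq\in\q_{t,T}}\e[\kappa(\zet^\meq U(\xi+g,T)+\gamma_{t,T}(\meq))]$. This interchange of expectation and essential infimum is not automatic; it requires showing the family $\{J(\meq)\}$ is directed downwards, which rests on stability of $\q_{t,T}$ under $\f_t$-measurable pasting (Lemma 3.3 in \cite{irina}) and Neveu's monotone-sequence characterization of essential infima. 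Without this (or a genuinely worked-out measurable selection), the passage from the scalar duality to the conditional one is unproven.

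A second gap: you derive \eqref{a6} from \eqref{a5} by ``taking esssup over $\xi$'' and invoking a bipolar theorem for extended-valued random functions. The inequality $v(\eta;t,T)\ge\esssup_\xi(u(\xi;t,T)-\xi\eta)$ does follow immediately from \eqref{a5}, but the reverse inequality requires $v(\cdot;t,T)$ to be conditionally convex and lower semicontinuous in $\eta\in L^0_+(\f_t)$, which you do not establish and which is delicate in the conditional setting. The paper only proves convexity and weak lower semicontinuity of the scalarized $v_\kappa$ (Proposition \ref{additionallemma}, which itself needs the existence of dual minimizers via weak compactness of $\q_{t,T}$ and weak$^*$ compactness of $\D_{t,T}$), and then proves \eqref{a6} by a second, separate contradiction-plus-expectation argument rather than by conditional biconjugation. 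Your identification of where Assumption \ref{nonsing2} and the weak compactness of $\q_{t,T}$ enter (the minimax interchange and the uniform integrability of $\{ZU^-\}$) is correct, but as written the proposal leaves the two central reductions unestablished.
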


	The proof of Theorem \ref{main} is given in Section \ref{sec:proof_TC} and is based on combining ideas introduced in \cite{schied} and \cite{gordan}, respectively. In the former paper, duality results for the robust utility maximization problem with variational preferences were established. In the latter, in a setting similar to ours, conditional conjugacy relations were established for the non-robust case. Specifically, we reduce the conditional case to an $\f_0$-measurable conjugacy relation by taking expectations. For the latter, the relevant assertions are proven using arguments similar to the ones in \cite{schied}. However, while \cite{schied} relies on the duality results in \cite{kramkov}, we here make use of the theory established in \cite{gordan}.

	As holds for the case of a fixed measure (cf. \cite{gordan}), the dual problem admits a solution even though the primal problem may not (due to the restriction to bounded strategies). The fact that the optimizer's second component is in $\mathcal{M}^a_T$ (as opposed to a larger set of finitely additive measures) is a consequence of the utility function being finite on the entire real line (see \cite{gordan} and also \cite{bellini02,schachermayer01}).\\	
	
	\begin{prop} \label{prop:existence}
		Let $V(y,t)$ be a dual random field such that Assumption \ref{nonsing2} holds for the associated primal field. Then, for each $\eta\in L^0_+$ and $t\le T<\infty$, there exist $\meq\in\q_{t,T}$ and $Z\in\z^a_T$ for which the infimum in the dual value function $v(\eta;t,T)$ is attained (cf. \eqref{lv}). 
	\end{prop}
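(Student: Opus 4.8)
The plan is to establish existence of a minimiser for the dual value functional \eqref{lv} by a standard weak-compactness-plus-lower-semicontinuity argument, reducing the bivariate infimum over $\q_{t,T}\times\z^a_T$ to a situation covered by the existence theory in \cite{gordan}. First I would fix $\eta\in L^0_+(\f_t)$ and $t\le T<\infty$ and take a minimising sequence $(\meq^n,Z^n)\in\q_{t,T}\times\z^a_T$ for the functional $\meq,Z\mapsto\e^\meq[V(\eta\zet/\zet^\meq,T)\mid\f_t]+\gamma_{t,T}(\meq)$. By Assumption \ref{nonsing2}, $\q_{t,T}$ is convex and weakly compact, so $\meq^n$ has a subsequence converging weakly to some $\meq^\star\in\q_{t,T}$; we may pass to this subsequence. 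The set $\z^a_T$ is convex and, being a set of densities of absolutely continuous martingale measures, is closed and bounded in $L^1$, hence weakly sequentially compact by the Dunford--Pettis theorem (alternatively, work with the associated measures, which form a weakly compact set by the same reasoning as for $\z^a_T$ in the non-robust setting of \cite{gordan}); so $Z^n$ has a further subsequence converging weakly to some $Z^\star\in\z^a_T$.

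Next I would argue lower semicontinuity of the objective along this subsequence. The penalty term is handled by property (iii) in Definition \ref{penaltydef}: taking $\kappa=\e[\cdot\mid\f_t]$-type test functions in $L^\infty_+(\f_t)$, weak lower semicontinuity of $\meq\mapsto\e[\kappa\gamma_{t,T}(\meq)]$ gives, after localising in $\omega$, that $\gamma_{t,T}(\meq^\star)\le\liminf_n\gamma_{t,T}(\meq^n)$ a.s. For the expected-utility term, the cleanest route is to absorb the change of measure into the utility via the auxiliary field $\tilde U$ of \eqref{auxutility}: on $\f_t$-measurable sets one rewrites $\e^\meq[V(\eta\zet/\zet^\meq,T)\mid\f_t]$ in terms of the density $\dqdpt$ and the primal-dual conjugacy, and then invokes the non-singularity Assumption 3.3 of \cite{gordan} (valid by Assumption \ref{nonsing2}) together with the UI condition on $\{ZU^-(x,T)\}$ to pass to the limit. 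Concretely, I would convert the dual infimum into the form treated in \cite{gordan} — where existence of a dual optimiser in $\z^a_T$ (resp.\ $\ma_T$) is established precisely because $U$ is finite on all of $\R$, ruling out a singular part — and apply that result measure-by-measure, or rather uniformly, using the compactness of $\q_{t,T}$ to glue the $\meq$-component.

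The main obstacle I anticipate is the joint lower semicontinuity in the pair $(\meq,Z)$: the map $(\meq,Z)\mapsto\e^\meq[V(\eta\zet/\zet^\meq,T)\mid\f_t]$ couples the two variables through the ratio $\zet/\zet^\meq$, so one cannot simply minimise in $Z$ first and in $\meq$ second without care. The resolution is to note that for fixed $\meq$ the inner infimum over $Z\in\z^a_T$ is exactly a non-robust dual problem of the type solved in \cite{gordan}, yielding a measurable selector $Z(\meq)$ attaining it with the value being a convex, weakly l.s.c.\ function of $\meq$ (convexity of $V$ and of $\z^a_T$, together with the conditional conjugacy of Theorem \ref{main}); one then minimises this reduced functional over the weakly compact set $\q_{t,T}$ and lifts the optimiser back to a pair. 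A secondary technical point is that $\eta\in L^0_+(\f_t)$ is merely measurable, not bounded, so I would first truncate, $\eta\wedge n$, obtain optimisers $(\meq^n,Z^n)$, and then pass to the limit using the monotone structure of $V$ and the UI assumption; alternatively one localises on $\{\eta\le n\}\in\f_t$ and patches, exploiting that both $\q_{t,T}$ and $\z^a_T$ are stable under $\f_t$-measurable pasting. I would close by checking the limiting pair $(\meq^\star,Z^\star)$ indeed lies in $\q_{t,T}\times\z^a_T$ and attains $v(\eta;t,T)$, which follows by combining the liminf inequalities for the two terms.
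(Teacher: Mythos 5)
Your overall strategy---weak compactness plus lower semicontinuity, leaning on the non-robust duality of \cite{gordan} for the $Z$-component---is the right one, and you correctly flag the two real difficulties (joint lower semicontinuity in $(\meq,Z)$ and the fact that $\eta$ is only in $L^0_+$). However, there is a genuine gap at the very first compactness step. The set $\z^a_T$ is bounded in $L^1$ but is \emph{not} uniformly integrable in general, so the Dunford--Pettis theorem does not apply and $\z^a_T$ need not be weakly sequentially compact in $L^1$; this is precisely the classical obstruction in duality for utilities finite on all of $\R$ (cf.\ \cite{schachermayer01,bellini02}). The paper circumvents it by running the minimising sequence not in $\z^a_T$ but in the polar set $\D_{t,T}\subseteq(L^\infty)^*$, which \emph{is} weak$^*$-compact by Banach--Alaoglu. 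The weak$^*$ limit $\bar\zeta^*$ may a priori carry a singular part, but since the dual functional $\mathbb{V}^\meq_\kappa$ in \eqref{vvb} is set to $+\infty$ off $L^1_+$ (a consequence of $U$ being finite on all of $\R$), finiteness of the minimal value forces $\bar\zeta^*\in\D_{t,T}\cap L^1_+$, and \eqref{gordanlemma} then factors it as $\eta\bar Z_{t,T}$ with $\bar Z\in\z^a_T$. You allude to ``ruling out a singular part,'' but you treat it as a remark rather than as the compactness mechanism itself; as written, your argument has no compact set in which the $Z^n$ converge.

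A second, structural difference: the paper does not attempt the minimisation at the conditional level. It first scalarises, choosing $\kappa=(\max(1,v(\eta;t,T)))^{-1}\in L^\infty_+(\f_t)$ and proving existence for the expectation-weighted problem $v_\kappa(\eta)$ (Proposition \ref{additionallemma}), where joint lower semicontinuity of $(\zeta^*,\meq)\mapsto\mathbb{V}^\meq_\kappa(\zeta^*)$ follows cleanly from the sup-representation \eqref{eq:brilliant}; it then transfers the optimiser to the essential infimum in \eqref{lv} by contradiction, pasting a putative better pair on an $\f_t$-set $B$ and taking expectations. Your plan to extract $\gamma_{t,T}(\meq^\star)\le\liminf_n\gamma_{t,T}(\meq^n)$ a.s.\ ``after localising in $\omega$'' from the weak lower semicontinuity of $\meq\mapsto\e[\kappa\gamma_{t,T}(\meq)]$ is not justified as stated (an inequality in expectation for every $\kappa$ does not yield an a.s.\ liminf inequality along a fixed subsequence), and the truncation of $\eta$ becomes unnecessary once the $\kappa$-weighting is adopted. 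With the Banach--Alaoglu fix and a restructuring along these scalarise-then-localise lines, your argument would go through.
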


	We remind the reader that the above results use that the set of measures $\q_{t,T}$, defined in \eqref{qtt}, is assumed to be weakly compact. 	
	We end this section with some remarks on possible further extensions in the definition and assumptions imposed on $\q_{t,T}$.\\

	\begin{rem}\label{dissapoint1}
		Theorem \ref{main} can be proven under the assumption that $\q^a_{t,T}$ is weakly compact, where $\q_{t,T}^a$ is the set of absolutely continuous measures for which the penalty is finite a.s. For example, this holds for all penalty functions associated with coherent risk measures continuous from below (see Section \ref{rmpenalty}). The result then holds with the set $\q_{t,T}$ replaced by $\q^a_{t,T}$ in the definition of $u(\cdot;t,T)$ but with the dual field still defined as above with respect to the equivalent measures. In order to use $\q^a_{t,T}$ in the definition of $v(\cdot;t,T)$, one would need to extend the definition of $Z^\meq V(\eta/Z^\meq)$ to the null-sets of $\meq$ in a suitable way (preserving lower semicontinuity). For the case of utility functions defined on $\R_+$, this was done in \cite{schied}. The present case requires a more careful treatment which is the focus of future research. Extending the definition of the dual problem to a set of absolutely continuous measures would also enable proving Proposition \ref{prop:existence} using the weak compactness of the level sets (cf. Remark \ref{dissapoint2}) rather than of $\q^a_{t,T}$. 
	\end{rem}\smallskip

	\begin{rem}\label{dissapoint2}	
	In \cite{schied}, for the case of positive wealth processes and a fixed time horizon, similar conjugacy relations to \eqref{a5} and \eqref{a6} were established without the compactness assumption on $\q_{t,T}$. The proof exploited instead weak compactness of the level-sets $\q(c):=\{\meq\ll\mep:\gamma_{0,T}(\meq)\le c\}$. Specifically, since $U(\varepsilon+X_T)$, $X_T\ge 0$, is uniformly bounded from below for that case, the infimum in
			\bq
				\sup_{\pi}\inf_{\meq\ll\mep}\left\{\e^{\meq}\left[U\left(\varepsilon+X_T^\pi\right)\right]
				+\gamma_{0,T}(\meq)\right\},
			\eq
		can be replaced by the infimum over some (weakly compact) level set $\q(c)$, $c>0$. After application of a minmax theorem, the result is then obtained by letting $\varepsilon$ go to zero. Since we consider $U:\R\to\R$, the arguments become more involved. Indeed, even for $t=0$, $U(x,T)$ deterministic and $\pi_n$ an optimizing sequence, it is not clear whether $\e[U(X_T^{\pi_n},T)]$ is bounded from below. To address such issues, besides extending the setting from equivalent measures and define the dual problem for absolutely continuous measures, one might have to adopt the more elaborate setup considered in \cite{schachermayer01} where the existence of an optimizer for utility functions defined on the entire real line is proven by defining a sequence of utility functions $U^n$, for each of which the problem is reduced to one defined on the half-line. The result is then obtained by a limiting procedure. We leave these problems for future research.
				\end{rem}

\subsection{Dynamic-consistency of penalty functions and time-consistency of the optimal investment strategies}\label{sec:time_consistent}

	The definition of robust forward criteria requires the \emph{combined} criterion consisting of $U(x,t)$ and $\gamma(\cdot)$ to be dynamically consistent (cf. Definition \ref{forward}). In this section we further investigate this assumption and relate it to the dynamic consistency of the penalty functions and the optimal investment strategies. The proofs of the results in this section are reported in Section \ref{sec:proof_TC}.
	
	We introduce the following class of dynamically consistent penalty functions:\\
		
	\begin{ass}\label{gammatc}For any $T>0$ and $\meq\sim\mep$ on $\mathcal{F}_T$, the family of penalty functions $(\gamma_{t,T})$ is c\`adl\`ag in $t\leq T$, $\gamma_{t,t}\equiv 0$ and
		\bqn
			\gamma_{s,T}(\meq)=\gamma_{s,t}(\meq_{|\f_t})
			+\e^\meq\left[\left.\gamma_{t,T}(\meq)\right|\f_s\right],\quad s\leq t\leq T. \label{kol}
		\eqn
		Moreover, $\tilde\q_{s,T}=\q_{s,T}$, where
		\bqn
			\tilde\q_{s,T}:=\big\{\meq\sim\mep|_{\f_T}: Z^\meq_T=Z^{\meq_0}_tZ^{\meq_1}_{t,T}, \meq_0\in\q_{s,t}, \meq_1\in\q_{t,T}, s\le t\le T\big\}. \label{stable}
		\eqn
	\end{ass}
		
		
	We note that the above property of stability under pasting \eqref{stable} is not implied by \eqref{kol}. In order to render the analysis tractable, we work under this stronger assumption. For remarks on the relation of the above properties to penalty functions associated with risk measures, see Section \ref{rmpenalty} below.

	
	The additional structure resulting from Assumption \ref{gammatc} allows us to consider the question of whether, for $T>0$ fixed, the value field $u(x,t;T)$ associated with a general utility field, is itself self-generating for $t\le T$. That is to say, whether the dynamic programming principle holds (cf. \eqref{eq:classicalDC}). We verify now that under suitable assumptions on the penalty function, this is the case. The proof proceeds by first establishing appropriate consistency in the dual domain and then applying Theorem \ref{main}. \\

	\begin{prop}\label{dcp}
	Let $U(x,t)$ be a utility random field and $\gamma_{t,T}$ an admissible family of penalty functions. Suppose Assumptions \ref{nonsing2} and \ref{gammatc} hold. Then, for each $T>0$, the primal value field $u(\cdot;t,T)$ is self-generating i.e.
	\bqn
		u(x;s,T)=
		\esssup_{\pi\in\mathcal{A}_{bd}}\essinf_{\q_{s,t}}
		\bigg\{\e^\meq\bigg[u\Big(x+\int_s^t\pi_udS_u;t,T\Big)\bigg|\f_s\bigg]
		+\gamma_{s,t}(\meq)\bigg\},\quad 0\leq s\leq t\leq T.\label{tcvalue}
	\eqn	
	\end{prop}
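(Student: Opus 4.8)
The plan is to prove \eqref{tcvalue} by first establishing the corresponding dynamic-consistency property in the dual domain and then pulling it back via Theorem~\ref{main}. Fix $T>0$ throughout; for $r\le u$ write $Z_{r,u}=Z_u/Z_r$ and $Z^{\meq}_{r,u}=Z^{\meq}_u/Z^{\meq}_r$.

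\emph{Step 1 (dual dynamic consistency).} The core of the argument is the identity
\bq
v(\eta;s,T)=\essinf_{\meq\in\q_{s,t},\,Z\in\z^a_t}\Big\{\e^{\meq}\big[v\big(\eta\,Z_{s,t}/Z^{\meq}_{s,t};\,t,T\big)\,\big|\,\f_s\big]+\gamma_{s,t}(\meq)\Big\},\qquad 0\le s\le t\le T,
\eq
for all $\eta\in L^0_+(\f_s)$. For ``$\ge$'': given $\meq\in\q_{s,T}$ and $Z\in\z^a_T$, \eqref{kol} yields $\gamma_{s,T}(\meq)=\gamma_{s,t}(\meq|_{\f_t})+\e^{\meq}[\gamma_{t,T}(\meq)\,|\,\f_s]$, whence (by nonnegativity) $\meq|_{\f_t}\in\q_{s,t}$ and $\meq\in\q_{t,T}$; combining this with the tower property of $\e^{\meq}[\cdot\,|\,\f_s]$ and the factorisations $Z_{s,T}=Z_{s,t}\zet$, $Z^{\meq}_{s,T}=Z^{\meq}_{s,t}\zetq$ rewrites the $(s,T)$-integrand of \eqref{lv} as $\e^{\meq}\big[\,\e^{\meq}[V(\eta'\zet/\zetq,T)\,|\,\f_t]+\gamma_{t,T}(\meq)\,\big|\,\f_s\big]+\gamma_{s,t}(\meq|_{\f_t})$, with $\eta':=\eta\,Z_{s,t}/Z^{\meq}_{s,t}\in L^0_+(\f_t)$; the inner bracket is $\ge v(\eta';t,T)$ by \eqref{lv} at $(t,T)$, and the essential infimum over $(\meq,Z)$ gives the claim. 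For ``$\le$'': fix $\meq_0\in\q_{s,t}$, $Z^{\flat}\in\z^a_t$ and $\varepsilon>0$, put $\eta':=\eta\,Z^{\flat}_{s,t}/Z^{\meq_0}_{s,t}$, and use downward directedness of the family in \eqref{lv} at $(t,T)$ to pick $(\meq_1,Z^{\sharp})$ that is $\varepsilon$-optimal for $v(\eta';t,T)$ (glued measurably across $\f_t$-sets). By the stability-under-pasting assumption \eqref{stable} the measure $\meq$ with $Z^{\meq}_T=Z^{\meq_0}_tZ^{\meq_1}_{t,T}$ lies in $\q_{s,T}$, and the standard analogous property of $\z^a$ furnishes $Z\in\z^a_T$ with increments $Z^{\flat}$ on $[0,t]$ and $Z^{\sharp}$ on $[t,T]$; running the previous computation backwards (using that, for this $\meq$, the $\f_t$-conditional expectation and the $[t,T]$-penalty coincide with those of $\meq_1$) bounds $v(\eta;s,T)$ by the $(s,t)$-integrand at $(\meq_0,Z^{\flat})$ up to $\varepsilon$. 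Taking the essential infimum and sending $\varepsilon\downarrow0$ completes Step~1.

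\emph{Step 2 (pull-back via Theorem~\ref{main}).} For the fixed $T$, consider the pair consisting of the field $\tilde U(\cdot,t):=u(\cdot;t,T)$, $t\in[0,T]$, and the penalties $(\gamma_{s,t})_{0\le s\le t\le T}$. By \eqref{a5}, $\tilde U(\cdot,t)$ is an essential infimum of affine nondecreasing functions, hence concave and nondecreasing, and one verifies (using Assumptions~\ref{nonsing2} and \ref{gammatc}) that it is a.s.\ real-valued and inherits from $U$ the remaining requirements of Definition~\ref{field} and of Assumption~\ref{nonsing2}; moreover $\tilde U(\cdot,T)=U(\cdot,T)$ since $\gamma_{T,T}\equiv0$. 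A measurable selection in \eqref{a6} shows that the dual random field of $\tilde U$ in the sense of \eqref{vd} is $\tilde V(\cdot,t)=v(\cdot;t,T)$, and unrolling \eqref{lv} for this re-parametrised problem identifies its dual value field at $(s,t)$ with the right-hand side in Step~1. Therefore Step~1 states exactly that $(\tilde V,(\gamma_{s,t}))$ is self-generating on $[0,T]$; by the equivalence in Theorem~\ref{main} so is $(\tilde U,(\gamma_{s,t}))$, and writing the latter out with the value field $\tilde u$ of \eqref{u3} gives $\tilde U(\xi,s)=\tilde u(\xi;s,t)$ for all $0\le s\le t\le T$ and $\xi\in L^\infty(\f_s)$, which is precisely \eqref{tcvalue}.

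\emph{Main difficulty.} The delicate point is the ``$\le$'' direction in Step~1: one must select a continuation pair that is $\varepsilon$-optimal measurably in the random argument $\eta\,Z^{\flat}_{s,t}/Z^{\meq_0}_{s,t}$ and paste it consistently with $(\meq_0,Z^{\flat})$ without disturbing the penalty bookkeeping; this is exactly where the strength of the stability/fork-convexity assumption \eqref{stable} for $\q_{t,T}$ and of the corresponding pasting property of $\z^a$ is used, whereas the interchange of essential infima is otherwise harmless precisely because, unlike \eqref{u3}, the dual functional \eqref{lv} carries no supremum. A secondary technical obstacle is checking that $u(\cdot;t,T)$ is an admissible utility random field satisfying Assumption~\ref{nonsing2}, so that Theorem~\ref{main} legitimately applies to the re-parametrised criterion; if that is in doubt, one reruns the (short) duality argument behind Theorem~\ref{main}, which relies only on the conjugacy relations \eqref{a5}--\eqref{a6} already available at each level $(t,T)$.
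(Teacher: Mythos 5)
Your proof is correct and follows essentially the same route as the paper: establish dynamic consistency of the dual value field (this is the paper's Lemma \ref{dc}) and then transfer it to the primal side via the conjugacy relations of Theorem \ref{main}. The one substantive difference is in the ``$\le$'' direction of your Step 1, which you rightly identify as the delicate point: where you invoke downward directedness and an $\varepsilon$-optimal, measurably glued continuation pair, the paper instead plugs in the \emph{attained} dual minimizer supplied by Proposition \ref{prop:existence}, which sidesteps the selection/gluing argument entirely and leaves only the pasting property of Assumption \ref{gammatc} to check.
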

 
		For the case of standard (non--robust) utility maximization and deterministic utility functions it is well-known that the value process satisfies the DPP; also referred to as the martingale optimality principle, see \cite{karoui}. 
		In consequence, standard forward criteria may be seen as a generalization, to all times $t\ge 0$, of value functions associated with stochastic utility functions. 
		Proposition \ref{dcp} shows that a similar consistency property holds for certain ambiguity averse criteria; this has also been used to address ambiguity averse problems by stochastic control arguments in, among others, \cite{hernandez2,hernandez,muller05}. This further justifies our definition of robust forward criteria. 
		
		We recall that the value field associated with a general penalty function may not be dynamically consistent (see \cite{schied} for counter-examples). Hence, while standard forward criteria might be viewed as direct extensions of value functions associated with stochastic utility functions, Definition \ref{forward} enforces a additional structure by imposing the dynamic consistency requirement \eqref{eq:dynamic_consistency} on the couple $(U,\gamma)$. Note that, in general, this is weaker than the assumption of dynamic consistency of $\gamma$. Indeed, in Section \ref{sec:ex:non_dc} below, we construct an example of a dynamically consistent pair $(U,\gamma)$ where the penalty function itself is not. The robust forward criteria may then lead to time \emph{inconsistent} optimal investment strategies. In contrast, when the penalty functions are consistent, we recover the time-consistency of the optimisers.\\

	\begin{prop}\label{prop:time_consistency}
		Let $U(x,t)$ and $\gamma_{t,T}$ be a robust forward criterion such that Assumptions \ref{nonsing2} and \ref{gammatc} hold. Assume further that for each $0\leq t<T<\infty$ and $\xi\in L^\infty(\f_t)$ there is a saddle point $(\pi^{t,T}(\xi),\meq^{t,T}(\xi))$ for which $u(\xi,t;T)$ is attained (cf. \eqref{u3}). Then, the saddle point may be taken to be time consistent in that
		$\meq^{t,T}(\xi)=\meq^{t,\bar T}(\xi)|_{\f_T}$, and
		\bq
			\pi^{t,T}_u(\xi)=\pi^{t,\bar T}_u(\xi)
			\quad\textrm{and}\quad 
			\pi^{t,T}_u(\xi)=\pi^{u,T}_u\left(\xi+\int_t^u \pi_s dS_s\right),
			\quad 0\leq t\le u\le T\le\bar T.
		\eq 
		Further, for $x>0$, there exists a process $\bar\pi_t$, $t\ge 0$, and a positive martingale $Y_t$, $t\ge 0$, such that, for all $0\le t<T<\infty$, $u(x+\int_0^t\bar\pi_sdS_s;t,T)$ is attained for $\pi^{t,T}=\bar\pi_s\ind_{s\in[t,T)}$ and $\bar\meq$, with $\frac{\mathrm{d}\bar\meq}{\mathrm{d}\mep}=Y_T$. 
	\end{prop}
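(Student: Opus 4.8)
The plan is to work primarily in the dual domain, where the problem reduces to minimisation and the consistency structure is most transparent, and then transfer back via Theorem \ref{main}. The starting point is Proposition \ref{dcp}: under Assumptions \ref{nonsing2} and \ref{gammatc}, the primal value field $u(\cdot;t,T)$ is itself self-generating for $t\le T$, so \eqref{eq:classicalDC} holds. First I would fix $0\le t\le u\le T\le\bar T$ and $\xi\in L^\infty(\f_t)$ and use the tower-type identity \eqref{tcvalue} to decompose $u(\xi;t,\bar T)$ into an optimisation over $[t,u]$ of $u(\,\cdot\,;u,\bar T)$. By hypothesis a saddle point $(\pi^{t,\bar T}(\xi),\meq^{t,\bar T}(\xi))$ exists for $u(\xi;t,\bar T)$; restricting it to $[t,u]$ and using the dynamic-consistency decomposition \eqref{kol} of $\gamma$ together with the stability under pasting \eqref{stable}, I would argue that $(\pi^{t,\bar T}|_{[t,u]},\meq^{t,\bar T}|_{\f_u})$ must be a saddle point for the ``outer'' problem, hence — by uniqueness up to the optimiser obtained from \eqref{tcvalue} — agrees with $(\pi^{u,\bar T}(\xi+\int_t^u\pi_sdS_s),\meq^{u,\bar T}(\cdot))$ on that interval. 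The horizon-independence $\meq^{t,T}(\xi)=\meq^{t,\bar T}(\xi)|_{\f_T}$ and $\pi^{t,T}_u(\xi)=\pi^{t,\bar T}_u(\xi)$ for $u\le T\le\bar T$ follows similarly: applying \eqref{tcvalue} with inner horizon $T$ and using that the forward criterion satisfies $U(\cdot,T)=u(\cdot;T,\bar T)$, the saddle point for $u(\xi;t,\bar T)$ truncated at $T$ solves the problem with inner horizon $T$ for the utility field $U(\cdot,T)$, which is exactly $u(\xi;t,T)$.

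For the dual side, I would invoke Proposition \ref{prop:existence}: for each $\eta\in L^0_+(\f_t)$ there is a minimiser $(\meq,Z)$ for $v(\eta;t,T)$. Using the concave/convex structure and Theorem \ref{main}, the primal saddle point and the dual minimiser are linked through the first-order (conjugacy) relation $\eta = \partial_x U(\xi+\int_t^T\pi^{t,T}_sdS_s,T)$, and $Z\zet/\zet^\meq$ is, up to this normalisation, the dual optimiser. Dynamic consistency of $v(\cdot;t,T)$ for $t\le T$ — which is what the proof of Proposition \ref{dcp} actually establishes first — gives a multiplicative consistency of the dual optimisers: the density pieces paste, $\zet[\bar T]^{\meq^{t,\bar T}} = \zet[u]^{\meq^{t,\bar T}}\,\zet[u,\bar T]^{\meq^{u,\bar T}}$ and similarly for the state-price density $Z$. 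I would carry this through carefully to conclude that the optimal dual state-price density process is a single process on $[0,\infty)$, not a family indexed by $T$.

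The last assertion — existence of a single pair $(\bar\pi_t)_{t\ge0}$ and a positive martingale $(Y_t)_{t\ge0}$ with $\mathrm{d}\bar\meq/\mathrm{d}\mep=Y_T$ working simultaneously for all horizons — is obtained by concatenation. Starting from $x>0$ and $\eta_0=\partial_x U(x,0)$, I would define $\bar\pi$ and $Y$ recursively on $[0,1],[1,2],\dots$ (or along any exhausting sequence of horizons) using at each step the optimiser from Proposition \ref{prop:existence} for the current wealth $x+\int_0^t\bar\pi_sdS_s$, and invoke the consistency just proved to check that the pieces glue into $\mathbb{F}$-predictable, resp.\ martingale, processes whose restriction to $[t,T)$ is optimal for $u(x+\int_0^t\bar\pi_sdS_s;t,T)$; the self-generation property \eqref{eq:dynamic_consistency} ensures the wealth at time $t$ is the correct ``initial condition'' for the next block. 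I expect the main obstacle to be the \textbf{non-uniqueness of saddle points}: \eqref{tcvalue} only guarantees \emph{some} optimiser, and to paste optimisers across infinitely many sub-intervals into a single adapted process one needs either a measurable-selection argument or an a.s.-uniqueness statement. I would handle this by exploiting strict concavity of $U(\cdot,t)$ (hence strict convexity of $V(\cdot,t)$), which makes the optimal terminal wealth — and therefore, via the martingale representation of $X^{\bar\pi}$ under $\bar\meq$, the optimal strategy $\bar\pi$ — essentially unique, so the concatenation is well defined; verifying that the resulting $Y$ is a genuine (not merely local) martingale uses that each $\meq\in\q_{t,T}$ is a probability equivalent to $\mep$ together with the c\`adl\`ag consistency \eqref{kol} of $\gamma$.
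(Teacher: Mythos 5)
Your overall route is the same as the paper's: establish consistency of the dual optimisers first, transfer to the primal via the conjugacy of Theorem \ref{main}, and then run the classical dynamic-programming argument under the (now fixed) worst-case measure using \eqref{kol}. The paper implements this by re-using the pasting construction from the proof of Lemma \ref{charac} (the ``i) implies iii)'' step): it builds, by iterated application of Proposition \ref{prop:existence} and the stability property \eqref{stable}, a single $Z\in\z^a$ and a consistent family $(\bar\meq_T)$ attaining $v(y;0,T)$ for \emph{all} $T$; then, by the existence of a saddle point and the duality between $u(\cdot;0,T)$ and $v(\cdot;0,T)$ (citing Theorem 2.6 of Schied), $u(x;0,T)=\esssup_\pi\e^{\bar\meq}[U(x+\int_0^T\pi_s\,dS_s,T)]+\gamma_{0,T}(\bar\meq)$, so the robust problem collapses to a standard EUM problem under the fixed pasted measure, and time-consistency of $\bar\pi$ follows from a chain of inequalities (using \eqref{kol} and self-generation) that must all be equalities. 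This is close to your plan; the main difference is that the paper never needs any uniqueness statement, because the proposition only asserts that the saddle point \emph{may be taken} to be consistent, and the pasting produces one consistent choice explicitly. Your worry about a measurable-selection problem is also dissolved by this construction: the selection is made once per interval $[T^i,T^{i+1}]$ along a countable exhausting sequence, not $\omega$-by-$\omega$.

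The one soft spot in your write-up is the step where you claim the restriction of $(\pi^{t,\bar T},\meq^{t,\bar T})$ to $[t,u]$ ``agrees with'' $(\pi^{u,\bar T},\meq^{u,\bar T})$ ``by uniqueness''. Strict concavity of $U(\cdot,T)$ gives uniqueness of the optimal terminal wealth only for a \emph{fixed} measure; the measure component of a saddle point need not be unique (the penalty $\gamma_{t,T}$ is merely convex), and different worst-case measures can induce different optimal strategies. So an identity between arbitrarily chosen saddle points on nested intervals is not available and is not what the proposition claims. If you replace ``agrees with, by uniqueness'' by ``can be chosen as'' -- which is exactly what your own concatenation in the last paragraph delivers -- the argument closes and coincides in substance with the paper's proof.
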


The above result, combined with example in Section \ref{sec:ex:non_dc} shows that the dynamic consistency of penalty functions \eqref{kol} is a necessary and sufficient condition for time-consistency of optimal investment strategies. Further, it is clear from the example that this applies both to the robust forward criteria studied here as well as the classical robust expected utility maximisation on a fixed horizon. This leads to interesting open questions. First, of the economic justification for \eqref{kol} which remains unclear, see Remark 3.5 in \cite{schied}. Second, of generalisations of the optimisation problem in \eqref{u3} which would preserve time-consistency of optimal strategies while \eqref{kol} is violated. The expression in \eqref{u3} arises from the representation of concave utility functionals in \cite{follmer02,fritelli02} and its generalisations correspond to quasi-concave functional explored in \cite{cerreiab,drapeau10}. Their use in the context of utility maximisation is investigated in a parallel work of K\"allblad \cite{kallblad2013risk}. The implications for time-consistency and its link to \eqref{kol} remain however open.

	
	Finally, we show that the dynamic consistency property of penalty functions leads to a characterization of robust forward criteria in terms of a certain ``weighted submartingale" property of the dual field. This will be used to derive an equation allowing us to investigate particular classes of, and to find examples of, the robust forward criteria $(U,\gamma)$.\\

	\begin{prop}\label{characp}
		Let $U(x,t)$ be a utility random field and $\gamma_{t,T}$ an admissible family of penalty functions such that Assumption \ref{nonsing2} holds. In addition, assume either that Assumption \ref{gammatc} holds, or that \eqref{kol} holds and, for all $T>0$, $U(x,T)\in L^1(\f_T,\meq)$ for all $\meq\in\tilde\q_{0,T}$.
		Let $V(y,t)$ the dual field given in \eqref{vd}.
		Then, the following two statements are equivalent: 
			\begin{itemize}
				\item[i)]{$U(x,t)$ and $\gamma_{t,T}$ constitute a robust forward criterion;}
				\item[ii)]{For each $y>0$ and all $t\le T<\infty$, it holds for all $\meq\in\q_{t,T}$ and $Z\in\z^a_T$ that
					\bqn
						V(yZ_t/Z_t^\meq,t)\le \e^\meq\left[\left.V(yZ_T/Z_T^\meq,T)\right|\f_t\right]+\gamma_{t,T}(\meq).\label{eq:villkor}
					\eqn 
			Further, there is $Z\in\z^a$ and a positive martingale $Y_t$, $t\ge 0$, such that, for all $t\le T<\infty$, $\meq_T\in\q_{0,T}$, with $\frac{\mathrm{d}\meq_T}{\mathrm{d}\mep}=Y_T$, and \eqref{eq:villkor} holds as equality for $Z_T$ and $\meq_T$.
			}
			\end{itemize} 	
	\end{prop}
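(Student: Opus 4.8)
The plan is to establish the equivalence via the dual characterization of self-generation (Theorem \ref{main}) together with the dynamic consistency of the penalty functions (Assumption \ref{gammatc}, or its weakened version with \eqref{kol} plus integrability). By Theorem \ref{main}, $(U,\gamma)$ is self-generating if and only if $(V,\gamma)$ is self-generating, i.e. $V(\eta,t)=v(\eta;t,T)$ a.s. for all $0\le t\le T<\infty$ and $\eta\in L^0_+(\f_t)$, where $v$ is given by \eqref{lv}. So it suffices to show that $(V,\gamma)$ self-generating is equivalent to statement ii). First I would prove the "easy" inequality: applying $v(\eta;t,T)=V(\eta,t)$ and dropping the essential infimum in \eqref{lv} immediately yields, for each fixed $\meq\in\q_{t,T}$ and $Z\in\z^a_T$, that $V(\eta,t)\le\e^\meq[V(\eta\zet/\zet^\meq,T)\mid\f_t]+\gamma_{t,T}(\meq)$; taking $\eta=yZ_t/Z_t^\meq$ (which is in $L^0_+(\f_t)$) and using $\zet\zet^{-1}\cdot Z_t/Z_t^\meq$ telescoping to $Z_T/Z_T^\meq$ gives precisely \eqref{eq:villkor}. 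The attainment clause in ii) is exactly the statement that the essential infimum in $v(\eta;t,T)$ is attained simultaneously for all $T$ along a consistent family $(Z,Y)$; this is where Propositions \ref{prop:existence} and \ref{prop:time_consistency} enter --- the latter supplies the \emph{time-consistent} optimizer $(\bar\pi,\bar\meq)$ with density martingale $Y$, whose dual counterpart furnishes the required $Z\in\z^a$ and $Y$.

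Next I would establish the converse direction, i.e. that \eqref{eq:villkor} for all $\meq,Z$ together with the equality clause implies $V(\eta,t)=v(\eta;t,T)$. Fix $t\le T$ and $\eta\in L^0_+(\f_t)$. Write $\eta=yZ_t/Z_t^\meq$ is not generally possible for arbitrary $\eta$, so instead I would argue pointwise on the $\f_t$-measurable decomposition: given any $\meq\in\q_{t,T}$ and $Z\in\z^a_T$, the quantity $\eta\zet/\zet^\meq$ can be written as $\eta'Z_T/Z_T^\meq$ with $\eta'$ an appropriate $\f_t$-measurable rescaling, or more cleanly one uses the positive homogeneity/structure of $V$ in a neighbourhood argument; the cleanest route is to note that \eqref{eq:villkor} with $y$ replaced by an arbitrary $\f_t$-measurable positive random variable (which is legitimate since the inequality is stated for all $\meq\in\q_{t,T}$ and we may localize) gives $V(\eta,t)\le\essinf_{\meq,Z}\{\e^\meq[V(\eta\zet/\zet^\meq,T)\mid\f_t]+\gamma_{t,T}(\meq)\}=v(\eta;t,T)$. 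For the reverse inequality $V(\eta,t)\ge v(\eta;t,T)$, I would invoke the equality clause: there exist $Z\in\z^a$ and the martingale $Y$ with $\meq_T\in\q_{0,T}$ for which \eqref{eq:villkor} holds with equality; restricting to the relevant horizons and using that $\meq_T|_{\f_t}$-conditional structure is compatible (here Assumption \ref{gammatc}'s relation \eqref{kol} and stability under pasting \eqref{stable} are essential to glue the optimizers across $[0,t]$ and $[t,T]$), one obtains a measure and density attaining the infimum in $v(\eta;t,T)$, hence $v(\eta;t,T)\ge V(\eta,t)$. Combining gives equality, so $(V,\gamma)$ --- and therefore $(U,\gamma)$ --- is self-generating.

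The step requiring the most care is the converse direction, specifically handling the essential infimum in \eqref{lv} over the pair $(\meq,Z)\in\q_{t,T}\times\z^a_T$ rather than a single family, and verifying that the pointwise inequality \eqref{eq:villkor} (stated "for all $\meq,Z$") genuinely controls the conditional essential infimum --- this needs a standard lattice-property/exhaustion argument for essential infima, combined with the lower semicontinuity in Definition \ref{penaltydef}(iii) to ensure the infimum behaves well. The second delicate point is the attainment/equality clause: one must check that the time-consistent optimizer from Proposition \ref{prop:time_consistency} (resp.\ the dual optimizer from Proposition \ref{prop:existence}) can be chosen as a single process $Z\in\z^a$ valid across \emph{all} horizons simultaneously, which is exactly where the pasting stability \eqref{stable} and the cocycle relation \eqref{kol} for $\gamma$ are used --- under the weakened hypothesis (only \eqref{kol} plus $U(x,T)\in L^1(\f_T,\meq)$ for $\meq\in\tilde\q_{0,T}$) one replaces $\q$ by $\tilde\q$ throughout and checks the integrability assumption is what makes all conditional expectations in the telescoping well-defined. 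The rest --- measurability of the essential infimum, the passage between $U$- and $V$-formulations --- is routine given Theorem \ref{main}.
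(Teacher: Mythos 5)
Your overall route is the paper's: reduce to the dual via Theorem \ref{main}, obtain the inequality \eqref{eq:villkor} by dropping the essential infimum in \eqref{lv} with $\eta=yZ_t/Z_t^\meq$, and recover self-generation from \eqref{eq:villkor} by first localizing to simple $\f_t$-measurable multipliers and then extending to arbitrary $\eta\in L^0_+(\f_t)$ (the paper packages this as Lemma \ref{charac}, with the extension step borrowed from Theorem 3.14 of \cite{gordan}, and then cites Proposition 3.9 of \cite{gordan} to convert the hypothesis $U(x,T)\in L^1(\f_T,\meq)$ into the integrability condition on $V^-$ that makes the pasted measures admissible). Those parts of your proposal are sound, if schematic.

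The genuine gap is in the attainment clause of ii). You propose to extract the globally consistent pair $(Z,Y)$ from Proposition \ref{prop:time_consistency}. This fails for two reasons. First, Proposition \ref{prop:time_consistency} carries the additional hypothesis that a \emph{primal} saddle point $(\pi^{t,T}(\xi),\meq^{t,T}(\xi))$ exists, which is not assumed in Proposition \ref{characp}; indeed the whole point of the dual formulation with $\A_{bd}$ is that the primal supremum need not be attained, whereas the dual infimum always is (Proposition \ref{prop:existence}). Second, the dependency would be circular: in the paper, Proposition \ref{prop:time_consistency} is itself proved by invoking exactly the ``i) implies iii)'' pasting argument that underlies the attainment clause you are trying to establish. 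The correct construction uses only Proposition \ref{prop:existence} together with \eqref{kol} and \eqref{stable}: take the dual optimizer $(Z^1,\meq_1)$ for $v(y;0,T_1)$, then the dual optimizer for $v\bigl(yZ^1_{T_1}/Z^{\meq_1}_{T_1};T_1,T_2\bigr)$, paste the densities, and show via the chain of (in)equalities forced by self-generation that the pasted pair is optimal for $v(y;0,T_2)$ and that its restrictions remain optimal on every sub-horizon; iterating along $T^i\uparrow\infty$ yields the single $Z\in\z^a$ and the martingale $Y$, and the conditional equalities in \eqref{eq:villkor} then follow by a contradiction argument (taking expectations under $\bar\meq$ and using \eqref{kol}). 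You correctly flag the pasting step as delicate and mention Proposition \ref{prop:existence} as an alternative, but as written the argument for the attainment clause does not go through.
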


	\subsection{Penalty functions associated with risk measures}\label{rmpenalty}
	
	Recall that preferences specification akin to \eqref{u3} is motivated by results in economics. 		The axiomatic approach
	to ambiguity averse choices under uncertainty led to numerical representation in terms of concave utility functionals, with the penalty function appearing naturally from the robust representation of convex risk measures; see \cite{gilboa89,maccheroni06} and \cite{tre} for an overview. 
	We summarize now some facts about such penalty functions and relate them to our assumptions. To this end, let $\rho_{t,T}$ be a conditional convex risk measure and $\gamma_{t,T}$ its associated minimal penalty function (which we assume to be bounded from below), given by
	\bqn
		\gamma_{t,T}(\meq):=\esssup_{X\in L^\infty(\f_T)}\big(\e^\meq[-X|\f_t]-\rho_{t,T}(X)\big),\label{penaltynew}
	\eqn
	for $\meq\ll\mep|_{\f_T}$. Then, it holds that
	\bqn
		\rho_{t,T}(X)=\esssup_{\substack{\meq\ll\mep|_{\f_T}:\\ \meq|_{\f_t}=\mep|_{\f_t}}}
		\big(\e^\meq[-X|\f_t]-\gamma_{t,T}(\meq)\big),\label{penaltyfcn}
	\eqn
	for $X\in L^\infty(\f_T)$ (see, for example, \cite{bionnadal04,detlefsen05}). Within the context of ambiguity averse portfolio optimization, it is common to restrict to risk measures $\rho_{t,T}$ which are continuous from below, i.e. for $Y^n\in L^\infty$ such that $Y_n\nearrow Y$ a.s. with $Y\in L^\infty$, $\rho_{t,T}(Y_n)\to\rho_{t,T}(Y)$ a.s., and, moreover, ``sensitive``, in that $\mep\left(\e[\rho_{t,T}(-\varepsilon A)]>0\right)>0$, for all $\varepsilon>0$ and $A\in\f_T$ such that $\mep(A)>0$. These properties render, respectively, the associated level sets $\{\meq\ll\mep:\gamma_{0,T}(\meq)<c\}$, $c>0$, weakly compact and $\q_e$ non-empty (cf. Lemma 4.1 in \cite{schied} and Remark \ref{dissapoint2} above).

	We note that $X\in L^\infty$ in \eqref{penaltynew} and \eqref{penaltyfcn}, while $U(X^\pi_T,T)$ only lies in $L^1$. This, however, is not an issue and, in particular, we do not have to restrict $\gamma$ in line with extensions of the risk measure theory to $L^p$-spaces, $p=[1,\infty)$, (see \cite{filipovic07,kaina09} and, for the conditional case, \cite{acciaio11,filipovic12}). Indeed, in analogy with \cite{schied}, it suffices to impose (weaker) joint integrability conditions on $U(x,t)$ and $\gamma$ to ensure that the value function $u(\cdot;t,T)$ is well-defined (cf.\ Definition \ref{penaltydef}).
	
	A penalty function $\gamma_{t,T}$ in \eqref{penaltynew} associated with a risk measure satisfies properties i) - iii) of Definition \ref{penaltydef}. However, in general, it will not satisfy the weak compactness assumptions used above (cf. Assumption \ref{nonsing2}). To illustrate this, note that for this type of penalty functions, it is natural to restrict the set $\q_{t,T}$ in \eqref{qtt} to its subset (cf. e.g. Theorem 1.4 in \cite{acciaio11b}):
	\bqn
		\Big\{\meq\sim\mep|_{\f_T}:
		\e^\meq\big[\gamma_{t,T}(\meq)\big]<\infty\Big\}.\label{skit}	
	\eqn
 	For a general convex risk measure, this set is \emph{not} weakly compact. However, as we consider risk measures which are continuous from below, the associated level sets are. In particular, for a \emph{coherent} risk measure, which corresponds to $\gamma\in\{0,\infty\}$, it follows that
	\bq
		\q^a_{t,T}:=\big\{\meq\ll\mep|_{\f_T}:\meq=\mep\textrm{ on }\f_t\textrm{ and }\gamma_{t,T}(\meq)=0,a.s.\big\}
	\eq
	is weakly compact. If further $\q^a_{t,T}\subseteq\{\meq\sim\mep|_{\f_T}\}$, then the set in \eqref{skit} is also weakly compact. An example of such a risk measure is considered in \cite{hernandez2} (cf. also Theorem 3.16 in \cite{kloppel05}). Naturally, Assumption \ref{nonsing2} allows for much more flexibility.

	For convex risk measures, time-consistency is characterized by property \eqref{kol}. Indeed, \eqref{kol} is equivalent (cf. e.g. Theorem 4.5 in \cite{irina}) to $\rho$, given in \eqref{penaltyfcn}, satisfying, for $0\le s\le t\le T$, 
		\bqn
			\rho_{s,T}(X)=\rho_{s,t}(-\rho_{t,T}(X)).\label{sgphi}
		\eqn
	One would expect this property, combined with Assumption \ref{nonsing2}, to be sufficient for Lemmas \ref{dc} and \ref{charac} to hold. Indeed, assume that $U(x,T)\in L^\infty$, for $x\in\R$. For a fixed strategy $\bar\pi\in\mathcal{A}_{bd}$, the relation in \eqref{tcvalue} then reduces to
	\bqn	\label{eq:extend_tc}
		\phi_{s,T}\big(U\big(X_T^{\bar\pi},T\big)\big)=\phi_{s,t}\big(\phi_{t,T}\big(U\big(X_T^{\bar\pi},T\big)\big),
	\eqn
	where $\phi(X)=-\rho(X)$. Note that \eqref{eq:extend_tc} holds true due to \eqref{sgphi}. Time-consistency of the value function has also been verified for the choice of specific models and utility functions (see, among others, \cite{hernandez2}). We leave  proving our results under this assumption for future research and restrict ourselves to the stronger Assumption \ref{gammatc}. Note that any time-consistent coherent risk measure admits the pasting property \eqref{stable} (cf. Corollary 1.26 in \cite{acciaio11b}). In fact, in our case when all measures in $\q_{t,T}$ are equivalent to the reference measure, even more explicit results hold for these risk measures (for results on the relation between stable sets and time-consistent coherent risk measures, we refer to \cite{delbaen06,irina,kloppel05}).

	\section{On structure, specific classes and examples of robust forward criteria}\label{secsmoothlog}

	Within a Brownian filtration, we consider a logarithmic robust forward criterion with a quadratic penalty structure (cf. Proposition \ref{buk}). The example is of particular interest as it gives theoretical justification to fractional Kelly strategies often used in practice by large investment funds. More precisely, the investor estimates (dynamically) the market growth (Kelly) strategy $\hat X$ and invests a (dynamically adjusted) fraction of her wealth in $\hat X$. The leverage, in our framework, has the interpretation of investor's confidence in his estimate of $\hat X$. 
	
	The example belongs to a certain class of so-called non--volatile robust forward criteria. We elaborate further on this in Section \ref{logexex}. Specifically, we provide a formal discussion illustrating the structure of forward criteria and the fact that additional assumptions are needed in order to pin down a unique criterion from a given initial condition and penalty structure. 
	Specific attention is paid to the non--volatile criteria, which are characterized by a specific evolutionary property and linked to a certain PDE (cf. equation \eqref{hjb} below).
	
	Despite its specific form, the example in Section \ref{sec:log_ex} illustrates yet a crucial fact about robust forward criteria. Namely, that for each robust forward criterion, there exists a (standard) forward criterion in the fixed reference market, giving rise to the same optimal behaviour. This is further discussed in Section \ref{sec:equiv}.

	\subsection{Non-volatile criteria yielding fractional Kelly strategies}\label{sec:log_ex}
	
	We first specify the Brownian setup considered throughout this section. At this point, we stress that in reality the investor does not have access to the ``true model", which is an abstract concept. Instead, the investor decides on a reference model $\hat\mep$. 
	In the example below this will be a dynamically updated estimate for the most likely description of reality. It is therefore natural to expect $\gamma_{t,T}(\cdot)(\omega)$ to have a global minimum at $\hat\mep_{|\f_T}$. Further, in the example considered below, we will also see that the randomness of $U(\cdot, x,t)$ is expressed through the realisation of $\hat\mep_{|\f_t}$, $t\geq 0$. 
	
	For simplicity, let $d=1$ in that the market only consists of one risky asset. Recall that $S^0_t\equiv 1$. We consider a filtration generated by a two-dimensional $\hat\mep$-Brownian motion $W_t=(\hat W^1_t,\hat W^2_t)$, $t\ge 0$, and assume that $S^1_t$ solves
		\bqn
			dS^1_t=S^1_t\left(\hat\lambda_tdt+\sigma_td\hat W^1_t\right),\label{model}
		\eqn
	for some $\mathbb{F}$-progressively measurable processes $\sigma_t$, $\sigma_t\neq 0$ a.s., and $\hat\lambda_t$, $t\ge 0$. The latter is referred to as the investor's estimated market price of risk. Further, in this section, we let $(\pi_t)_{t\ge 0}$, denote the fraction of wealth invested in the risky asset. The associated wealth process then follows the dynamics
		\bq	
			dX_t^\pi=\pi_tX_t^\pi\sigma_t\big(\hat\lambda_tdt+d\hat W^1_t\Big),\quad X_0=x.
		\eq
	The set of admissible strategies is defined as follows:
		\bq
			\mathcal{A}:=\Big\{\pi: \textrm{$(\pi_t)$ adapted, $(X_t^\pi)$ well-defined and $X_t^\pi>0$ a.s. for all $t>0$}\Big\},		
		\eq
		and we also write $\mathcal{A}^x$ when we want to stress the initial wealth $X_0=x$. Finally, we denote by $\mathcal{A}_t^x$ the analogue set of strategies on $[t,\infty)$ starting from $X_t^\pi=x$.
		
		Given the Brownian filtration, any measure $\meq\sim\hat \mep$ on $\f_T$ admits a process $\eta_t=(\eta^1_t,\eta^2_t)\in\mathcal{P}\times\mathcal{P}$, $t\le T$, such that $\frac{\mathrm{d}\meq}{\mathrm{d}\hat \mep}\big|_{\f_T}=D^\eta_T$, where the process
		\bqn
			D_t^\eta:=\mathcal{E}\left(\int\eta^1_sd\hat W^1_s
			+\int\eta^2_sd\hat W^2_s\right)_t,\label{d}
		\eqn
	is a martingale on $[0,T]$. We write $\meq=\meq^\eta$ and, for the present example, assign it a penalty given by
		\bqn \label{eq:penalty_ex_noncompact}
		\gamma_{t,T}(\meq^\eta):=\left\{
			\begin{array}{ll}
				\e^{\meq^\eta}\bigg[\int_t^T\frac{\delta_u}{2}\left|\eta_u\right|^2du\bigg|\f_t\bigg] & \textrm{ if } \e^{\meq^\eta}\Big[\int_t^T\hat\lambda_s^2ds\Big]<\infty\\
				+\infty & \textrm{ otherwise,}
			\end{array}
			\right.
		\eqn
		for some adapted, non--negative process $(\delta_t)$ which controls the strength of the penalisation (cf. also \eqref{penalty} below). The investor is aware that $\hat\mep$ may be an inaccurate estimate of the market and $(\delta_t)$ quantifies her trust in $\hat \mep$. Note that $\gamma_{t,T}$ may fail to satisfy Assumptions \ref{nonsing2} and \ref{gammatc}. In particular, $\q_{t,T}$ in \eqref{qtt} may not be weakly compact. This is not a problem since, for this example, we present a direct proof. Finally, we assume that there exists $\kappa>1/2$ such that $\hat\e\big[\exp\big(\kappa\int_0^T\hat\lambda_s^{2}ds\big)\big]<\infty$ for all $T>0$. This is a convenient integrability assumption which can be interpreted as $\hat \mep$ being \emph{reasonable}. Note that it implies in particular, by Novikov's condition, that $(Z^{\nu}_t)$ in \eqref{z} with $\nu\equiv 0$ is a $\hat\mep$-martingale.\\

	\begin{prop}\label{buk}
		Given the investor's choice of $(\hat \lambda_t)$ and $(\delta_t)$ as above, let  	
			\bqn
				\bar\eta_t:=\big[-\hat\lambda_t/(1+\delta_t),0\big]
				\quad\textrm{and}\quad
				\bar\pi_t:=\frac{\delta_t}{1+\delta_t}\frac{\hat\lambda_t}{\sigma_t},\label{etastar}
			\eqn
		and
			\bqn
				U(x,t):=\ln x-\frac{1}{2}\int_0^t\frac{\delta_s}{1+\delta_s}\hat\lambda_s^2ds,\quad t\ge 0, x\in\R_+. \label{mainex}
			\eqn
		Recall that the penalty $\gamma$ is given by \eqref{eq:penalty_ex_noncompact} and assume that $\gamma_{0,T}(\bar\eta)<\infty$ for $T>0$. Then, for all $0\le t\le T<\infty$,
			\bqn
				U(x,t)=\esssup_{\pi\in\mathcal{A}^x_t}\essinf_{\eta\in\q_{t,T}}
				\e^\eta\left[U(X_T^\pi,T)+\gamma_{t,T}(\meq^\eta)\bigg|\f_t\right],\label{mainex_optpb}					\eqn
		and the optimum is attained for the saddle point $(\bar\eta,\bar\pi)$ as given in \eqref{etastar}.
	\end{prop}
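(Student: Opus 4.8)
The plan is to use the logarithmic form \eqref{mainex} to reduce the saddle problem \eqref{mainex_optpb} to a \emph{pointwise} (in $(s,\omega)$) concave--convex problem for an auxiliary function $g$, and then to identify both the value and the saddle point through two one--sided estimates obtained by freezing $\pi=\bar\pi$, respectively $\eta=\bar\eta$.

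\emph{Step 1 (explicit representation).} Fix $0\le t\le T<\infty$ and $x>0$. For $\pi\in\mathcal{A}^x_t$, It\^o's formula gives
\bq
\ln X^\pi_T=\ln x+\int_t^T\big(\pi_s\sigma_s\hat\lambda_s-\tfrac12\pi_s^2\sigma_s^2\big)\,ds+\int_t^T\pi_s\sigma_s\,d\hat W^1_s .
\eq
For $\eta=(\eta^1,\eta^2)$ with $\meq^\eta\in\q_{t,T}$, Girsanov's theorem shows that $W^{1,\eta}_r:=\hat W^1_r-\int_0^r\eta^1_s\,ds$ is a $\meq^\eta$--Brownian motion. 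Substituting $d\hat W^1_s=dW^{1,\eta}_s+\eta^1_s\,ds$ and inserting \eqref{mainex} and \eqref{eq:penalty_ex_noncompact} (cf.\ \eqref{d}) yields the pathwise identity
\bq
U(X^\pi_T,T)+\int_t^T\tfrac{\delta_s}{2}|\eta_s|^2\,ds=U(x,t)+\int_t^T g(s,\pi_s,\eta_s)\,ds+\int_t^T\pi_s\sigma_s\,dW^{1,\eta}_s ,
\eq
where $g(s,\pi,e):=\pi\sigma_s(\hat\lambda_s+e^1)-\tfrac12\pi^2\sigma_s^2-\tfrac12\tfrac{\delta_s}{1+\delta_s}\hat\lambda_s^2+\tfrac{\delta_s}{2}|e|^2$, $e=(e^1,e^2)\in\R^2$. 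As $\gamma_{t,T}(\meq^\eta)=\e^{\meq^\eta}\big[\int_t^T\tfrac{\delta_s}{2}|\eta_s|^2\,ds\,\big|\,\f_t\big]$, taking $\e^{\meq^\eta}[\,\cdot\,|\,\f_t]$ identifies the bracketed expression in \eqref{mainex_optpb} with $U(x,t)+\e^{\meq^\eta}\big[\int_t^T g(s,\pi_s,\eta_s)\,ds\,\big|\,\f_t\big]$ \emph{whenever} the final stochastic integral has vanishing $\f_t$--conditional mean.

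\emph{Step 2 (pointwise saddle).} For fixed $(s,\omega)$ the map $g(s,\cdot,\cdot)$ is strictly concave in $\pi$ (as $\sigma_s\neq0$) and convex in $e$ (strictly so if $\delta_s>0$). Minimising over $e^2$ forces $e^2=0$ (the second Brownian direction is orthogonal to the traded asset and only feeds the penalty), minimising over $e^1$ for fixed $\pi$ gives $e^1=-\pi\sigma_s/\delta_s$, and maximising the resulting expression over $\pi$ gives $\pi=\tfrac{\delta_s}{1+\delta_s}\tfrac{\hat\lambda_s}{\sigma_s}$; one verifies that this exhibits $(\bar\pi_s,\bar\eta_s)$ of \eqref{etastar} as the unique saddle point, with saddle value $g(s,\bar\pi_s,\bar\eta_s)=0$. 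In particular $g(s,\bar\pi_s,e)\ge0$ for every $e$ and $g(s,\pi,\bar\eta_s)\le0$ for every $\pi$. The formula for $\bar\pi_s$ is precisely the fractional Kelly rule of the statement, with leverage $\delta_s/(1+\delta_s)$ growing with the confidence parameter $\delta_s$.

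\emph{Step 3 (the two estimates).} \textbf{Lower bound (freeze $\pi=\bar\pi$).} Since $|\bar\pi_s\sigma_s|=\tfrac{\delta_s}{1+\delta_s}|\hat\lambda_s|\le|\hat\lambda_s|$ and $\meq^\eta\in\q_{t,T}$ forces $\e^{\meq^\eta}\big[\int_t^T\hat\lambda_s^2\,ds\big]<\infty$ (else $\gamma_{t,T}(\meq^\eta)=+\infty$ by \eqref{eq:penalty_ex_noncompact}), the integral $\int_t^\cdot\bar\pi_s\sigma_s\,dW^{1,\eta}_s$ is a true $\meq^\eta$--martingale, so it drops out of the conditional expectation. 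Step~1 together with $g(s,\bar\pi_s,\eta_s)\ge0$ then gives $\e^{\meq^\eta}[U(X^{\bar\pi}_T,T)\,|\,\f_t]+\gamma_{t,T}(\meq^\eta)\ge U(x,t)$ for every $\meq^\eta\in\q_{t,T}$, with equality for $\eta=\bar\eta$ (where $g(s,\bar\pi_s,\bar\eta_s)\equiv0$); hence $\essinf_{\eta\in\q_{t,T}}\big(\e^{\meq^\eta}[U(X^{\bar\pi}_T,T)\,|\,\f_t]+\gamma_{t,T}(\meq^\eta)\big)=U(x,t)$, so the supremum in \eqref{mainex_optpb} is $\ge U(x,t)$; also $\bar\pi\in\mathcal{A}^x_t$ because $\int_t^\cdot\bar\pi_s^2\sigma_s^2\,ds\le\int_t^\cdot\hat\lambda_s^2\,ds<\infty$ a.s.\ and $X^{\bar\pi}>0$. \textbf{Upper bound (freeze $\eta=\bar\eta$).} By Novikov's criterion (using $|\bar\eta_s|\le|\hat\lambda_s|$ and the assumed exponential integrability of $\hat\lambda$), $D^{\bar\eta}$ is a $\hat\mep$--martingale, so $\meq^{\bar\eta}\sim\hat\mep$ on $\f_T$, and $\gamma_{0,T}(\bar\eta)<\infty$ gives $\meq^{\bar\eta}\in\q_{t,T}$. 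Under $\meq^{\bar\eta}$ one has $\hat\lambda_s+\bar\eta^1_s=\bar\pi_s\sigma_s$, whence for any $\pi\in\mathcal{A}^x_t$ the wealth ratio (both wealths started from $x$ at time $t$) equals $X^\pi_r/X^{\bar\pi}_r=\mathcal{E}\big(\int(\pi_s-\bar\pi_s)\sigma_s\,dW^{1,\bar\eta}_s\big)_r$, a positive $\meq^{\bar\eta}$--local martingale, hence a supermartingale; conditional Jensen applied to $\ln$ yields $\e^{\meq^{\bar\eta}}[\ln(X^\pi_T/X^{\bar\pi}_T)\,|\,\f_t]\le0$, i.e.\ $\e^{\meq^{\bar\eta}}[U(X^\pi_T,T)\,|\,\f_t]\le\e^{\meq^{\bar\eta}}[U(X^{\bar\pi}_T,T)\,|\,\f_t]$. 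On the other hand, by Step~1 with $g(s,\bar\pi_s,\bar\eta_s)\equiv0$ (and $\int\bar\pi_s\sigma_s\,dW^{1,\bar\eta}_s$ a true martingale), $\e^{\meq^{\bar\eta}}[U(X^{\bar\pi}_T,T)\,|\,\f_t]+\gamma_{t,T}(\meq^{\bar\eta})=U(x,t)$; combining, $\e^{\meq^{\bar\eta}}[U(X^\pi_T,T)\,|\,\f_t]+\gamma_{t,T}(\meq^{\bar\eta})\le U(x,t)$ for all $\pi$, so the supremum in \eqref{mainex_optpb} is $\le U(x,t)$.

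The two estimates show the supremum in \eqref{mainex_optpb} equals $U(x,t)$, which proves \eqref{mainex_optpb}; moreover $\e^{\meq^{\bar\eta}}[U(X^{\bar\pi}_T,T)\,|\,\f_t]+\gamma_{t,T}(\meq^{\bar\eta})=U(x,t)$ coincides both with the essential infimum over $\eta$ of the objective at $\pi=\bar\pi$ and with the essential supremum over $\pi$ of the objective at $\eta=\bar\eta$, so $(\bar\pi,\bar\eta)$ is a saddle point. I expect the main difficulty to lie in the upper bound: for a general admissible $\pi$ the wealth--related stochastic integrals need not be true $\meq^{\bar\eta}$--martingales, so one cannot simply take expectations in the Step~1 identity and must instead route the comparison through the supermartingale property of the numeraire--portfolio ratio $X^\pi/X^{\bar\pi}$. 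Related care is needed on $\{\delta_s=0\}$, where the penalty degenerates and one first observes that only strategies vanishing there contribute to the essential supremum, and in checking that all conditional expectations that occur are well defined in $[-\infty,\infty)$ (using $\ln^+y\le y$ for $y>0$ and the integrability built into $\q_{t,T}$ and into the exponential bound on $\hat\lambda$).
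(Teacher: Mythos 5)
Your proof is correct and follows essentially the same route as the paper's: the pointwise Lagrangian $g(s,\pi,e)$ is exactly the drift of the paper's process $N^{\pi,\eta}$, the completion-of-squares giving $g(s,\bar\pi_s,e)\ge 0$ and $g(s,\pi,\bar\eta_s)\le 0$ reproduces the paper's sub/martingale computation, and the upper bound via log-optimality of $\bar\pi$ under $\meq^{\bar\eta}$ is the same argument (you make explicit, via the supermartingale ratio $X^\pi/X^{\bar\pi}$ and Jensen, the step the paper delegates to Karatzas et al.). The only cosmetic difference is that the paper derives $\e^{\meq^{\bar\eta}}[\int_0^T\hat\lambda_s^2\,ds]<\infty$ from the exponential moment assumption by H\"older, whereas you take it from the stated hypothesis $\gamma_{0,T}(\bar\eta)<\infty$, which is legitimate.
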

			
	The above result implies that the utility random field $U(x,t)$, given in \eqref{mainex}, and the penalty function $\gamma_{t,T}$ in \eqref{eq:penalty_ex_noncompact} constitute a robust forward criterion. For comparison, recall (cf. \cite{zari10}) that the random field
				\bqn
					U(x,t)=\ln x-\frac{1}{2}\int_0^t\hat\lambda^2_sds,\quad t\ge 0, x\in\R_+,\label{portus}
				\eqn
	constitutes a standard (non-volatile) forward criterion in the reference market $\hat\mep$ with market price of risk $\hat\lambda_t$, $t\ge 0$. We will see below that the above dynamics may be deduced by analysing the dual field, see \eqref{logeq} or \eqref{hjb}. However, the proof below is carried out directly in the primal domain.
	
	  \begin{proof}
	  	Fix $0\leq t\leq T<\infty$. To alleviate the notation, let $L_t=\int_0^t \hat\lambda_u d\hat W_u$. We have, with $1/p+1/q=1$ and $1/\tilde p+1/\tilde q=1$,
		\begin{equation*}
			\begin{split}
			\e^{\meq^{\bar \eta}}\Big[\int_0^T\hat\lambda_s^2ds\Big] & = \hat\e\left[D^{\bar\eta}_T\int_0^T\hat\lambda_s^2ds\right] = \hat\e\left[D^{\bar\eta}_T \langle L\rangle_T ds\right]\leq \left(\hat\e[(D_T^{\bar \eta})^p]\right)^{1/p}\left(\hat\e[\langle L\rangle_T^q]\right)^{1/q}\\
			&\leq \left(\hat\e\left[\mathrm{e}^{p\tilde pL_T-\frac{p^2\tilde p^2}{2}\langle L\rangle_T}\right]\right)^{\frac{1}{p\tilde p}} \left(\hat\e\left[ \mathrm{e}^{\kappa \langle L\rangle_T}\right]\right)^{\frac{1}{p\tilde q}} \left(\hat\e[\langle L\rangle_T^q]\right)^{1/q}<\infty,
			\end{split}
		\end{equation*}
		where we took $\tilde p<2\kappa$ and $p>1$ such that $\tilde q\left(\frac{p^2\tilde p}{2}-\frac{p}{2}\right)=\frac{p\tilde p(p\tilde p-1)}{2(\tilde p -1)}=\kappa$. It follows that $\gamma_{t,T}(\meq^{\bar\eta})<\infty$.
		Let
			\bq
				N^{\pi,\eta}_u:=U(X^\pi_u,u)+\int_t^u\frac{\delta_s}{2}|\eta_s|^2ds, \quad u\ge t.
			\eq
		Then, it suffices to show that $\e^{\bar\eta}\big[N^{\pi,\bar\eta}_T|\f_t\big]\le N^{\pi,\bar\eta}_t$, for all $\pi\in\mathcal{A}^x_t$, and that $\e^{\eta}\big[N^{\bar\pi,\eta}_T|\f_t\big]\ge N^{\bar\pi,\eta}_t$, for all $\eta\in\q_{t,T}$. For simplicity, and w.l.o.g., we show the claim in the case $t=0$. For $\pi\in\mathcal{A}^x$, the wealth process satisfies
			\bq
				dX^\pi_t=\pi_tX_t^\pi\sigma_t\left[\left(\hat\lambda_t+\eta^1_t\right)dt+dW^\eta_t\right],
				\quad t\le T,\quad X^\pi_0=x,
			\eq
		where $W^\eta_t$ is a Brownian motion under $\meq^\eta$. Due to the form of $U(x,t)$ and $\bar\pi$, a straight-forward application of It\^o's Lemma yields
		\bqq
			dN^{\bar\pi,\eta}_t
			&=&\frac{\delta_t}{1+\delta_t}\hat\lambda_t\left[\left(\hat\lambda_t+\eta^1_t\right)dt+dW^{\eta}_t\right]
				-\frac{1}{2}\left(\frac{\delta_t}{1+\delta_t}\hat\lambda_t\right)^2dt\\
				&& -\frac{1}{2}\frac{\delta_t}{1+\delta_t}\hat\lambda_t^2dt
				+\frac{\delta_t}{2}\left[\left(\eta^1_t\right)^2+\left(\eta^2_t\right)^2\right]dt\\
			&=& \frac{\delta_t}{1+\delta_t}\hat\lambda_t\eta^1_t dt
				+\frac{1}{2}\frac{\delta_t}{(1+\delta_t)^2}\hat\lambda_t^2dt
				+\frac{\delta_t}{1+\delta_t}\hat\lambda_tdW^{\eta}_t
				+\frac{\delta_t}{2}\left[\left(\eta^1_t\right)^2+\left(\eta^2_t\right)^2\right]dt\\
			&=&\frac{\delta_t}{2}\left[\left(\frac{\hat\lambda_t
				+\left(1+\delta_t\right)\eta^1_t}{1+\delta_t}\right)^2
				+\left(\eta^2_t\right)^2\right]dt
				+\frac{\delta_t}{1+\delta_t}\hat\lambda_t dW^{\eta}_t.
		\eqq
	Note that the quantity $\delta_t/(1+\delta_t)\in (0,1)$, so by the definition of $\gamma_{t,T}$ in \eqref{eq:penalty_ex_noncompact}, the process $\int_0^t\frac{\delta_s}{1+\delta_s}\hat\lambda_s dW^{\eta}_s$ is a martingale under $\meq^\eta$. It follows that $N^{\bar\pi,\eta}_t$ is a submartingale for all $\eta\in \q_{0,T}$ and a martingale for $\bar\eta$ as specified in \eqref{etastar}. On the other hand, it holds that
		\bqq
			U(X^\pi_T,T)+\int_0^Tg_s(\bar\eta_s)ds
			&=&\ln X^\pi_T-\int_0^T\frac{1}{2}\frac{\delta_s}{1+\delta_s}\hat\lambda_s^2
			-\frac{1}{2}\frac{\delta_s}{(1+\delta_s)^2}\hat\lambda_s^2ds\\
			&=&\ln X^\pi_T-\frac{1}{2}\int_0^T\left[\frac{\delta_s}{1+\delta_s}\hat\lambda_s\right]^2ds
			\;\;=\;\;\ln X^\pi_T-\frac{1}{2}\int_0^T\left(\hat\lambda_s+\bar\eta^1_s\right)^2ds.
		\eqq
		Since $ \e^{\bar\eta}\big[\ln X_T^{\pi}\big]\leq  \e^{\bar\eta}\big[\ln X_T^{\bar\pi}\big]$ for any strategy $\pi\in\mathcal{A}^x$, we conclude that
		\bq
			\e^{\bar\eta}\big[N^{\pi,\bar\eta}_T\big]
			\le \e^{\bar\eta}\big[\ln X_T^{\bar\pi}\big]-
			\e^{\bar\eta}\left[\frac{1}{2}\int_0^T\left(\hat\lambda_s+\bar\eta^1_s\right)^2ds\right]
			=\ln x=N_0,
		\eq		
	where the equality follows by a direct computation (see, also, p. 721 in \cite{karatzas91}).
	\end{proof}

	The investor's optimal behaviour described in Proposition \ref{buk} corresponds to strategies used in practice by some of the large fund managers. Specifically, the strategy, characterised by the  optimal fraction of wealth to be invested in the risky asset in \eqref{etastar}, is a fractional Kelly strategy. The investor invests in the  growth optimal (Kelly) portfolio corresponding to her best estimate of the market price of risk $\hat\lambda$. However she is not fully invested but instead chooses a leverage proportional to her trust in the estimate $\hat\lambda$. If $\delta_t\nearrow\infty$ (infinite trust in the estimation), then $\bar\pi_t\nearrow\hat\lambda_t/\sigma_t$ which is the Kelly strategy associated with the most likely model $\hat \mep$. On the other hand, if $\delta_t\searrow 0$ (no trust in the estimation), then $\bar\pi_t\searrow 0$ and the optimal behaviour is to invest nothing.
	
	We stress that $\hat\lambda$ and $\delta$ are the investor's arbitrary inputs. They might be data driven and come from an elaborate dynamic estimation procedure, be expert driven or simply come from a black box. In particular, there is no assumption that $\hat\lambda$ is a good estimate of the true market price of risk $\lambda$. In fact the latter never appears in the problem. It is crucial that the investor's utility function \eqref{mainex} evolves in function of the \emph{investor's perception of market} leading to a time-consistent behaviour solving \eqref{mainex_optpb}. This seem to capture well the investment practice -- in reality an investor never knows the ``true" model. Instead, she is likely to build (and keep updating) her best estimate thereof and act on it. This, as shown in Proposition \ref{buk}, can still lead to time-consistent optimal investment strategy.		
		In practice, the leverage has often a risk interpretation, e.g.\ it is adjusted to achieve a targeted level of volatility for the fund. In our framework, it is interpreted in terms of confidence $\delta$ in the estimate $\hat \lambda$. In practice, the leverage is adjusted rarely in comparison to the dynamic updating of the estimate $\hat \lambda$. Similarly, in our framework, the trust in one's estimation methods is likely to be adjusted on a much slower scale than the changes to the estimate itself.

		We note that the structure of the optimal investment strategy relies on the logarithmic form of the utility field \eqref{mainex}. Hence, on a finite time interval, one may expect a similar type of behaviour to be optimal also for some classical ambiguity averse utility maximization problem with logarithmic utility. The robust forward criterion in Proposition \ref{buk} presents, however, in many aspects the \emph{simplest} way of quantifying preferences corresponding to the investment behaviour in \eqref{etastar}. For example, these preferences are non-volatile while the value field associated with a deterministic utility function at a fixed horizon $T$, would be volatile\footnote{For a comparison with the variational criterion featuring (deterministic) logarithmic utility at some fixed horizon $T>0$, we refer to \cite{hernandez} for a stochastic factor model and Theorem 4.5 in \cite{laeven12} for the non-Markovian case.}. As further discussed in Section \ref{logexex}, robust forward criteria provides an alternative tool for the study of the link between investment strategies and the dynamic behaviour of the associated preferences. Proposition \ref{buk} illustrates this by providing, for a very popular investment strategy, the specification of compatible preferences with a particularly simple dynamic structure. \\

		\begin{rem}\label{fanta}
			For $\delta_t\equiv\delta$, the penalty function defined in \eqref{eq:penalty_ex_noncompact} corresponds to the entropic penalty function $\gamma(\meq)=\delta H(\meq|\hat\mep)$. For each fixed horizon $T$, the investment problem can then be rewritten as (cf. Remark 4.1 in \cite{tre}),
			\bqq
				u(x,0)&=&\sup_\pi\inf_\meq\Big(\e^\meq\left[U(X^\pi_T,T)\right]+\delta H(\meq|\hat\mep)\Big)\\
				&=&\sup_\pi -\delta \ln \e^{\hat\mep}\left[e^{-\frac{1}{\delta}U(X^\pi_T,T)}\right].
			\eqq
			Consequently, the problem is equivalent to a standard utility maximization problem with respect to the modified utility function $\tilde U(x,T)=-e^{-\frac{1}{\delta}U(x,T)}$ in the market $\hat\mep$. Therefore, it is then more natural to consider utility from intertemporal consumption (cf. \cite{bordigoni07,chen02,faidi11,jeanblanc12,lazrak03,skiadas03}). Note, however, that $\delta_t$, $t\ge 0$, is non-constant in our setting and, thus, the situation is different. 
		\end{rem}	
			
		\subsection{Criteria leading to time-inconsistent optimal investment strategies}\label{sec:ex:non_dc}	
		
		We turn now to an example of robust forward criteria which lead to time inconsistent optimal investment strategies. This complements our discussion in Sections \ref{secpenalty} and \ref{sec:time_consistent}. Lack of time-consistency of optimal strategies will be inherited from lack of dynamic-consistency of penalty functions. Here, for illustrative purposes, we develop an example where \eqref{kol} is violated in a rather unrealistically simplistic way.
		
		We work in the setting of Section \ref{sec:log_ex}. We set $\hat\lambda\equiv 0$ and we fix a family of bounded random variables $(\lambda^{t,T})$ with $0\leq t\leq T$, $T\geq 0$, each $\lambda^{t,T}$ being $\f_t$--measurable. Then we put
				\bqn 
		\gamma_{t,T}(\meq^\eta):=\left\{
			\begin{array}{ll}
				-\frac{T-t}{2}(\lambda^{t,T})^2 & \textrm{ if } (\eta^1_u,\eta^2_u)=(\lambda^{t,T},0),\ t\leq u\leq T,\\
				+\infty & \textrm{ otherwise.}
			\end{array}
			\right.
		\eqn
Clearly this is a degenerate and artificial example. At any time $t$, looking to invest on $[t,T]$, the investor believes only one model is feasible and gives it a well chosen negative penalty. The choice of this model changes arbitrary with $t$ and $T$ and there is no consistency requirement. Consider the extreme situation when all $\lambda^{t,T}$ are constant and $T$ fixed. Then, at time zero, the investor picks possibly different models which she will chose to believe when making investment decisions at $t$ for horizon $[t,T]$. It it is not surprising that this may lead to time-inconsistent investment strategies. However the flexibility of fixing the penalty $\gamma_{t,T}$ means that the dynamic-consistency of value functions, \eqref{eq:classicalDC} on $[0,T]$ or \eqref{eq:dynamic_consistency} in general, may be preserved.

		We let $U(x,t):= \ln x$ and $\eta^{t,T}_u:=0$ for $u<t$ and $\eta^{t,T}_u:=(\lambda^{t,T},0)$ for $t\leq u\leq T$. Note that by definition $\q_{t,T}=\{\meq^{\eta^{t,T}}\}$ so, using the classical results on log utility maximisation, we have
		\begin{equation*}
\begin{split}
		u(\xi,t,T)=& \ln \xi +\frac{1}{2}\e^{\meq^{\eta^{t,T}}}\left[\int_0^t |\eta_u^{t,T}|^2 du\Big|\f_t\right]+\gamma_{t,T}(\meq^{\eta^{t,T}})\\
		= & \ln \xi +\frac{1}{2}(T-t)(\lambda^{t,T})^2 +\gamma_{t,T}(\meq^{\eta^{t,T}}) = \ln \xi = U(\xi,t), \quad t\leq T,
\end{split}
\end{equation*}
and we conclude that $(U,\gamma)$ is a robust forward criteria and the value function is dynamically-consistent. Meanwhile, the resulting optimal strategy, at time $t$ when investing for the horizon $[t,T]$ is $\bar \pi^{t,T}_u = \frac{\lambda^{t,T}}{\sigma_t}$, $t\leq u\leq T$. Even when considering classical (robust) portfolio optimisation on $[0,T]$ these may be time inconsistent in the sense that $\bar\pi^{t,T}_u\neq \bar\pi^{u,T}_u$ for $t\leq u\leq T$. In our context of forward criteria, when $T$ is not fixed, the ``optimal strategy'' may be further \emph{horizon-inconsistent} in the sense that we may have $\bar \pi^{t,T}_t\neq \bar\pi^{t,T_1}_t$ for $t\leq T<T_1$. Hence, the ``optimal strategy" is not really a well defined concept since it may depend not only on when we make the decision but also on which horizon we want to consider. This is due to fundamental inconsistencies in the beliefs about feasible market models and violation of \eqref{kol}. The latter is in fact a non-trivial requirement. For example, penalty functions associated to convex risk measures via \eqref{penaltynew} do not satisfy \eqref{kol} in general. Whether \eqref{kol} is justified economically and empirically is one of interesting open questions resulting from our work, see also Remark 3.5 in Schied \cite{schied}.
				
		\subsection{On some important classes of robust forward criteria}\label{logexex}

		We discuss now the structure of robust forward criteria.
		Within the setup of Section \ref{sec:log_ex}, we describe the issue of non-uniqueness of robust forward criteria for given initial preferences. Examples of choices of specific classes of criteria where the uniqueness may be recovered are provided. 
		Particular attention is paid to the class of so called \emph{non-volatile} criteria, to which the main example studied in Section \ref{sec:log_ex} belongs.
		Here, we present a formal discussion motivating the definition of this class and illustrating its main features.

		\subsubsection{The structure of robust forward criteria}
		
		In the model-specific (non--robust) case, the robust forward performances are not uniquely specified from the initial condition. This is due to the flexibility of the volatility structure.		
		Before turning to the robust case, we recall the features of this structure. It holds that a random field is a (standard) forward criterion if, for all times $t\ge0$, it satisfies the SPDE
		\bqn
			dU(x,t)= \frac{1}{2}\frac{\left|\lambda_tU_x(x,t)+\sigma_t\sigma_t^+a_x(x,t)\right|^2}{U_{xx}(x,t)}dt+a(x,t)dW_t, \label{spdefirstf}
		\eqn
		equipped with the initial condition $U(x,0)=u_0(x)$. Similarly, the value function corresponding to the classical utility maximization problem satisfies (under some regularity conditions) the Backward SPDE \eqref{spdefirstf} equipped with the terminal condition $U(x,T)=U(x)$. We refer, respectively, to \cite{zarispde} and \cite{mania08} for a detailed presentation of these equations. A solution to the BSPDE \eqref{spdefirstf} equipped with a terminal condition is a pair of parameter-dependent processes $U(x,t)$ and $a(x,t)$ which are simultaneously obtained when solving the equation. Under some regularity conditions, the solution is unique (cf. \cite{mania08}). However, the presence of the volatility $a(x,t)$ implies that there might exist multiple stochastic terminal conditions, for all of which the associated solution satisfies $U(x,0)=u_0(x)$. 
		Put differently, starting from $u_0(x)=U(x,0)$ and solving forward in time we might arrive at different $U(x,T)$ depending on the choice of $a(x,t)$. It follows that the forward SPDE \eqref{spdefirstf} might have multiple solutions which are catalogued by their volatility $a(x,t)$. We refer to Section 1 in \cite{nad} for further discussion and axiomatic motivation. Likewise, even with a fixed penalty function, in order to specify robust forward criteria uniquely, we expect the need to impose further constraints. These could be either on the form of the primal/dual field or on the choice of volatility structure. We discuss both below.

	\subsubsection{Imposing constraints on the dual field}\label{sec:new_formal}

		We start with a formal discussion of a logarithmic example. Namely, we assume that $V(y,t)$ admits the representation
			\bqn
				V(y,t)=-\ln y+\int_0^tb_sds+\int_0^ta_s\cdot d\hat W_s,\label{log}
			\eqn
		for some processes $b_t$ and $a_t$ which are independent of $y$. Further, we assign to the measure $\meq^\eta$ (cf. \eqref{d}) a penalty given by\footnote{We recall that according to \cite{delbaen10}, it holds within a Brownian filtration that a dynamic penalty function is time-consistent (cf. \eqref{kol}) if and only if it is representable as in \eqref{penalty} for some $g_t(\cdot)$.}
		\bqn
			\gamma_{t,T}(\meq):=\e^\meq\bigg[\int_t^Tg_u(\eta_u)du\bigg|\f_t\bigg], \label{penalty}
		\eqn
		for some function $g:[0,\infty)\times\R^2\to[0,\infty)$, such that $g_t(\cdot)$ is convex, lower semicontinuous and satisfies the so called coercivity condition that $g_t(\eta)\ge-a+b|\eta|^2$ for some constants $a$ and $b$ (cf. (8.6) in \cite{tre}). For example, the choice of $g_t(\eta)=|\eta|^2+\infty\mathbf{1}_{\{|\eta|>\overline g\}}$ for some constant $\overline g>0$, ensures that $\gamma_{t,T}$ satisfies both Assumptions \ref{nonsing2} and \ref{gammatc}.\footnote{This follows e.g.\ from Lemma 3.1 in \cite{hernandez2} and the fact that $\q_{t,T}$ is weakly compact if and only if it is closed in $L^0$, see also discussion below Assumption \ref{nonsing2} above.} We let $\q=\cap_{T>0}\q_{0,T}$.

		Let $\mathcal{P}$ denote the set of all $\mathbb{F}$-progressively measurable processes $(\nu_t)_{t\ge 0}$ such that $\int_0^T\nu^2_tdt<\infty$ a.s.\ for all $T>0$. We assume that $(\hat\lambda_t)$ is in $\mathcal{P}$. For $\nu\in\mathcal{P}$, let 		
		\bqn
			Z^\nu_t:=\mathcal{E}\left(-\int\hat\lambda_sd\hat W^1_s-\int\nu_sd\hat W^2_s\right)_t.\label{z}
		\eqn 
		We note that $\z^e=\{Z^\nu: \textrm{$\nu\in\mathcal{P}$ and $Z^\nu_t$ is a $\hat\mep$-martingale on $[0,\infty)$}\}$ and write $\nu\in\z^e$ for $Z^\nu\in\z^e$. In particular, the assumption of NFLVR on finite horizons implies that $\nu_t\equiv 0\in\z^e$. 		
		According to Lemma \ref{charac}, in order for $V(y,t)$ and $\gamma_{t,T}$ to be self-generating, it then suffices\footnote{The stronger assumptions on $\gamma_{t,T}$ in Lemma \ref{charac} are used only to argue the necessity.} that for all $\nu\in\z^e$ and $\eta\in\mathcal{Q}$, the process
		\bqn
			M^{\eta\nu}_t:=V\Big(yZ_t^\nu/D_t^\eta,t\Big)+\int_0^tg(\eta_s)ds\label{m}
		\eqn
	is a $\meq^\eta$-sub-martingale, and there exist $\nu^*$ and $\eta^*$ for which it is a martingale. We recall that $\meq^\eta$ is given by $\frac{\mathrm{d}\meq^\eta}{\mathrm{d}\mep}|_{\f_t}=D^\eta_t$, with $D^\eta_t$ specified in \eqref{d}. 
		A straight-forward application of It\^o-Ventzell's formula, using that $V_{yy}(y,t)=1/y^2$, and formal minimization over $\nu_t$, yields that in order for $M^{\eta\nu}_t$ to satisfy this, the following relation must hold between $a_t$ and $b_t$: 		
			\bqn
				b_t=-\inf_\eta\bigg\{g(\eta)+\frac{\left(\eta^1+\hat\lambda_t\right)^2}{2}
				+a_t\cdot\eta\bigg\}, \quad a.s., \ t\geq 0. \label{logeq}
			\eqn
	We see that a given initial condition, a fixed penalty function $g(\cdot)$ and a volatility structure $a_t$ typically lead to a unique robust forward criteria: the drift is then specified via \eqref{logeq}. In consequence, for a given initial condition and specific penalty structure, a unique criterion may only be pinned down based on further specification of the dynamic properties of $U(x,t)$. 
				
		To conclude let us comment on another type of restriction on $V$. In many situations we might only be interested in solutions which are Markovian. For example, within a (Markovian) stochastic factor model, we could require that the utility field is a deterministic function of the underlying factors. This function must then solve a specific equation, closely related to the HJB equation associated with the classical value function within the same factor model. However, in the forward setting, the equation has to be solved forwards in time and is thus ill-posed. We refer to \cite{nad} for a study of such criteria in a model-specific setup. 
		
	\subsubsection{Imposing constraints on the volatility structure}
	
		We consider now constraints expressed in terms of the volatility structure.
		More specifically, we consider the class of criteria for which the volatility of the dual field (cf. \eqref{spdefirstf} and \eqref{log}) is identically zero; we refer to this class as \emph{non--volatile criteria}. Specifically, we assume that
		\bqn
			dV(y,t)=V_t(y,t)dt.\label{non-vol-def}	
		\eqn
	For standard forward criteria, this additional assumption specifies an interesting class of preferences; see \cite{mikepure,zari10}. In particular, we refer to \cite{mikepure} for a detailed discussion of the assumption \eqref{non-vol-def}. 
	Similarly to the example in Section \ref{sec:new_formal}, a straight-forward application of It\^o-Ventzell's formula and formal minimization over $\nu_t$, yields that in order for $M^{\eta\nu}_t$ (cf. \eqref{m}) to be a sub-martingale for each choice of $\nu$ and $\eta$ and a martingale at optimum, the random convex function $V(y,t)$ must solve the equation
			\bqn
				V_t(y,t)+\inf_{\eta}\Big\{
				g(\eta)+\frac{y^2V_{yy}(y,t)}{2}				\left(\eta^1_t+\hat\lambda_t\right)^2\Big\}=0,\quad a.s., t\ge 0.\label{hjb}
			\eqn
		
		This is a random equation, satisfied pathwise by the parameter-dependent process $V(y,t)$. The simplification from SPDE to a random PDE results from the restriction to non--volatile criteria. In particular, and in contrast to the SPDE case discussed above, we would expect that under suitable regularity assumptions \eqref{hjb} admits a unique solutions.
			
	Equation \eqref{hjb} might be viewed as a (dual) Hamilton-Jacobi-Bellman equation. In particular, a verification theorem stating that every well-behaved (convex) solution to \eqref{hjb} constitutes a robust forward criterion might be proven. However, to prove existence or explicitly solve this equation is hard. 
		In order to illustrate this, consider the case of no model-uncertainty, which corresponds to $g(\eta)=\infty$, $\eta\neq 0$. Then, equation \eqref{hjb} reduces to
			\bqn
				V_t(y,t)+\frac{\hat\lambda_t^2}{2}y^2V_{yy}(y,t)=0\quad a.s.,\quad t\ge 0.\label{nonrub}
			\eqn
		This equation characterizes standard non-volatile criteria in a model with market price of risk $(\hat\lambda_t)$. Equation \eqref{nonrub}, see \cite{mikepure,zari10}, is closely related to the (ill-posed) backward heat equation whose solutions only exist for a specific class of initial conditions, as characterised by Widder's theorem. Equation \eqref{hjb} inherits difficulties related to the equation being ill--posed but in addition is highly non-linear. Further, we have to ensure that its solution is adapted. 
		
		Note that in our main example, studied in Section \ref{sec:log_ex} above, the criterion \eqref{mainex} is logarithmic as well as non-volatile, and the appropriate form of the drift-term could, formally, be obtained by substituting the dual Ansatz $V(y,t)=-\ln y+\int_0^tb_sds$ into either of equations \eqref{logeq} or \eqref{hjb}. This is a rare case of an interesting and explicit solution to these equations.	We leave the analysis of \eqref{hjb} as a challenging problem open for further research. 
				
	\subsection{Equivalent standard (non-robust) forward criteria}\label{sec:equiv}
			
		We conclude with some remarks on the existence of equivalent forward criteria within a non--robust setting. To this end, observe that the optimal strategy $\bar\pi$ in \eqref{etastar} can also be interpreted as the Kelly-strategy associated with an auxiliary market with market price of risk $\bar\lambda_t$ given by 		
		\bqn
		 	\bar\lambda_t:=\hat\lambda_t+\bar\eta^1_t
		 	=\frac{\delta_t}{1+\delta_t}\hat\lambda_t. \label{mpr2}
		\eqn
		That is, the market price of risk $\hat\lambda$ that the investor thinks most likely, adjusted by the investor's trust in that estimation. This is closely related to the fact that the existence of the saddle-point $(\bar\pi,\bar\eta)$ implies that the optimal investment associated with the robust criterion \eqref{eq:penalty_ex_noncompact} and \eqref{mainex} coincides with the optimal investment corresponding to the non-volatile standard forward criterion (cf. \eqref{portus}),
			\bqn
				U(x,t)=\ln x-\frac{1}{2}\int_0^t\bar \lambda^2_sds,\quad t\ge 0, x\in\R_+,\label{tocome}
			\eqn
		specified in the market with the market-price of risk $\bar\lambda$. This is clear from the proof of Proposition \ref{buk}. 
		Such an equivalence can be established in far more generality. Indeed, given the existence of a saddle-point, the robust forward criterion, consisting of the pair $U(x,t)$ in \eqref{mainex} and $\gamma_{t,T}$ in \eqref{penalty}, ranks investment strategies in the same way as does the standard forward criterion,
	\bqn
		\tilde U(x,t)\;:=\; U(x,t)+\int_0^tg_s(\bar\eta_s)ds,\label{neutral}
	\eqn
	considered in the auxiliary market with the market price of risk $(\bar\lambda_t)$ with $\bar\lambda_t=\hat\lambda_t+\bar\eta_t^1$. 		
	Further, a formal application of Bayes' rule implies that the optimal strategy associated with the criterion \eqref{neutral}, is also optimal for the following forward criterion specified in the \emph{reference} market:
			\bqn
				D^{\bar\eta}_t \tilde U(x,t)\;=\;
				D^{\bar\eta}_t\Big(U(x,t)+\int_0^tg_s(\bar\eta_s)ds\Big).\label{tank}
			\eqn 
	Note that if $U(x,t)$ is a non-volatile criterion, $D^{\bar\eta}_t \tilde U(x,t)$ is in general volatile (cf. Theorem 4 in \cite{trob} for examples). Indeed, the assumption of non-volatility is market specific. Here (cf. \eqref{non-vol-def}), the non-volatility requirement is placed on the random field associated with the \emph{robust} criterion. If a saddle-point exists, the criterion is therefore non-volatile in the market specified by the optimal measure while the corresponding criterion in the reference market is volatile.

	For the class of robust forward criteria for which the above formalism can be made rigorous, the following holds: if the robust forward criterion admits an optimal strategy, then that strategy is optimal also for a specific standard (non-robust) forward criterion viewed in the reference market. Naturally, the latter criterion is defined in terms of the optimal $\bar\eta_t$, which is part of the solution to the robust problem and not a priori known. Nevertheless, on a more abstract level, this implies that \emph{viewed as a class of preference criteria, forward criteria can be argued to be 'closed' under the introduction of a certain type of model uncertainty}. For a similar conclusion in terms of the use of different numeraires, see Theorem 2.5 in \cite{karoui13} or Section 5.1 in \cite{karoui12}. This should also be compared to \cite{skiadas03}, where it was shown that to invest with respect to a given stochastic differential utility combined with a certain model uncertainty is equivalent to considering a modified stochastic differential utility within the reference model (therein, entropic penalty functions were considered but for a Brownian filtration and under some additional boundedness assumptions, the results can be extended also to variational preferences). For stochastic differential utilities as well as for forward criteria, the underlying reason is that the notion is general enough to allow for stochastic preferences. In particular, use of deterministic utility functions under model uncertainty is (under various conditions) equivalent to the use of specific stochastic utility functions for a fixed model. 
		
	The above implies that the preferences corresponding to (most) robust forward criteria may be embedded within the (standard) class of forward criteria. Nevertheless, we believe the example studied in this section illustrates that the notion of robust forward criteria is of interest. The aim of these criteria and the associated specific modelling of model uncertainty, is to disentangle the impact of the preferences originating from risk and model-ambiguity, respectively. A related and more involved question is under what conditions a given (volatile non-robust) forward criterion can be written as a non-volatile robust forward criterion with respect to some non-trivial penalty function. This question is left for future research. We also remark that the analysis herein and, thus, the above discussion, is restricted to measures equivalent to $\mep$. Considering absolutely continuous measures introduces further complexity (cf. \cite{schied} for the static case) but should not alter the main conclusions. In contrast, considering a larger set of possibly mutually singular measures would require new insights, see \cite{denis2013,nutz13}.

\section{Proofs}\label{sec:proofs}

	\subsection{Proof of Theorem \ref{main} and Proposition \ref{prop:existence}}\label{mainproof3}
	
	As discussed in Section \ref{sec:equivalence}, the proof of Theorem \ref{main} makes us of arguments and results presented in \cite{schied} and \cite{gordan}, respectively. To this end, we follows the notation in \cite{gordan} closely. In Section \ref{subsec:aux}, we prove conjugacy relations and existence of an optimizer for an auxiliary $\f_0$-measurable problem which we introduce below. In Section \ref{proof:reduction}, Theorem \ref{main} and Proposition \ref{prop:existence} are proven by reducing the general problem to the auxiliary one. 

\subsubsection{Notation}

	We let $0\le t\le T<\infty$ with $t$ and $T$ arbitrary, and $\kappa$ a random variable in $L^\infty_+(\f_t)$. We will typically consider $\kappa=\ind_A$, $A\in\f_t$, and use it to localise arguments to a set. We will also use the notation $Z_{t,T}\in\z^a_T$ to denote an element of the set $\{Z_{t,T}|Z\in\z^a_T\}$, and $Z\in\q_{t,T}$ to denote an element of the set $\{Z^\meq|\meq\in\q_{t,T}\}$. Unless stated otherwise, all the $L^p$-spaces, $p\in[0,\infty]$, are defined with respect to $(\Omega,\f_T,\mep|_{\f_T})$.

	We also let $\ket:=\left\{\int_t^T\pi_sdS_s:\pi\in \mathcal{A}_{bd}\right\}$ and $\cet:=\left(\ket-L^0_+\right)\cap L^\infty$. Note that the optimization over $\ket$ in \eqref{u3}, might be replaced by optimization over $\cet$. Then, for $\meq\in\q_{t,T}$, we introduce the function	
		\bq
			u^\meq_\kappa(\xi)=\sup_{g\in\cet}\e\left[\kappa\zetq U(\xi+g,T)\right],\qquad \xi\in L^\infty(\f_t).
		\eq
		
	Next, let $\mathcal{D}_{t,T}:=\big\{\zeta^*\in(L^\infty)^*:\langle\zeta^*,\zeta\rangle\le 0\textrm{ for all }\zeta\in\cet\big\}$ and, for $\eta\in L^1_+(\f_t)$, let $\deta:=\big\{\zeta^*\in\mathcal{D}_{t,T}:\langle\zeta^*,\xi\rangle=\langle\eta,\xi\rangle, \textrm{ for all }\xi\in L^\infty(\f_t)\big\}$.
	According to Lemma A.4 in \cite{gordan}, we have that
		\bqn\label{gordanlemma}
			\zeta^*\in\mathcal{D}_{t,T}\cap L^1_+
			\quad\textrm{if and only if}\quad
			\zeta^*=\eta\zet,
		\eqn 
	for some $\eta\in L^1_+(\f_t)$ and $\zet\in\z^a_T$. Note that the proof of this result uses that the market satisfies NFLVR on finite horizons. In turn, define the function  $\mathbb{V}^\meq_\kappa:\mathcal{D}_{t,T}\to(-\infty,\infty]$ by
		\bqn
			\mathbb{V}^\meq_\kappa(\zeta^*)
			:=\left\{\begin{array}{lcl}\e\left[\kappa\zetq V\left(\zeta^*/\big(\kappa\zetq\big),T\right)\right],&& \zeta^*\in L^1_+\;\textrm{and}\;\{\zeta^*>0\}\subseteq\{\kappa>0\}, \phantom{_\big|}\\
			\infty, && \textrm{otherwise}; \end{array}\right.\label{vvb}
		\eqn
	and the function $v^\meq_\kappa:L^1(\f_t)\to(-\infty,\infty]$ by
			\bqn
				v^\meq_\kappa(\eta):=\left\{\begin{array}{lcl}\inf_{\zeta^*\in\deta}\mathbb{V}^\meq_\kappa(\zeta^*),&& \eta\in L^1_+(\f_t), \phantom{_\big|}\\ \infty, && \eta\in L^1(\f_t)\setminus L^1_+(\f_t). \end{array}\right. \label{vb}
			\eqn	 
	Finally, we define the auxiliary value functions $u_\kappa:L^\infty(\f_t)\to(-\infty,\infty]$ by
	\bq
		u_\kappa(\xi)=\sup_{g\in\cet}\inf_{\meq\in\q_{t,T}}\e\Big[\kappa\Big(\zetq U(\xi+g,T)+\gamma_{t,T}(\meq)\Big)\Big],
	\eq
and $v_\kappa:L^1(\f_t)\to(-\infty,\infty]$ by
	\bq
		v_\kappa(\eta)=\inf_{\meq\in\q_{t,T}}\Big(v_\kappa^\meq(\eta)+\e\left[\kappa\gamma_{t,T}(\meq)\right]\Big).
	\eq

	\subsubsection{Results for the auxiliary value functions $u_\kappa$ and $v_\kappa$}\label{subsec:aux}

		We establish results for the $\f_0$-measurable value functions $u_\kappa$ and $v_\kappa$. Theorem \ref{main} and Proposition \ref{prop:existence} are then proven by reducing the problem to this case by taking expectations (cf. Section \ref{proof:reduction}). First, we consider the existence of a dual optimizer. \\

	\begin{prop}\label{additionallemma}
	Let $\eta\in L^1_+(\f_t)$. Then, there exists $(\bar\zeta^*,\bar\meq)\in\D^{\eta}_{t,T}\times\q_{t,T}$ such that 
		\bq
			v_\kappa(\eta)
			=
			\mathbb{V}_\kappa^{\bar\meq}(\bar\zeta^*)+\e\big[\kappa\gamma_{t,T}(\bar\meq)\big].
		\eq
	Moreover, the function $v_\kappa(\eta)$ is convex and lower semicontinuous with respect to the weak topology. 
		\end{prop}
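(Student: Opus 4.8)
The plan is to establish existence of a dual minimizer and the convexity/lower semicontinuity of $v_\kappa$ by combining (i) a minimax/compactness argument over the measure component $\meq\in\q_{t,T}$, using the standing weak compactness assumption on $\q_{t,T}$ from Assumption \ref{nonsing2}, and (ii) the known structure of the inner, model-specific dual problem, which is exactly the non-robust dual problem studied in \v{Z}itkovi\'c \cite{gordan} for the auxiliary utility field $\tilde U(\cdot,T)$ of \eqref{auxutility}. First I would fix $\eta\in L^1_+(\f_t)$ and record that for each $\meq\in\q_{t,T}$ the inner value $v_\kappa^\meq(\eta)=\inf_{\zeta^*\in\deta}\mathbb{V}_\kappa^\meq(\zeta^*)$ is, by Lemma A.4 of \cite{gordan} (the representation \eqref{gordanlemma}) and the non-singularity Assumption \ref{nonsing2}, the dual value of a non-robust utility maximisation problem with utility $\tilde U$; hence by the results of \cite{gordan} it is attained at some $\bar\zeta^*(\meq)=\eta Z_{t,T}(\meq)$ with $Z_{t,T}(\meq)\in\z^a_T$, is convex in $\eta$, and is lower semicontinuous. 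This handles the inner optimisation and gives the ``$\D_{t,T}^\eta$-component'' of the minimizer once the optimal measure is found.

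Next I would treat the outer infimum over $\meq\in\q_{t,T}$. The map $\meq\mapsto v_\kappa^\meq(\eta)+\e[\kappa\gamma_{t,T}(\meq)]$ should be shown to be convex (jointly in the appropriate sense) and weakly lower semicontinuous on $\q_{t,T}$: convexity and weak lower semicontinuity of $\meq\mapsto\e[\kappa\gamma_{t,T}(\meq)]$ are built into Definition \ref{penaltydef}(ii)--(iii), while for the term $v_\kappa^\meq(\eta)$ one argues lower semicontinuity along a weakly convergent minimising sequence $\meq_n\to\bar\meq$ by passing to the limit in the expression $\e[\kappa Z_{t,T}^{\meq_n}V(\eta/(\kappa Z_{t,T}^{\meq_n}),T)]$, using the uniform integrability of $\{Z U^-(x,T):Z\in\q_{t,T}\}$ from Assumption \ref{nonsing2} together with Fatou/Komlós-type arguments (and convexity of $V$) exactly as in Schied \cite{schied}. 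Since $\q_{t,T}$ is weakly compact by Assumption \ref{nonsing2}, a minimising sequence has a weakly convergent subsequence with limit $\bar\meq\in\q_{t,T}$, and lower semicontinuity gives $v_\kappa(\eta)=v_\kappa^{\bar\meq}(\eta)+\e[\kappa\gamma_{t,T}(\bar\meq)]$; combining with the inner attainment from the previous paragraph produces the pair $(\bar\zeta^*,\bar\meq)\in\D_{t,T}^\eta\times\q_{t,T}$ claimed.

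For the final assertion, convexity of $v_\kappa$ in $\eta$ follows because $v_\kappa$ is an infimum over $\meq$ of the functions $\eta\mapsto v_\kappa^\meq(\eta)+\e[\kappa\gamma_{t,T}(\meq)]$, each convex in $\eta$ — but an infimum of convex functions need not be convex, so here I would instead argue convexity directly from the definition of $v_\kappa^\meq$ as an infimum of the affine-in-$\zeta^*$ quantities $\mathbb{V}_\kappa^\meq$ over the convex sets $\D_{t,T}^\eta$ whose dependence on $\eta$ is affine, i.e. realise $v_\kappa$ globally as an infimum of a jointly convex function of $(\eta,\zeta^*,\meq)$ over a convex set, which is convex in $\eta$. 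Lower semicontinuity of $v_\kappa$ with respect to the $L^1$-topology on $\eta$ then follows from the attainment result (the value is a minimum of values $v_\kappa^{\bar\meq(\eta)}(\eta)+\e[\kappa\gamma_{t,T}(\bar\meq(\eta))]$) together with Fatou applied along an $L^1$-convergent sequence $\eta_n\to\eta$, using positivity of $\kappa Z_{t,T}$ and the uniform lower bound on $\{Z U^-(x,T)\}$; alternatively, since $v_\kappa$ is convex, lower semicontinuity is equivalent to weak lower semicontinuity and can be deduced from the same compactness-and-Fatou scheme. I expect the main obstacle to be the joint lower semicontinuity of $(\eta,\meq)\mapsto\e[\kappa Z_{t,T}^\meq V(\eta/(\kappa Z_{t,T}^\meq),T)]$ under weak convergence of $\meq$ and $L^1$ convergence of $\eta$ — specifically controlling the negative part of $V$ uniformly and justifying the interchange of limit and expectation — which is precisely where Assumption \ref{nonsing2} (the UI condition on $\{ZU^-(x,T)\}$ and the non-singularity of $\tilde U$) must be used in full strength, in the spirit of Schied \cite{schied} and \cite{gordan}.
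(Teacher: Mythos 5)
Your overall strategy (weak compactness of $\q_{t,T}$ plus lower semicontinuity, with the inner $\zeta^*$-problem handled via \eqref{gordanlemma} and the results of \cite{gordan}) is the same as the paper's, and your convexity argument --- realising $v_\kappa$ as a partial minimisation of a jointly convex function over a convex constraint set, rather than an infimum of convex functions --- is exactly what the paper does. However, there is a gap in the way you handle the outer infimum. You first fix $\meq$, claim the inner infimum over $\D^\eta_{t,T}$ is attained at some $\bar\zeta^*(\meq)=\eta Z_{t,T}(\meq)$, and then assert that $\meq\mapsto v^\meq_\kappa(\eta)$ is weakly lower semicontinuous ``by passing to the limit'' in $\e[\kappa Z^{\meq_n}_{t,T}V(\eta/(\kappa Z^{\meq_n}_{t,T}),T)]$ with Fatou. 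This step would fail as stated, for two reasons. First, the displayed quantity is wrong: the inner optimiser is $\bar\zeta^*(\meq_n)=\eta Z_{t,T}(\meq_n)$ with $Z(\meq_n)\in\z^a_T$ varying with $n$, so the argument of $V$ is $\eta Z_{t,T}(\meq_n)/(\kappa Z^{\meq_n}_{t,T})$, not $\eta/(\kappa Z^{\meq_n}_{t,T})$; you cannot pass to the limit without controlling the sequence $Z_{t,T}(\meq_n)$ as well. Second, and more fundamentally, $v^\meq_\kappa(\eta)$ is an \emph{infimum} over $\zeta^*$ of functions of $\meq$, and an infimum of lower semicontinuous functions is not lower semicontinuous in general; Fatou applied term by term does not give l.s.c.\ of the infimum.

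The paper circumvents this by never partially minimising: it takes a \emph{joint} minimising sequence $(\zeta^*_n,\meq_n)\in\D^\eta_{t,T}\times\q_{t,T}$, extracts an a.s.-convergent subsequence of $(\meq_n)$ by weak compactness of $\q_{t,T}$ and a weak$^*$-convergent subsequence of $(\zeta^*_n)$ by Banach--Alaoglu applied to $\D_{t,T}\subset(L^\infty)^*$, and then proves \emph{joint} lower semicontinuity of $(\zeta^*,\meq)\mapsto\mathbb{V}^\meq_\kappa(\zeta^*)$ via the representation \eqref{eq:brilliant} of $\mathbb{V}^\meq_\kappa(\zeta^*)$ as a supremum over $\zeta\in L^\infty$ of terms that are l.s.c.\ in $\meq$ (by the UI part of Assumption \ref{nonsing2} and Fatou) and weak$^*$-continuous in $\zeta^*$; a supremum of jointly l.s.c.\ functions is jointly l.s.c., and joint l.s.c.\ plus compactness of both components yields attainment and, applied again to near-optimisers along $\eta_\alpha\to\eta$, the weak lower semicontinuity of $v_\kappa$. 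To repair your nested argument you would need precisely these two ingredients --- the sup-representation giving joint l.s.c.\ and the weak$^*$-compactness of $\D_{t,T}$ --- at which point the argument collapses into the paper's; so the missing idea is the reduction of l.s.c.\ of $\mathbb{V}^\meq_\kappa$ to a supremum of affine/l.s.c.\ functionals via Proposition A.3 of \cite{gordan}, which you do not invoke.
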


	\begin{proof}
	Since $\eta\in L^1_+(\f_t)$, we deduce that 
	\bqn
		v_\kappa^{\meq}(\eta)=\inf_{\zeta^*\in\D^\eta_{t,T}}\mathbb{V}_\kappa^{\meq}(\zeta^*),\qquad \meq\in\q_{t,T}.\label{kork:n}	
	\eqn
	Hence, let $(\zeta^*_n,\meq_n)\in\D^{\eta}_{t,T}\times\q_{t,T}$, a sequence such that 
	\bqn
		\mathbb{V}_\kappa^{\meq_n}(\zeta^*_n)+\e\big[\kappa\gamma_{t,T}(\meq_n)\big]\to v_\kappa(\eta).	\label{convergence:n}	
	\eqn  
	Since $\q_{t,T}$ is weakly compact, there is a (sub) sequence $(\zeta^*_n,\meq_n)$ (also denoted by $n$) such that $\meq^n$ converges a.s. to some $\bar\meq\in\q_{t,T}$. From the Banach-Alaouglu theorem, we have that $\D_{t,T}$ is weak$^*$-compact. Hence, using the same argument as in the proof of Corollary A2 in \cite{gordan}, we deduce that there is a further sub sequence (denoted by $n$) such that $\zeta^*_n$ converges to some $\bar\zeta^*\in\D^\eta_{t,T}$ in the weak$^*$-topology. 
	
	Given Assumption \ref{nonsing2}, Proposition A.3 in \cite{gordan} may be applied to the utility field $\tilde U(x,t)=Z^\meq_{t,T}U(x,T)$. It then follows that
		\bqn \label{eq:brilliant}
			\mathbb{V}^\meq_\kappa(\zeta^*)
			=
			\sup_{\zeta\in L^\infty}\left(\e\left[\kappa Z^\meq_{t,T}U(\zeta,T)\right]-\langle\zeta^*,\zeta\rangle\right). 
		\eqn	
	Note that for each $\zeta\in L^\infty$, $\{Z^\meq_{t,T}U^-(\zeta,T):\meq\in\q_{t,T}\}$ is uniformly integrable due to Assumption \ref{nonsing2}. Applying Fatou's Lemma to the positive part then yields that the first term in \eqref{eq:brilliant} is lower semicontinuous as a function of $\meq\in\q_{t,T}$ with respect to a.s. convergence. The second term is continuous in $\zeta^*$ with respect to weak$^*$-convergence. Since the supremum preserves lower semicontinuity, it follows that the mapping $(\zeta^*,\meq)\to\mathbb{V}^\meq_\kappa(\zeta^*)$ is jointly l.s.c. with respect to the product topology on $\D^\eta_{t,T}\times\q_{t,T}$. Recall that by Definition \ref{penaltydef}, the mapping $\meq\to\e\left[\kappa\gamma_{t,T}(\meq)\right]$ is l.s.c. with respect to a.s. convergence. Combined with \eqref{convergence:n} applied to the subsequence defined above, this yields the existence of a minimizer $(\bar\zeta^*,\bar\meq)$.

	The convexity of $v_\kappa(\eta)$ follows immediately from the joint convexity of the mapping $(\zeta^*,\meq)\to\mathbb{V}^\meq_\kappa(\zeta^*)+\e\left[\kappa\gamma_{t,T}(\meq)\right]$ (cf. \eqref{eq:brilliant}). To argue the lower semicontinuity, we work as follows. Let $\eta_\alpha\in L^1_+$ such that $\eta_\alpha\to\eta$ weakly and let $(\zeta^*_\alpha,\meq_\alpha)$ be such that $v_\kappa(\eta_\alpha)\ge \mathbb{V}^{\meq_\alpha}_\kappa(\zeta^*_\alpha)+\e\left[\kappa\gamma_{t,T}(\meq_\alpha)\right]$. Using similar arguments as above, one can show that there exists a subsequence $(\zeta^*_\alpha,\meq_\alpha)$ converging in the product topology. Using the joint lower semicontinuity of the mapping $(\zeta^*,\meq)\to\mathbb{V}^\meq_\kappa(\zeta^*)+\e\left[\kappa\gamma_{t,T}(\meq)\right]$ yields the lower semicontinuity of $v_\kappa(\eta)$.  
	\end{proof}

	Next, we establish the conjugacy relations for $u_\kappa$ and $v_\kappa$ (cf. Proposition \ref{a1} below). To this end, we first establish two auxiliary lemmas. The first one follows by applying Propositions A1 and A3 in \cite{gordan} to the auxiliary stochastic utility function $\tilde U(x,T)$ given in \eqref{auxutility}. \\
	
	\begin{lem}\label{gordan}
		Let $\meq\in\q_{t,T}$ and $\kappa\in L^\infty_+(\f_t)$ be such that $\kappa\zet^\meq U(x,T)\in L^1$, $x\in\R$, and assume that $\tilde U(x,T):=\ind_{\kappa=0}U(x,T)+\ind_{\kappa>0}\zet^\meq U(x,T)$, $x\in\R$, (cf. \eqref{auxutility}) satisfies the non-singularity condition 3.3 in \cite{gordan}. Then, for any $\xi\in L^\infty(\f_t)$,
		\bq
			u^\meq_\kappa(\xi)=\inf_{\eta\in L^1(\f_t)}\left(v^\meq_\kappa(\eta)+\langle\xi,\eta\rangle\right).
		\eq		
	\end{lem}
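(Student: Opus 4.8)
The plan is to reduce the claim entirely to the single-measure (non-robust) conjugacy theory developed in \cite{gordan}, applied not to $U$ itself but to the rescaled utility field $\tilde U(x,T)=\ind_{\{\kappa=0\}}U(x,T)+\ind_{\{\kappa>0\}}\zet^\meq U(x,T)$. First I would check that $\tilde U(\cdot,T)$ is, $\mep$-a.s., a genuine utility function on $\R$: strict concavity, strict monotonicity, $C^1$, and the Inada conditions are all preserved because on $\{\kappa>0\}$ we only multiply by the strictly positive $\f_t$-measurable random variable $\zet^\meq$ (and on $\{\kappa=0\}$ nothing changes), and the hypothesis $\kappa\zet^\meq U(x,T)\in L^1$ for all $x$ together with the standing integrability of $U$ gives $\tilde U(x,T)\in L^1(\f_T)$. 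By assumption $\tilde U$ also satisfies the non-singularity condition 3.3 of \cite{gordan}, so the whole duality machinery there is available for $\tilde U$.

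Next I would identify the primal and dual objects attached to $\tilde U$ with the objects $u^\meq_\kappa$ and $v^\meq_\kappa$ appearing in the lemma. On the primal side, the value function of \cite{gordan} for $\tilde U$ over the cone $\cet$ is $\sup_{g\in\cet}\e[\tilde U(\xi+g,T)]$; splitting the expectation over $\{\kappa=0\}$ and $\{\kappa>0\}$ and noting that the contribution of $\{\kappa=0\}$ is irrelevant to the supremum (we may take $g=0$ there, or more carefully localise), this matches $u^\meq_\kappa(\xi)=\sup_{g\in\cet}\e[\kappa\zetq U(\xi+g,T)]$ up to the harmless additive constant coming from $\{\kappa=0\}$; in fact the cleanest route is to absorb the indicator directly, i.e.\ observe $\e[\kappa\zetq U(\xi+g,T)]=\e[\kappa\tilde U(\xi+g,T)]$ and $\e[\kappa\gamma]$-type terms are absent here since this is the single-measure step. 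On the dual side, the Legendre transform of $\tilde U$ is $\tilde V(y,T)=\ind_{\{\kappa=0\}}V(y,T)+\ind_{\{\kappa>0\}}\zet^\meq V(y/(\zet^\meq),T)$, which is exactly the integrand defining $\mathbb{V}^\meq_\kappa$ via \eqref{vvb} once one also uses the characterisation \eqref{gordanlemma} of $\mathcal{D}_{t,T}\cap L^1_+$ to write a generic dual element as $\eta\zet$ with $\eta\in L^1_+(\f_t)$, $\zet\in\z^a_T$. Thus the abstract dual value function of \cite{gordan} applied to $\tilde U$ is precisely $v^\meq_\kappa$, and the biconjugacy/conjugacy statement proved there — that the primal value function equals the infimal convolution of the dual value function with the linear functionals $\langle\xi,\eta\rangle$, $\eta\in L^1(\f_t)$ — is literally the assertion $u^\meq_\kappa(\xi)=\inf_{\eta\in L^1(\f_t)}\big(v^\meq_\kappa(\eta)+\langle\xi,\eta\rangle\big)$.

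The main obstacle I anticipate is bookkeeping on the set $\{\kappa=0\}$: the problem ``$u^\meq_\kappa$'' genuinely only sees $\{\kappa>0\}$, whereas the theory of \cite{gordan} is stated for a utility field that is a bona fide utility everywhere, so one must either (i) verify that restricting all wealth increments $g\in\cet$ and dual elements $\zeta^*$ to the set $\{\kappa>0\}$ is compatible with the cone/polarity structure $\cet$, $\mathcal{D}_{t,T}$ — which is where \eqref{gordanlemma}, and hence NFLVR on finite horizons, is used — or (ii) define $\tilde U$ as above so that it is a legitimate utility field on all of $\Omega$ and then note that the $\{\kappa=0\}$ part contributes identically to both sides and cancels. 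Option (ii), already built into the statement of the lemma through \eqref{auxutility}, is the route I would follow; the remaining work is then purely a matter of matching notation to Propositions A1 and A3 of \cite{gordan} and invoking their conjugacy conclusion for the utility field $\tilde U$.
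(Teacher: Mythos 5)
Your plan is correct and follows essentially the same route as the paper's proof: apply Propositions A1 and A3 of \cite{gordan} to the auxiliary field $\tilde U$, identify its conjugate with the integrand of $\mathbb{V}^\meq_\kappa$ on $\{\kappa>0\}$, and then use \eqref{gordanlemma} to replace the infimum over $\mathcal{D}_{t,T}\cap L^1_+$ by the double infimum over $\eta\in L^1_+(\f_t)$ and $\zeta^*\in\deta$, which yields $v^\meq_\kappa$. The bookkeeping on $\{\kappa=0\}$ that you flag is handled exactly by your option (ii), which is what the definition \eqref{auxutility} and the paper's argument implement.
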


	\begin{proof}
		Because $\meq\sim\mep$, we have that $\zet^\meq>0$, $\mep$-a.s. Moreover, due to assumption, $\tilde U(x,T)$, $x\in\R$, is integrable and satisfies the non-singularity condition. In consequence, Propositions A1 and A3 in \cite{gordan} can be applied to the auxiliary random utility function\footnote{Note that although $\zet^\meq U(x,T)\in L^1$, it is not a priori clear whether $Z^\meq_{t,s}U(s,x)\in L^1(\f_s)$, for $t<s<T$. Hence, it is not clear whether the associated random field is actually a utility field in the sense of Definition \ref{field} (the field could easily be adjusted in order for the utility and path regularity conditions to hold). However, Proposition A1 in \cite{gordan} only makes use of the slice $U(x,T)$ and can therefore be applied under the given assumptions.} $\tilde U(x,T)$, $x\in\R$. Using that on the set $\{\kappa>0\}$, $\tilde U(x,T)=\zetq U(x,T)$ and $\tilde V(y,T)=\zetq V\big(y/\zetq,T\big)$, application of these results yields
		\bqn
			u^\meq_\kappa(\xi)=\inf_{\zeta^*\in\mathcal{D}_{t,T}}
			\left(\mathbb{V}^\meq_\kappa(\zeta^*)+\langle\zeta^*,\xi\rangle\right).\label{gordaneq}
		\eqn
	The set $\mathcal{D}_{t,T}$ in \eqref{gordaneq} can w.l.o.g. be replaced by $\mathcal{D}_{t,T}\cap L^1_+$ (cf. \eqref{vvb}). According to \eqref{gordanlemma}, for each $\zeta^*\in\mathcal{D}_{t,T}\cap L^1_+$, there exist $\eta\in L^1_+(\f_t)$ and $\zet\in\z^a_T$ such that
		\bq
			\langle\zeta^*,\xi\rangle
			\;=\;\e\left[\eta\zet\xi\right]
			\;=\;\e\left[\eta\xi\right]
			\;=\;\langle\eta,\xi\rangle,\qquad \textrm{for all }\xi\in L^\infty(\f_t).
		\eq 
		Hence, $\zeta^*\in\mathcal{D}_{t,T}^\eta$. Conversely, $\mathcal{D}_{t,T}^\eta\subseteq\mathcal{D}_{t,T}$, for $\eta\in L^1_+$. In consequence, it follows from \eqref{gordaneq} that
		\bqq
			u^\meq_\kappa(\xi)&=&\inf_{\eta\in L^1_+(\f_t)}\inf_{\zeta^*\in\mathcal{D}_{t,T}^\eta}
			\left(\mathbb{V}^\meq_\kappa(\zeta^*)+\langle \eta,\xi\rangle\right)\\
			&=&\inf_{\eta\in L^1_+(\f_t)}\left(v^\meq_\kappa(\eta)+\langle\xi,\eta\rangle\right).
		\eqq	
		Since $v^\meq_\kappa(\eta)=\infty$ for $\eta\in L^1(\f_t)\setminus L^1_+(\f_t)$, the infimum may be taken over $L^1(\f_t)$. We easily conclude. 
	\end{proof}

	The next result is the present setting's analogue of Lemma 4.6 in \cite{schied} and is proven by use of the same lopsided minimax theorem. Together with Lemma \ref{gordan}, it is the cornerstone of the proof of the duality relation in Proposition \ref{a1} below.\\

\begin{lem}\label{schied}
	Assume that $\q_{t,T}$ is weakly compact and that $U(x,t)$ and $\gamma$ satisfy Assumption \ref{nonsing2}. Then,								
		\bqn
			\sup_{g\in\cet}\inf_{Z\in\q_{t,T}}\e\Big[\kappa\Big(Z U\left(\xi+g,T\right)+\gamma_{t,T}(Z)\Big)\Big]=
			\inf_{Z\in\q_{t,T}}\sup_{g\in\cet}\e\Big[\kappa\Big(Z U\left(\xi+g,T\right)+\gamma_{t,T}(Z)\Big)\Big].\label{text}
		\eqn			
\end{lem}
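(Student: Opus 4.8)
The statement is a minimax (saddle-point) identity, so the natural tool is a lopsided minimax theorem, exactly as announced in the statement itself. The inequality $\sup\inf \le \inf\sup$ is always true and requires no hypotheses. So the whole content is the reverse inequality $\inf\sup \le \sup\inf$, and the plan is to apply the lopsided minimax theorem of Aubin--Ekeland (the same one invoked in Lemma 4.6 of Schied \cite{schied}) to the bilinear-looking functional
\bqn
	F(g,\meq):=\e\Big[\kappa\Big(\zetq\, U(\xi+g,T)+\gamma_{t,T}(\meq)\Big)\Big],
	\qquad g\in\cet,\ \meq\in\q_{t,T}. \nn
\eqn
To invoke that theorem one must check: (a) $\cet$ is a convex subset of a vector space and $\q_{t,T}$ is a convex, \emph{compact} subset of a topological vector space; (b) $g\mapsto F(g,\meq)$ is concave for each fixed $\meq$; (c) $\meq\mapsto F(g,\meq)$ is convex and lower semicontinuous for each fixed $g$. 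Convexity of $\cet$ is immediate from its definition ($\ket$ is a cone, $L^0_+$ is convex, intersection with $L^\infty$ preserves convexity). Convexity and weak compactness of $\q_{t,T}$ are part of Assumption \ref{nonsing2}. Concavity in $g$ follows from concavity of $x\mapsto U(x,T)$ together with positivity of $\kappa\zetq$; and the $\gamma_{t,T}$ term does not depend on $g$, so it is harmless there. Convexity in $\meq$ follows from convexity of $\meq\mapsto\gamma_{t,T}(\meq)$ (Definition \ref{penaltydef}(ii)) and the fact that $\meq\mapsto\e[\kappa\zetq U(\xi+g,T)]$ is \emph{affine} in the density $Z^\meq_{t,T}$, hence convex.

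\textbf{The main obstacle.} The crux is the lower semicontinuity in $\meq$ of the map $\meq\mapsto F(g,\meq)$ with respect to the topology in which $\q_{t,T}$ is compact --- i.e.\ weak$^*$/a.s.\ convergence of densities. The penalty term $\meq\mapsto\e[\kappa\gamma_{t,T}(\meq)]$ is weakly lower semicontinuous by Definition \ref{penaltydef}(iii), so the real work is the utility term $\meq\mapsto\e[\kappa\zetq U(\xi+g,T)]$. One writes $\kappa\zetq U(\xi+g,T)=\kappa\zetq U^+(\xi+g,T)-\kappa\zetq U^-(\xi+g,T)$. For the negative part, Assumption \ref{nonsing2} guarantees that $\{\zetq U^-(x,T):\meq\in\q_{t,T}\}$ is uniformly integrable for each $x\in\R$; since $\xi\in L^\infty(\f_t)$ and $g\in\cet$ is bounded above (by definition of $\cet$), $\xi+g$ takes values in a set bounded above, and monotonicity of $U^-$ gives a UI dominating family, so along an a.s.-convergent sequence of densities one gets convergence of $\e[\kappa\zetq U^-(\xi+g,T)]$ by Vitali. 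For the positive part, one applies Fatou's lemma to get lower semicontinuity of $\meq\mapsto\e[\kappa\zetq U^+(\xi+g,T)]$; combined with continuity of the negative-part integral this yields lower semicontinuity of the difference. (This is precisely the argument already used in the proof of Proposition \ref{additionallemma} for the map $(\zeta^*,\meq)\mapsto\mathbb{V}^\meq_\kappa(\zeta^*)$, so it can be quoted.) One subtlety to be careful about: the functional may a priori take the value $+\infty$ (on $\{\gamma_{t,T}=\infty\}$), but since we optimize $\meq$ over $\q_{t,T}$ where $\gamma_{t,T}(\meq)<\infty$ a.s., and the integrability part of Definition \ref{penaltydef} ensures $\e^\meq[U(X^\pi_T,T)]$ is well-defined, the functional $F$ is valued in $\R\cup\{+\infty\}$ and the minimax theorem applies to such functionals.

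\textbf{Wrap-up.} Once (a)--(c) are verified, the lopsided minimax theorem gives
\bqn
	\sup_{g\in\cet}\inf_{\meq\in\q_{t,T}}F(g,\meq)=\inf_{\meq\in\q_{t,T}}\sup_{g\in\cet}F(g,\meq), \nn
\eqn
which, after noting that optimizing $g$ over $\cet$ versus over $\ket$ gives the same value (remarked in the text just above the introduction of $u^\meq_\kappa$), and identifying $Z=Z^\meq_{t,T}$ with $\meq\in\q_{t,T}$ via the notational convention, is exactly \eqref{text}. The only place where the full strength of Assumption \ref{nonsing2} (beyond convexity/compactness) is used is the uniform-integrability input needed to push the lower semicontinuity through the negative part of $U$; everything else is soft convex analysis.
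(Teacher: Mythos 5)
Your proposal is correct and follows essentially the same route as the paper: verify convexity and weak compactness of $\q_{t,T}$, concavity in $g$, convexity and weak lower semicontinuity in $\meq$ of the utility term (Fatou for the part bounded below, uniform integrability from Assumption \ref{nonsing2} for the negative part, then convexity to pass from strong to weak lower semicontinuity), and apply the Aubin--Ekeland lopsided minimax theorem. One small slip: to dominate $U^{-}(\xi+g,T)$ you need $\xi+g$ bounded \emph{below} by some constant $-a$, not above, since $U^{-}(\cdot,T)$ is non-increasing --- but this holds because $\cet\subset L^\infty$, so the argument is unaffected.
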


\begin{proof}
		For given $\xi\in L^\infty(\f_t)$ and $g\in\cet$, there exists $a>0$ such that $\xi+g\ge -a$ a.s. Hence, $U(\xi+g,T)\ge U(-a,T)$. For a sequence $(Z_n)_{n\in\mathbb{N}}$, $Z_n\in\q_{t,T}$, such that $Z_n\to Z$ a.s., we then use Fatou's Lemma to obtain
			\bqn
				\liminf_{n\to\infty}\e\Big[\kappa Z_n\Big(U(\xi+g,T)+U^-(-a,T)\Big)\Big]
				\ge \e\Big[\kappa Z\Big(U(\xi+g,T)+U^-(-a,T)\Big)\Big].\label{above}
			\eqn
		Since $\{Z_nU^-(-a,T)\}$, $Z_n\in\q_{t,T}$, is uniformly integrable due to Assumption \ref{nonsing2}, it follows that
			\bq
				\lim_{n\to\infty}\e\big[\kappa Z_nU^-(-a,T)\big]=\e\big[\kappa ZU^-(-a,T)\big],
			\eq
		and, thus, \eqref{above} implies that the function $Z\to\e[\kappa ZU(\xi+g,T)]$ is lower semicontinuous with respect to a.s.-convergence on $\q_{t,T}$. As $\q_{t,T}$ is convex and weakly compact, it is uniformly integrable. Hence, the mapping $Z\to\e[\kappa ZU(\xi+g,T)]$ is lower semicontinuous also with respect to convergence in $L^1$. This, in turn, yields weak lower semicontinuity as the function is convex (affine).
		
		According to Definition \ref{penaltydef}, $Z\to\e[\kappa \gamma_{t,T}(Z)]$ is also convex and weakly lower semicontinuous on $\q_{t,T}$, which is convex and weakly compact due to Assumption \ref{nonsing2}. On the other hand, for each $Z\in\q_{t,T}$, $g\to\e[\kappa ZU(\xi+g)]$ is concave on the convex set $\cet$. Applying the lopsided minimax theorem (cf. Chapter 6 in \cite{aubin84}), we obtain the desired result. 	
	\end{proof}

	The next result establishes the conjugacy relations between $u_\kappa$ and $v_\kappa$. This is the key result upon which the proof on the conditional versions in Theorem \ref{main} relies. The proof uses arguments similar to the ones used in \cite{schied}. However, while the arguments in \cite{schied} rely of the duality results in \cite{kramkov}, we here make use of Lemma \ref{gordan}. \\

\begin{prop}\label{a1}
	Assume that $\q_{t,T}$ is weakly compact and that $U(x,t)$ and $\gamma$ satisfy Assumption \ref{nonsing2}. Then, for all $\xi\in L^\infty(\f_t)$ and $\eta\in L^1_+(\f_t)$, it holds that
		\bq
			u_\kappa(\xi)=\inf_{\eta\in L^1(\f_t)}\big(v_\kappa(\eta)+\langle\xi,\eta\rangle\big)
			\quad\textrm{and}\quad
			v_\kappa(\eta)=\sup_{\xi\in L^\infty(\f_t)}\big(u_\kappa(\xi)-\langle \xi,\eta\rangle\big).
		\eq
	\end{prop}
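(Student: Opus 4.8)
The plan is to obtain the first identity by combining Lemmas \ref{schied} and \ref{gordan}, and then to deduce the second identity from the first via Fenchel--Moreau biconjugation, invoking the convexity and weak lower semicontinuity of $v_\kappa$ established in Proposition \ref{additionallemma}.

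\textbf{First identity.} Recalling that $u_\kappa(\xi)=\sup_{g\in\cet}\inf_{\meq\in\q_{t,T}}\e[\kappa(\zetq U(\xi+g,T)+\gamma_{t,T}(\meq))]$ and that, for each fixed $\meq\in\q_{t,T}$, the term $\kappa\gamma_{t,T}(\meq)$ is $\f_t$-measurable and independent of $g$ (so no $\infty-\infty$ occurs, the negative part of $\e[\kappa\zetq U(\xi+g,T)]$ being integrable by Assumption \ref{nonsing2}), I would first apply Lemma \ref{schied} to interchange the outer supremum and infimum, obtaining
\bq
u_\kappa(\xi)=\inf_{\meq\in\q_{t,T}}\Big(\sup_{g\in\cet}\e\big[\kappa\zetq U(\xi+g,T)\big]+\e\big[\kappa\gamma_{t,T}(\meq)\big]\Big)=\inf_{\meq\in\q_{t,T}}\big(u^\meq_\kappa(\xi)+\e[\kappa\gamma_{t,T}(\meq)]\big).
\eq
Next, for each fixed $\meq\in\q_{t,T}$ I would use Lemma \ref{gordan} to rewrite $u^\meq_\kappa(\xi)=\inf_{\eta\in L^1(\f_t)}(v^\meq_\kappa(\eta)+\langle\xi,\eta\rangle)$; Assumption \ref{nonsing2} guarantees that whenever $\kappa\zet^\meq U(\cdot,T)\in L^1$ the auxiliary field $\tilde U$ from \eqref{auxutility} satisfies the non-singularity hypothesis required there, and the measures for which this integrability fails must be shown to render both $u^\meq_\kappa(\xi)$ and $v^\meq_\kappa(\cdot)$ equal to $+\infty$, hence irrelevant in the infimum over $\meq$. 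Substituting and interchanging the two infima (legitimate since $\langle\xi,\eta\rangle$ does not depend on $\meq$) yields
\bq
u_\kappa(\xi)=\inf_{\eta\in L^1(\f_t)}\Big(\langle\xi,\eta\rangle+\inf_{\meq\in\q_{t,T}}\big(v^\meq_\kappa(\eta)+\e[\kappa\gamma_{t,T}(\meq)]\big)\Big)=\inf_{\eta\in L^1(\f_t)}\big(v_\kappa(\eta)+\langle\xi,\eta\rangle\big),
\eq
which is the first conjugacy relation.

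\textbf{Second identity.} By Proposition \ref{additionallemma}, $v_\kappa$ is convex and lower semicontinuous for $\sigma(L^1(\f_t),L^\infty(\f_t))$; since $\gamma_{t,T}\ge 0$ and $\{\kappa\zet^\meq U^-(0,T):\meq\in\q_{t,T}\}$ is uniformly integrable, $v_\kappa$ is bounded below by a constant and hence proper (if instead $v_\kappa\equiv+\infty$, then the first identity forces $u_\kappa\equiv+\infty$ and the second identity holds trivially). Applying the first identity at $-\xi$ gives $u_\kappa(-\xi)=-\sup_{\eta\in L^1(\f_t)}(\langle\xi,\eta\rangle-v_\kappa(\eta))=-v_\kappa^*(\xi)$, where $v_\kappa^*$ denotes the convex conjugate of $v_\kappa$ on $L^\infty(\f_t)$. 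Fenchel--Moreau then gives $v_\kappa=v_\kappa^{**}$, so
\bq
v_\kappa(\eta)=\sup_{\xi\in L^\infty(\f_t)}\big(\langle\xi,\eta\rangle-v_\kappa^*(\xi)\big)=\sup_{\xi\in L^\infty(\f_t)}\big(\langle\xi,\eta\rangle+u_\kappa(-\xi)\big)=\sup_{\xi\in L^\infty(\f_t)}\big(u_\kappa(\xi)-\langle\xi,\eta\rangle\big),
\eq
where the last step uses the symmetry of $L^\infty(\f_t)$. This is the second conjugacy relation.

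The step I expect to be the main obstacle is the pointwise-in-$\meq$ application of Lemma \ref{gordan}: one must verify carefully that Assumption \ref{nonsing2} supplies its hypotheses for the relevant measures and that the exceptional measures, where $\kappa\zet^\meq U(\cdot,T)$ fails to be integrable, do not affect either side of the conjugacy --- the rest, namely the minimax interchange via Lemma \ref{schied} and the interchange of the two infima, being routine once this integrability bookkeeping is settled.
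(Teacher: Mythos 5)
Your proof is correct and follows essentially the same route as the paper's: Lemma \ref{schied} to pull the infimum over $\q_{t,T}$ outside the supremum, Lemma \ref{gordan} applied measure-by-measure, an interchange of the two infima, and Fenchel--Moreau biconjugation using the convexity and weak lower semicontinuity of $v_\kappa$ from Proposition \ref{additionallemma}. The only divergence is the treatment of measures $\meq$ with $\kappa\zet^\meq U(\cdot,T)\notin L^1$: you propose to show that both $u^\meq_\kappa$ and $v^\meq_\kappa$ equal $+\infty$ for such $\meq$ (which does work, via monotonicity together with the concavity argument showing that non-integrability of $\kappa\zet^\meq U^+(x,T)$ at one $x$ propagates to all $x$), whereas the paper instead restricts to the set $\q^\kappa_{t,T}$ of well-behaved measures and re-enlarges it at the end by means of the elementary Fenchel inequality $\e\big[\kappa\zet^\meq V\big(\zeta^*/(\kappa\zet^\meq),T\big)\big]+\e[\xi\eta]\ge\e\big[\kappa\zet^\meq U(\xi+g,T)\big]$, which is valid for every $\meq\in\q_{t,T}$ and shows that the excluded measures cannot lower the dual infimum below $u_\kappa(\xi)$.
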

	
	\begin{proof}
		
		From Lemma \ref{schied} we obtain
		\bqqn
			u_\kappa(\xi)&=&
			\sup_{g\in\cet}\inf_{\meq\in\q_{t,T}}\e\left[\kappa\left(\zetq U\left(\xi+g,T\right)+\gamma_{t,T}(\meq)\right)\right]\nn\\
			&=&\inf_{\meq\in\q_{t,T}}\Big(\sup_{g\in\cet}\e\left[\kappa\zetq U\left(\xi+g,T\right)\right]+\e\big[\kappa\gamma_{t,T}(\meq)\big]\Big)\nn\\
			&=& \inf_{\meq\in\q_{t,T}}\Big(u^\meq_\kappa(\xi)+\e\big[\kappa\gamma_{t,T}(\meq)\big]\Big).\label{proofkk}
		\eqqn
		
		Note that if $U(x^0,T)\in L^1$ for some $x^0\in\R$, then $U(x,T)\in L^1$ for all $x\in\R$. Indeed, due to concavity, for $x<x_0<y$ with $\lambda x + (1-\lambda)y=x_0$, it holds that
			\bq	
				\lambda\e\left[U\left(x,T\right)\right]+(1-\lambda)\e\left[U\left(y,T\right)\right]
				\le \e\big[U\big(x^0,T\big)\big].
			\eq		
		Since $\mep\in\q_{t,T}$ and $U(x,T)\in L^1$ due to assumption, we can w.l.o.g. replace the set $\q_{t,T}$ in \eqref{proofkk} by
			\bqn
				\q_{t,T}^\kappa:=\big\{\meq\in\q_{t,T}: \kappa \zet^\meq U(x,T)\in L^1\big\}. \label{qkappa}
			\eqn 	
		Due to Assumption \ref{nonsing2}, we may then apply Lemma \ref{gordan} for each $\meq\in\q_{t,T}^\kappa$, to obtain
		\bqq
			u_\kappa(\xi)&=&\inf_{\meq\in\q^\kappa_{t,T}}\left(\inf_{\eta\in L^1(\f_t)}\Big(v^\meq_\kappa(\eta)+\langle\xi,\eta\rangle\Big)+\e\left[\kappa\gamma_{t,T}(\meq)\right]\right)\\
			&=&\inf_{\eta\in L^1(\f_t)}\left(\inf_{\meq\in\q^\kappa_{t,T}}\Big(v^\meq_\kappa(\eta)+\e\left[\kappa\gamma_{t,T}(\meq)\right]\Big)+\langle\xi,\eta\rangle\right)
			\;\;=\;\; \inf_{\eta\in L^1(\f_t)}\big(v_\kappa(\eta)+\langle\xi,\eta\rangle\big),
		\eqq
		where it remains to argue the last step. To this end, note that for each $\zeta^*\in\deta$, $\eta\in L^1(\f_t)$, it holds that
		\bqq
			\e\left[\kappa\zet^\meq V\left(\zeta^*/\kappa\zet^\meq,T\right)\right]+\e[\xi\eta]
			&\ge & 		
			\e\left[\kappa\zet^\meq\left(U\left(\xi+g,T\right)-\zeta^*(\xi+g)/\kappa\zet^\meq\right)\right]+\e[\xi\eta]\\
			&=&
			\e\left[\kappa\zet^\meq U\left(\xi+g,T\right)\right]-\e\left[\zeta^*(\xi+g)\right]+\e[\xi\eta]\\
			&\ge&
			\e\left[\kappa\zet^\meq U\left(\xi+g,T\right)\right].
		\eqq
		Hence, it follows that $\q_{t,T}^\kappa$ can be replaced by $\q_{t,T}$ without loss of generality. This completes the proof of the first conjugacy relation. To argue that $v_\kappa$ is the convex conjugate of $u_\kappa$ it, thus, suffices to argue that $v_\kappa$ is convex and weakly lower semicontinuous. This follows from Proposition \ref{additionallemma} and we conclude. 
	\end{proof}
	
	\subsubsection{Proof of Theorem \ref{main} and Proposition \ref{prop:existence}}\label{proof:reduction}
	
	We prove the main results in Section \ref{sec:equivalence}. To this end, we follow the same procedure as in \cite{gordan} and reduce, by taking expectations, the problem to one involving the $\f_0$-measurable value functions $u_\kappa$ and $v_\kappa$. The results then follow from Propositions \ref{prop:existence} and \ref{a1} above. 
	
	First, we establish the existence of a dual optimizer.

	\begin{proof}[Proof of Proposition \ref{prop:existence}]
		Let $\kappa:=(\max(1,v(\eta;t,T))^{-1}\in\kappa\in L^\infty(\f_t)$. Note that $\kappa$ takes values in $[0,1]$ and w.l.o.g., we may assume that $\{\kappa>0\}\neq\emptyset$. Let $(\bar\zeta^*,\bar\meq)$ be a minimzer of $v_k(\eta)$, whose existence is ensured by Proposition \ref{additionallemma}. W.l.o.g., let $\{\bar\zeta^*>0\}\subseteq\{\kappa>0\}$. Observe that $v_\kappa(\eta)<\infty$ due to the definition of $\kappa$. Therefore, $\mathbb{V}_\kappa^{\bar\meq}(\bar\zeta^*)<\infty$ and, in turn, \eqref{vvb} yields that $\bar\zeta^*\in L^1$. Hence, $\bar\zeta^*\in \D_{t,T}\cap L^1_+$ and, thus, according to \eqref{gordanlemma} there exists $\bar Z\in\mathcal{Z}^a_T$ such that $\bar\zeta^*=\eta \bar Z_{t,T}$. In order to show that $(\bar\meq,\bar Z_{t,T})$ attains the essential infimum in \eqref{lv}, we argue by contradiction. To this end, assume that there exist $\varepsilon>0$, $\meq'\in\q_{t,T}$, $Z'_{t,T}\in\mathcal{Z}^a_T$ and a set $B\in\f_t$ with $\mep(B)>0$, such that
			\bqn
				\e^{\meq'}\Big[V\left(\eta\zet'/\zet^{\meq'},T\right)\Big|\f_t\Big]
		+\gamma_{t,T}(\meq')+\varepsilon
			<
				\e^{\bar\meq}\Big[V\left(\eta\bar Z_{t,T}/\zet^{\bar\meq},T\right)\Big|\f_t\Big]
		+\gamma_{t,T}(\bar\meq)		
			\quad\textrm{on $B$}.\;\; \label{assumpcont}
			\eqn
			Note that $B\subseteq \{\kappa>0\}$. Moreover, w.l.o.g. (scaling if necessary), we may choose $B$ such that $B\subseteq\{\kappa=1\}$. Let the random variable $\tilde\zeta^*\in L^1$ be given by $\tilde\zeta^*:=\eta(\zet'\ind_{B}+\bar Z_{t,T}\ind_{B^c})$. It follows that $\tilde\zeta^*\in\mathcal{D}_{t,T}$ and, thus, $\tilde\zeta^*=\eta\tilde Z_{t,T}$ for some $\tilde Z_{t,T}\in\mathcal{Z}^a_T$. Taking expectations on both side of \eqref{assumpcont} then yields
				\bq
					\mathbb{V}_\kappa^{\tilde\meq}(\tilde\zeta^*)+\e\big[\kappa\gamma_{t,T}(\tilde\meq)\big]
					-\varepsilon \mep(B)
					\le					
					\mathbb{V}_\kappa^{\bar\meq}(\bar\zeta^*)+\e\big[\kappa\gamma_{t,T}(\bar\meq)\big],				
				\eq
			which contradicts the choice of $(\bar\zeta^*,\bar\meq)$ as the minimizer.
		\end{proof}
	
	Next, in order to reduce the conditional conjugacy relations to the $\f_0$-measurable case, we establish an auxiliary lemma. \\
		
	\begin{lem}\label{reduce}
		For fixed $g\in\cet$ and $\xi\in L^\infty(\f_T)$, it holds that 	
		\bqn
			\e\left[\essinf_{\meq\in\q_{t,T}}\kappa
			\Big(\e\left[\left.\zet^\meq U\left(\xi+g,T\right)\right|\f_t\right]+\gamma_{t,T}(\meq)\Big)\right]
			=\inf_{\meq\in\q_{t,T}}\e\Big[\kappa\Big(\zet^\meq U\left(\xi+g,T\right)+\gamma_{t,T}(\meq)\Big)\Big].\label{reduceform}
		\eqn
	\end{lem}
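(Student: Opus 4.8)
The plan is to prove the two inequalities in \eqref{reduceform} separately. The direction ``$\le$'' is a one-line consequence of the tower property: writing, for $\meq\in\q_{t,T}$,
\bq
	F(\meq):=\kappa\Big(\e\left[\left.\zet^\meq U\left(\xi+g,T\right)\right|\f_t\right]+\gamma_{t,T}(\meq)\Big),
\eq
which is an $\f_t$-measurable, quasi-integrable random variable by Definitions \ref{field} and \ref{penaltydef}, and using $\kappa\in L^\infty_+(\f_t)$, one gets $\e\big[\kappa\,\e[\zet^\meq U(\xi+g,T)\,|\,\f_t]\big]=\e\big[\kappa\,\zet^\meq U(\xi+g,T)\big]$, so the right-hand side of \eqref{reduceform} equals $\inf_{\meq\in\q_{t,T}}\e[F(\meq)]$. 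Since $\essinf_{\meq}F(\meq)\le F(\meq_0)$ a.s.\ for each fixed $\meq_0\in\q_{t,T}$, taking expectations gives $\e[\essinf_{\meq}F(\meq)]\le\inf_{\meq_0}\e[F(\meq_0)]$, i.e.\ ``$\le$''.

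For ``$\ge$'' the key step is to show that the family $\{F(\meq):\meq\in\q_{t,T}\}$ is directed downwards. Given $\meq_1,\meq_2\in\q_{t,T}$, I would set $A:=\{F(\meq_1)\le F(\meq_2)\}\in\f_t$ and let $\meq_3\sim\mep|_{\f_T}$ be the measure obtained by pasting the two conditional densities along $A$, so that $\zet^{\meq_3}=\ind_A\zet^{\meq_1}+\ind_{A^c}\zet^{\meq_2}$, with the $\f_t$-marginal normalised so that $Z^{\meq_3}$ is a genuine strictly positive density of total mass one (elementary since $A\in\f_t$). Because $A\in\f_t$,
\bq
	\e\left[\left.\zet^{\meq_3}U(\xi+g,T)\right|\f_t\right]=\ind_A\e\left[\left.\zet^{\meq_1}U(\xi+g,T)\right|\f_t\right]+\ind_{A^c}\e\left[\left.\zet^{\meq_2}U(\xi+g,T)\right|\f_t\right],
\eq
and the $\f_t$-measurability together with the (conditional) convexity of $\gamma_{t,T}$ in Definition \ref{penaltydef} yields the local splitting $\gamma_{t,T}(\meq_3)=\ind_A\gamma_{t,T}(\meq_1)+\ind_{A^c}\gamma_{t,T}(\meq_2)<\infty$ a.s. Hence $\meq_3\in\q_{t,T}$ and $F(\meq_3)=\ind_AF(\meq_1)+\ind_{A^c}F(\meq_2)=F(\meq_1)\wedge F(\meq_2)$, which is the asserted directedness.

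Then, by the standard description of essential infima of downward-directed families, there is a sequence $(\meq_n)\subset\q_{t,T}$ with $F(\meq_n)\downarrow\essinf_{\meq}F(\meq)$ a.s. Since $\gamma_{t,T}\ge 0$ and $g\in\cet$ forces $\xi+g\ge-a$ for some $a>0$, one has $F(\meq_n)\ge-\kappa\,\e[\zet^{\meq_n}U^-(-a,T)\,|\,\f_t]$, and the uniform integrability in Assumption \ref{nonsing2} provides a uniform integrable lower bound; monotone convergence then gives $\e[F(\meq_n)]\to\e[\essinf_{\meq}F(\meq)]$. As $\e[F(\meq_n)]\ge\inf_{\meq}\e[F(\meq)]$ for all $n$, this yields $\e[\essinf_{\meq}F(\meq)]\ge\inf_{\meq}\e[F(\meq)]$, which is ``$\ge$'' (the case $\inf_{\meq}\e[F(\meq)]=+\infty$ being trivial from ``$\le$''). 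Combined with the first inequality, \eqref{reduceform} follows.

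The main obstacle is the directedness step: everything else is bookkeeping with conditional expectations and a monotone-convergence argument. What is genuinely needed is that the pasted measure $\meq_3$ stays in $\q_{t,T}$, i.e.\ $\gamma_{t,T}(\meq_3)<\infty$ a.s., which rests on the local splitting of $\gamma_{t,T}$ along the $\f_t$-set $A$; here one must carefully use the $\f_t$-measurability and (conditional) convexity built into Definition \ref{penaltydef}, together with the correct normalisation of the $\f_t$-marginal in the pasting construction.
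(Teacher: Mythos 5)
Your strategy coincides with the paper's: ``$\le$'' by the tower property, downward directedness of $\{F(\meq)\}$ via pasting two densities along the $\f_t$-set $A=\{F(\meq_1)\le F(\meq_2)\}$, extraction of a decreasing sequence attaining the essential infimum (Neveu), and passage to the limit under the expectation. However, two steps do not hold up as written. The ``local splitting'' $\gamma_{t,T}(\meq_3)=\ind_A\gamma_{t,T}(\meq_1)+\ind_{A^c}\gamma_{t,T}(\meq_2)$ is not a consequence of the properties listed in Definition \ref{penaltydef}: item ii) is convexity with scalar weights, and $\f_t$-measurability of the random variable $\gamma_{t,T}(\meq)$ says nothing about how $\gamma_{t,T}$ acts on a measure pasted along an $\f_t$-set. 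This locality is a genuine extra property; the paper obtains it (together with membership of the pasted measure in $\q_{t,T}$) by invoking Lemma 3.3 of F\"ollmer and Penner \cite{irina} for $\frac{\mathrm{d}\bar\meq}{\mathrm{d}\mep}=\ind_AZ^{\meq_1}_T+\ind_{A^c}Z^{\meq_2}_T$, not by deriving it from Definition \ref{penaltydef}. You correctly flag this as the crux, but your justification does not actually deliver it.

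The more clear-cut gap is the limit passage. For a \emph{decreasing} sequence $F(\meq_n)\downarrow\essinf_{\meq}F(\meq)$, a uniformly integrable \emph{lower} bound does not yield $\e[F(\meq_n)]\to\e[\essinf_{\meq}F(\meq)]$: monotone convergence for decreasing sequences needs integrability of the positive part of the first term (take $F_n\downarrow 0$ with $\e[F_n]\equiv+\infty$; the resulting inequality $\e[\lim_n F_n]<\lim_n\e[F_n]$ fails in exactly the direction you need). This is why the paper first replaces $\q_{t,T}$, without loss of generality, by $\tilde\q_{t,T}=\{\meq\in\q_{t,T}:J(\meq)\in L^1(\f_t)\}$, which is nonempty because $\mep\in\q_{t,T}$ and $J(\mep)\in L^1(\f_t)$; the Neveu sequence then lives in $\tilde\q_{t,T}$, is dominated from above by the integrable $J(\meq_1)$, and monotone convergence applies to $J(\meq_1)-J(\meq_n)\uparrow$. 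Your proof can be repaired by inserting this reduction (it uses the same pasting construction, now against $\mep$), but without it the final monotone-convergence step does not go through.
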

	
	\begin{proof}
		
	The inequality '$\le$' is trivial. To show the reverse inequality, let
		\bq
			J(\meq):=\kappa\e\left[\left.\zet^\meq U\left(\xi+g,T\right)\right|\f_t\right]+\kappa\gamma_{t,T}(\meq),\quad \meq\in\q_{t,T}.
		\eq	
		Note that $\mep\in\q_{t,T}$. Moreover, since $U(x,T)\in L^1$ due to assumption, it holds that
			\bq
				J(\mep)=\kappa\e\left[U\left(\xi+g,T\right)|\f_t\right]+\kappa\gamma_{t,T}(\mep)\in L^1(\f_t).
			\eq
		Hence, w.l.o.g. the set $\q_{t,T}$ in the left hand side of \eqref{reduceform} can be replaced by $\tilde\q_{t,T}:=\{\meq\in\q_{t,T}:J(\meq)\in L^1(\f_t)\}$.

		Next, we claim that the set $\big\{J(\meq)|\meq\in\tilde\q_{t,T}\big\}$ is directed downwards. Indeed, let $\meq_1$, $\meq_2\in\tilde\q_{t,T}$ and define $A:=\{J(\meq_1)\le J(\meq_2)\}\in\f_t$. Let $\bar\meq$ given by $\frac{\mathrm{d}\bar\meq}{\mathrm{d}\mep}=\ind_AZ^{\meq_1}_T+\ind_{A^c}Z^{\meq_2}_T$. According to Lemma 3.3 in \cite{irina}, $\bar\meq\in\q_{t,T}$ and, furthermore,
		\bqq
			J(\bar\meq)&=&
			\kappa\e\left[\left.\big(\ind_AZ^{\meq_1}_{t,T}+\ind_{A^c}Z^{\meq_2}_{t,T}\big)
			U\left(\xi+g,T\right)\right|\f_t\right]+\kappa\gamma_{t,T}(\bar\meq)\\
			&=&\kappa\ind_A\Big(\e^{\meq_1}\left[U\left(\xi+g,T\right)|\f_t\right]+\gamma_{t,T}(\meq_1)\Big)
			+\kappa\ind_{A^c}\Big(\e^{\meq_2}\left[U\left(\xi+g,T\right)|\f_t\right]+\gamma_{t,t}(\meq_2)\Big)\\
			&=&\ind_AJ(\meq_1)+\ind_{A^c}J(\meq_2)\;=\;\min\{J(\meq_1),J(\meq_2)\}.
		\eqq
		In particular, this implies that $\bar\meq\in\tilde\q_{t,T}$. Consequently, it also follows that $\big\{J(\meq)|\meq\in\tilde\q_{t,T}\big\}$ is closed under minimization and so directed downwards. Hence, due to Neveu \cite{neveu75}, there exists a sequence $\meq_n\in\tilde\q_{t,T}$ such that $J(\meq_n)$ is decreasing and
			\bq
				\essinf_{\meq\in\tilde\q_{t,T}}J(\meq)=\lim_{n\to\infty} J(\meq
			_n).
			\eq
		 Use of the monotone convergence theorem then yields that
			\bq
				\e\left[\essinf_{\meq\in\tilde\q_{t,T}}J(\meq)\right]
				=\e\left[\lim_{n\to\infty}\downarrow J(\meq_n)\right]
				=\lim_{n\to\infty}\e\left[J(\meq_n)\right]
				\ge\inf_{\meq\in\tilde\q_{t,T}}\e\left[J(\meq)\right].
			\eq
		Using the above and the fact that $\tilde\q_{t,T}\subseteq\q_{t,T}$, we obtain
			\bq
				\e\left[\essinf_{\meq\in\q_{t,T}}J(\meq)\right]
				\;=\;\e\left[\essinf_{\meq\in\tilde\q_{t,T}}J(\meq)\right]
				\;\ge\;\inf_{\meq\in\tilde\q_{t,T}}\e\big[J(\meq)\big]
				\;\ge\;\inf_{\meq\in\q_{t,T}}\e\big[J(\meq)\big],
			\eq
		and we easily conclude. 		
	\end{proof}

		We are now ready to prove Theorem \ref{main}. We argue by contradiction, assuming that the conditional conjugacy relations does not hold. Taking expectations and applying Lemma \ref{reduce}, it then follows that the $\f_0$-measurable conjugacy relations between $u_\kappa$ and $v_\kappa$ are violated. In consequence, we may apply Proposition \ref{a1} to obtain a contradiction and conclude.

	\begin{proof}[Proof of relation \eqref{a5} in Theorem \ref{main}]\label{mainproof3sub}	
		First, we show that the inequality '$\le$'  holds. To this end, note that for fixed $\bar g\in\cet$ and $\bar\meq\in\q_{t,T}$, it trivially holds that
			\bq
				\essinf_{\meq\in\q_{t,T}}\Big(\e^\meq\left[\left.U(\xi+\bar g,T)\right|\f_t\right]+\gamma_{t,T}(\meq)\Big)
				\le \esssup_{g\in\cet}\e^{\bar\meq}\left[\left.U(\xi+g,T)\right|\f_t\right]+\gamma_{t,T}(\bar\meq),
			\eq
	with $\xi\in L^\infty(\f_t)$ and $\eta\in L^1_+(\f_t)$. Thus, it is immediate that
		\bqn
			u(\xi;t,T)\le \essinf_{\meq\in\q_{t,T}}\left(\esssup_{g\in\cet}\e^\meq\left[\left.U(\xi+g,T)\right|\f_t\right]+\gamma_{t,T}(\meq)\right).\label{tex}			
		\eqn			
	Next, for any $\meq\in\mathcal{M}^a_T$, we have that $S$ is a local martingale and, thus, so is the process $\int_0^t\pi_udS_u$, for all $\pi\in\A_{bd}$. Recall that for all $\pi\in\A_{bd}$, there exists $a>0$ such that $\int_0^t\pi_udS_u>-a$, $t\le T$. It follows that $\e^\meq[g]\le 0$, for all $g\in\cet$. In turn, since $U(x,T)\le V(y,T)+xy$, for all $x\in\R$, $y\ge 0$, it follows that
		\bqq
			\e^\meq\big[\left.U(\xi+g,T)\right|\f_t\big]
			&\le&\e^\meq\left[\left.V\left(\eta\zet/\zetq,T\right)\right|\f_t\right]
				+\e\big[\left.(\xi+g)\eta\zet\right|\f_t\big]\\
			&\le&\e^\meq\left[\left.V\left(\eta\zet/\zetq,T\right)\right|\f_t\right]+\xi\eta
			\qquad\textrm{a.s.,}
			\label{texmex}
		\eqq
	for all $\meq\in\q_{t,T}$, $Z_{t,T}\in\z^a_T$, $\xi\in L^\infty(\f_t)$, $g\in\cet$ and $\eta\in L^1_+(\f_t)$. In combination with \eqref{tex}, this implies that
		\bqq
			u(\xi;t,T)&\le& \essinf_{\meq\in\q_{t,T}}\left(\essinf_{Z\in\z^a_T}\e^\meq\left[\left.V\left(\eta\zet/\zetq,T\right)\right|\f_t\right]+\xi\eta+\gamma_{t,T}(\meq)\right)\\
			&=&v(\eta;t,T)+\xi\eta,			
		\eqq
	for all $\eta\in L^1_+(\f_t)$. This completes the proof of the first inequality.

	To prove the reverse inequality, we argue by contradiction and assume that there exist $\xi\in L^\infty(\f_t)$, $\varepsilon>0$ and $A\in\f_t$ such that
		\bq
			\essinf_{\meq\in\q_{t,T}}\Big(\e^\meq\left[\left.U(\xi+g,T)\right|\f_t\right]+\gamma_{t,T}(\meq)\Big)+\varepsilon \ind_A
			\le \e^\meq\left[\left.V\left(\eta\zet/\zet^\meq,T\right)\right|\f_t\right]+\gamma_{t,T}(\meq)+\xi\eta,			
		\eq
for all $g\in\ket$, $\zet\in\z^a_T$, $\meq\in\q_{t,T}$ and $\eta\in L^1_+(\f_t)$. Observe that $u(\xi;t,T)<\infty$ a.s. on $A$ and, w.l.o.g., we may assume that there is $M<\infty$ such that $u(\xi;t,T)\le M$ a.s. on $A$. Since $\kappa=1/\kappa$ on $A$, it follows by multiplying the above inequality by $\kappa=\ind_A$, taking expectations on both sides and applying Lemma \ref{reduce}, that
		\bq
			\inf_{\meq\in\q_{t,T}}\e\Big[\kappa\Big(\zetq U(\xi+g,T)+\gamma_{t,T}(\meq)\Big)\Big]+\varepsilon P(A)
			\le \e\bigg[\kappa\zetq V\bigg(\frac{\eta}{\kappa}\frac{\zet}{\zet^\meq},T\bigg)\bigg]+\e\left[\kappa\gamma_{t,T}(\meq)\right]+\e\left[\kappa\xi\eta\right],			
		\eq
	where the expression in the first expectation on the right hand side is defined to be zero on $A^c$. 
	According to \eqref{gordanlemma}, we have that for every $\zeta^*\in \mathcal{D}_{t,T}^\eta\cap L^1_+$ with $\eta\in L^1_+(\f_t)$, there exists $\zet\in\z^a_T$ such that $\zeta^*=\eta\zet$. Using this and taking the supremum over $g\in\ket$, we deduce that
		\bqn
			u_\kappa(\xi)+\varepsilon P(A)
			\le \mathbb{V}^\meq_\kappa(\zeta^*)+\e\left[\kappa\gamma_{t,T}(\meq)\right]+\langle \xi,\eta\rangle,	\label{prot}		
		\eqn
for all $\eta\in L^1_+(\f_t)$ such that $\eta=\eta\ind_A$, $\meq\in\q_{t,T}$ and $\zeta^*\in \mathcal{D}^\eta_{t,T}\cap L^1_+$. In consequence, for any $\eta\in L^1_+(\f_t)$ and $\meq\in\q_{t,T}$, the above inequality holds for all $\zeta^*\in \mathcal{D}^\eta_{t,T}$. Indeed, if $\zeta^*\notin L^1_+$ or $\eta\neq\eta\ind_A$, then it holds that $\mathbb{V}^\meq_\kappa(\zeta^*)=\infty$ (cf. \eqref{vvb}). Hence,
		\bq
			u_\kappa(\xi)+\varepsilon P(A)
			\le v^\meq_\kappa(\eta)+\e\left[\kappa\gamma_{t,T}(\meq)\right]+\langle \xi,\eta\rangle,			
		\eq
	for all $\eta\in L^1_+(\f_t)$ and $\meq\in\q_{t,T}$ and. Thus, in turn, since $u_\kappa(\xi)\le M<\infty$ due to the above choice of $\kappa$, we obtain
			\bq
				u_\kappa(\xi)<u_\kappa(\xi)+\varepsilon P(A) \le \inf_{\eta\in L^1(\f_t)}\left(v_\kappa(\eta)+\langle\xi,\eta\rangle\right).
			\eq	
	According to Proposition \ref{a1} we have, thus, obtained a contradiction and we easily conclude.
	\end{proof}

	\begin{proof}[Proof of relation \eqref{a6} in Theorem \ref{main}]
		The assertion \eqref{a5} implies that for all $\eta\in L^1(\f_t)$ and $\xi\in L^\infty(\f_t)$, $v(\eta;t,T)\ge u(\xi;t,T)-\xi\eta$. Hence, the inequality "$\ge$" follows directly.
		
		To prove the reverse inequality, we argue by contradiction and assume that there exist $\eta\in L^1_+(\f_t)$, $\varepsilon>0$ and $A\in\f_t$ such that
		\bq
			\essinf_{\meq\in\q_{t,T}}\Big(\e^\meq\left[\left.U(\xi+g,T)\right|\f_t\right]+\gamma_{t,T}(\meq)\Big)-\xi\eta+\varepsilon \ind_A
			\le \e^\meq\left[\left.V\left(\eta\zet/\zetq,T\right)\right|\f_t\right]+\gamma_{t,T}(\meq),			
		\eq
for all $g\in\ket$, $\xi\in L^\infty(\f_t)$, $\zet\in\z^a_T$ and $\meq\in\q_{t,T}$. Since $\eta$ might be replaced by $\eta\ind_A$ without violating the above inequality, we assume w.l.o.g. that $\eta=0$ on $A^c$. Next, multiplying the above inequality by $\kappa=\ind_A$, taking the expectation and using Lemma \ref{reduce} yields
		\bq
			\inf_{\meq\in\q_{t,T}}\e\Big[\kappa\Big(\zetq U(\xi+g,T)+\gamma_{t,T}(\meq)\Big)\Big]
			-\e\left[\xi\eta\right]+\varepsilon P(A)
			\le \e\bigg[\kappa\zetq V\bigg(\frac{\eta}{\kappa}\frac{\zet}{\zetq},T\bigg)\bigg]+\e\left[\kappa\gamma_{t,T}(\meq)\right].			
		\eq
	According to \eqref{gordanlemma}, for every $\zeta^*\in \mathcal{D}_{t,T}^\eta\cap L^1_+$, there exists $\zet\in\z^a_T$ such that $\zeta^*=\eta\zet$. Hence, it then follows that
		\bq
			u_\kappa(\xi)-\langle \xi,\eta\rangle+\varepsilon P(A)
			\le \mathbb{V}^\meq_\kappa(\zeta^*)+\e\left[\gamma_{t,T}(\meq)\right],			
		\eq
for all $\xi\in L^\infty(\f_t)$, $\meq\in\q_{t,T}$ and $\zeta^*\in \mathcal{D}^\eta_{t,T}\cap L^1_+$. Since $\mathbb{V}_\kappa^\meq(\zeta^*)=\infty$, for any other $\zeta^*\in \mathcal{D}^\eta_{t,T}$, the above inequality holds for all $\xi\in L^\infty(\f_t)$, $\meq\in\q_{t,T}$ and $\zeta^*\in \mathcal{D}^\eta_{t,T}$. Therefore,
		\bq
			u_\kappa(\xi)-\langle \xi,\eta\rangle+\varepsilon P(A)
			\le v^\meq_\kappa(\eta)+\e\left[\gamma_{t,T}(\meq)\right],			
		\eq
	for all $\xi\in L^\infty(\f_t)$ and $\meq\in\q_{t,T}$ and, thus, in turn,
			\bq
				\sup_{\xi\in L^\infty}\big(u_\kappa(\xi)-\langle\xi,\eta\rangle\big)
				<\sup_{\xi\in L^\infty}\big(u_\kappa(\xi)-\langle\xi,\eta\rangle\big)+\varepsilon P(A)
				\le v_\kappa(\eta),
			\eq	
	where we used that $\sup_{\xi\in L^\infty}\big(u_\kappa(\xi)-\langle\xi,\eta\rangle\big)<\infty$,  due to the choice of $\kappa$. According to Proposition \ref{a1} we have, thus, obtained a contradiction and we easily conclude.
	\end{proof}

\subsection{Proof of Propositions \ref{dcp}, \ref{prop:time_consistency} and \ref{characp}}\label{sec:proof_TC}

	In order to prove the results in Section \ref{sec:time_consistent}, we first establish two Lemmata.\\

\begin{lem}\label{dc}
	Let $V$ be a dual random field and $\gamma_{t,T}$ an admissible family of penalty functions such that either Assumption \ref{gammatc} holds, or \eqref{kol} holds and $v^-(\zeta;t,T)\in L^1(\f_t;\meq)$ for all $\zeta\in L^0(\f_t)$ and $\meq\in\tilde\q_{0,T}$, $t\le T$. Then, the dual value field $v(\cdot;t,T)$ and $\gamma_{t,T}$ are self-generating on $[0,T]$.
\end{lem}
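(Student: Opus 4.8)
The statement is the dynamic programming principle in the dual domain, and the plan is to prove, for $0\le s\le t\le T$ and $\eta\in L^0_+(\f_s)$, the identity
\bq
	v(\eta;s,T)=\essinf_{\meq\in\q_{s,t}}\;\essinf_{Z\in\z^a_t}
	\Big\{\e^\meq\Big[v\big(\eta Z_{s,t}/Z^\meq_{s,t};t,T\big)\,\big|\,\f_s\Big]
	+\gamma_{s,t}(\meq)\Big\},
\eq
which is the exact dual counterpart of \eqref{tcvalue}; Proposition \ref{dcp} then follows by transferring it to the primal field through the conjugacy relations of Theorem \ref{main}. The two structural inputs I would lean on are the cocycle (tower) identity \eqref{kol} for $\gamma$ and the stability under pasting \eqref{stable}, supplemented by the standard facts that $\z^a_T$ is closed under pasting of density processes at an intermediate time $t$ and that both $\q_{t,T}$ and $\z^a_T$ are closed under pasting along $\f_t$-measurable sets. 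I would also use throughout that $\gamma_{t,T}(\meq)$ depends on $\meq$ only through the conditional density increment $Z^\meq_{t,T}$, as holds for the penalty functions of interest (e.g.\ those of time-consistent risk measures, cf.\ \eqref{penaltynew}).

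\textbf{The inequality ``$\ge$''.} For this direction I would fix $\varepsilon>0$, pick by a directed-downwards argument (as in the proof of Lemma \ref{reduce}) a pair $(\meq,Z)\in\q_{s,T}\times\z^a_T$ with
\bq
	\e^\meq\big[V\big(\eta Z_{s,T}/Z^\meq_{s,T},T\big)\,\big|\,\f_s\big]+\gamma_{s,T}(\meq)\le v(\eta;s,T)+\varepsilon,
\eq
and then factor it: by \eqref{stable} write $Z^\meq_T=Z^{\meq_0}_tZ^{\meq_1}_{t,T}$ with $\meq_0\in\q_{s,t}$, $\meq_1\in\q_{t,T}$, so that $\meq|_{\f_t}=\meq_0$ and $Z^\meq_{t,T}=Z^{\meq_1}_{t,T}$, and put $\eta_t:=\eta Z_{s,t}/Z^{\meq_0}_{s,t}\in L^0_+(\f_t)$. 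A short computation shows $\eta Z_{s,T}/Z^\meq_{s,T}=\eta_tZ_{t,T}/Z^{\meq_1}_{t,T}$, and the factorization of the density gives the disintegration $\e^\meq[\,\cdot\,|\f_s]=\e^{\meq_0}[\e^{\meq_1}[\,\cdot\,|\f_t]\,|\,\f_s]$. Bounding the inner conditional expectation below by $v(\eta_t;t,T)-\gamma_{t,T}(\meq_1)$ (using that $(\meq_1,Z)$ is admissible in the infimum of \eqref{lv} that defines $v(\eta_t;t,T)$) and invoking \eqref{kol} to rewrite $\gamma_{s,T}(\meq)=\gamma_{s,t}(\meq_0)+\e^{\meq_0}[\gamma_{t,T}(\meq_1)\,|\,\f_s]$, I arrive at $v(\eta;s,T)+\varepsilon\ge\e^{\meq_0}[v(\eta_t;t,T)\,|\,\f_s]+\gamma_{s,t}(\meq_0)$, which is bounded below by the right-hand side of the claimed identity; sending $\varepsilon\downarrow0$ closes the direction.

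\textbf{The inequality ``$\le$''.} Here I would fix $\meq_0\in\q_{s,t}$, $Z^{(1)}\in\z^a_t$ and $\varepsilon>0$, set $\eta_t:=\eta Z^{(1)}_{s,t}/Z^{\meq_0}_{s,t}$, and first produce a \emph{single} pair $(\meq_1,Z^{(2)})\in\q_{t,T}\times\z^a_T$ that is $\varepsilon$-optimal for $v(\eta_t;t,T)$ simultaneously a.s.; this pair is obtained by taking a minimizing sequence for the essential infimum (directed downwards under $\f_t$-pasting), partitioning $\Omega$ into the $\f_t$-sets on which a given term of the sequence is within $\varepsilon$, and pasting the corresponding pairs; alternatively, if Assumption \ref{nonsing2} is also assumed, I would simply use the attainability of Proposition \ref{prop:existence}. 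I then paste $(\meq_0,\meq_1)$ to $\bar\meq\in\q_{s,T}$ via \eqref{stable} and $(Z^{(1)},Z^{(2)})$ to $\bar Z\in\z^a_T$ by the usual pasting of local martingale measures at $t$, and run the same disintegration and \eqref{kol} in reverse to get $v(\eta;s,T)\le\e^{\meq_0}[v(\eta_t;t,T)\,|\,\f_s]+\gamma_{s,t}(\meq_0)+\varepsilon$; taking the essential infimum over $(\meq_0,Z^{(1)})$ and $\varepsilon\downarrow0$ finishes.

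\textbf{Main obstacle.} The pasting algebra is routine; the part I expect to be delicate is the interplay between the nested essential infima --- which forces a measurable $\varepsilon$-optimal selection at each stage, as in the ``$\le$'' step --- and the need for every conditional expectation in sight to be well defined in $\overline{\R}$ with the tower property valid. This is precisely what the two alternative hypotheses are designed to buy: under Assumption \ref{gammatc}, \eqref{stable} keeps the pasted pairs admissible and \eqref{kol} keeps the penalty contributions integrable against the relevant measures, whereas the explicit condition $v^-(\zeta;t,T)\in L^1(\f_t;\meq)$ disposes of the remaining case by ruling out the $-\infty$ pathology outright. I note that of Assumption \ref{gammatc} only \eqref{kol} and \eqref{stable} are used here; the c\`adl\`ag regularity and $\gamma_{t,t}\equiv0$ enter only in the later results.
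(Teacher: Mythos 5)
Your proposal is correct and follows essentially the same route as the paper's proof: both directions are obtained by factoring/pasting the density processes at the intermediate time $t$, applying the cocycle identity \eqref{kol} to split the penalty, and using \eqref{stable} (resp.\ the integrability hypothesis) to keep the pasted objects admissible. The only substantive difference is that the paper resolves both essential infima by invoking the exact dual attainment of Proposition \ref{prop:existence} (thereby tacitly relying on Assumption \ref{nonsing2}) instead of your measurable $\varepsilon$-optimal selection via $\f_t$-pasting; note also that in your ``$\ge$'' direction only the inclusion $\q_{s,T}\subseteq\tilde\q_{s,T}$ is needed, which already follows from \eqref{kol} and the nonnegativity of $\gamma$, so that direction does not actually require \eqref{stable} and survives under the second alternative hypothesis.
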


\begin{proof}
	Fix $0\le s<t<T<\infty$. For $\meq\in\q_{0,T}$, we use the convention $\gamma_{0,t}(\meq)=\gamma_{0,t}(\meq_{|\f_t})$. Let $Z\in\z^a_t$ and $\meq\in\q_{s,t}$. Using Proposition \ref{prop:existence}, we denote by $Z^*$ and $\meq^*$ the optimal elements in $\z^a_T$ and $\q_{t,T}$, respectively, for which $v\big(\eta Z_{s,t}/Z^\meq_{s,t};t,T\big)$ is attained. Then, it holds that 	
	{\setlength{\arraycolsep}{-0.7cm}
	\bqqn	\label{eq:dc_dual_optimal_extended}
		\e\left[\left.Z_{s,t}^\meq v\left(\eta\frac{Z_{s,t}}{Z^\meq_{s,t}};t,T\right)\right|\f_s\right]+\gamma_{s,t}(\meq)&\nn\\
		&&=\e\left[\left.Z_{s,t}^\meq\left( 		
\e\left[\left.Z_{t,T}^{\meq^*} V\left(\eta\frac{Z_{s,t}}{Z^\meq_{s,t}}\frac{Z^*_{t,T}}{Z^{\meq^*}_{t,T}},T\right)\right|\f_t\right]+\gamma_{t,T}\left(Z_{t,T}^{\meq^*}\right)
		\right)\right|\f_s\right]+\gamma_{s,t}(\meq)\nn\\
		&&=\e\left[\left.Z_{s,t}^\meq Z_{t,T}^{\meq^*}V\left(\eta\frac{Z_{s,t}Z^*_{t,T}}{Z^\meq_{s,t}Z^{\meq^*}_{t,T}},T\right)\right|\f_s\right]+\gamma_{s,T}\left(Z_{s,t}^\meq Z_{t,T}^{\meq^*}\right)
		\;\; \ge\;\;v(\eta;s,T),
	\eqqn}	
	where it was used that $Z_tZ^*_{t,T}\in \z^a_T$ and that $\bar\meq\in\q_{s,T}$, with $\frac{\mathrm{d}\bar\meq}{\mathrm{d}\mep|_{\f_T}}=Z^\meq_t Z^{\meq^*}_{t,T}$. While this follows immediately from the fact that $\q_{t,T}$ is stable under pasting under assumption a), it follows from assumption b) by the following argument: $v^-(\zeta;s,T)\in L^1(\f_{T};\bar\meq)$, $\zeta\in L^0(\f_{T})$, implies (using that $v(\eta;s,t)$ is finite) that $\e^\meq\big[\gamma_{t,T}\left(\meq^*\right)|\f_s\big]<\infty$ and, thus, $\bar\meq\in\q_{s,T}$.

	Next, let $Z\in\z^a_T$ and $\meq\in\q_{0,T}$ be the optimal objects for which the infimum in $v(\eta;s,T)$ is attained. Note that due to \eqref{kol}, the fact that $\meq\in\q_{0,T}$, yields $\meq\in\q_{t,T}$ and $\meq|_{\f_t}\in\q_{s,t}$. Hence, it follows that
		\bqqn	\label{eq:dc_dual_optimal_restricted}
			v(\eta;s,T)&=&
			\e\left[\left.Z_{s,T}^{\meq}V\left(\eta\frac{Z_{s,T}}{Z^{\meq}_{s,T}},T\right)\right|\f_s\right]+\gamma_{s,T}\left(Z_{s,T}^{\meq}\right)\nn\\
			&&=\e\left[\left.Z_{s,t}^\meq\left(		
\e\left[\left.Z_{t,T}^{\meq} V\left(\eta\frac{Z_{s,t}}{Z^\meq_{s,t}}\frac{Z_{t,T}}{Z^{\meq}_{t,T}},T\right)\right|\f_t\right]+\gamma_{t,T}\left(Z_{t,T}^{\meq}\right)
		\right)\right|\f_s\right]+\gamma_{s,t}(\meq)\nn\\
		&&\ge \e\left[\left.Z_{s,t}^\meq v\left(\eta\frac{Z_{s,t}}{Z^\meq_{s,t}};t,T\right)\right|\f_s\right]+\gamma_{s,t}(\meq)
		\;\; \ge \;\;v(\eta;s,T),		
		\eqqn
	where the last inequality is due to \eqref{eq:dc_dual_optimal_extended}. In consequence, equality must hold, which combined with \eqref{eq:dc_dual_optimal_extended} yields
	\bq
		v(\eta;s,T)=
		\essinf_{\meq\in\q_{s,t}}
		\essinf_{Z\in\z^a_t}
		\left\{\e^\meq\left[\left.v\bigg(\eta\frac{Z_{s,t}}{Z^\meq_{s,t}};t,T\bigg)\right|\f_s\right]
		+\gamma_{s,t}(\meq)\right\}.
	\eq
	This completes the proof.
\end{proof}
	
	\smallskip
	
	\begin{lem}\label{charac}
	Let $V(y,t)$ be a random field associated with a utility random field (cf. \eqref{vd}), and let $\gamma_{t,T}$ a family of penalty functions satisfying \eqref{kol}. Then, the following two statements are equivalent:
	\begin{itemize}
		\item[i)]{$V(y,t)$ and $\gamma_{t,T}$ are self-generating.}
	
		\item[ii)]{For each $y>0$ and all $t\le T<\infty$,
			\bqn
				V(yZ_t/Z_t^\meq,t)\le \e^\meq\left[\left.V(yZ_T/Z_T^\meq,T)\right|\f_t\right]+\gamma_{t,T}(\meq),\label{villkor}
			\eqn
			for all $\meq\in\q_{t,T}$ and $Z\in\z^a_T$. Moreover, for each $\bar T>0$, there exists $\bar\meq\in\q_{0,\bar T}$ and $\bar Z\in\z^a_T$, such that \eqref{villkor} holds with equality for all $t\le T\le\bar T$.}
	\end{itemize}	
			
	Furthermore, if either a) the set $\q_{0,T}=\tilde\q_{0,T}$, $T>0$, or b) for any $T>0$ and all $\zeta\in L^0(\f_T)$, $V^-(\zeta,T)\in L^1(\f_T;\meq)$ for all $\meq\in\tilde\q_{0,T}$, then i) and ii) are equivalent to the following condition:
			\begin{itemize}
				\item[iii)]{For each $y>0$ and all $t\le T<\infty$, \eqref{villkor} holds for all $\meq\in\q_{t,T}$ and $Z\in\z^a_T$. Moreover, there is a $Z\in\z^a$ and a sequence $\big(\meq_{T^i}\big)$, $i\in\mathbb{N}$, with $\meq_{T^i}=\meq_{T^{i+1}}|_{\f_{T^i}}$ and $\meq_T:=\meq_{T^i}|_{\f_T}\in\q_{0,T}$, $T^i\ge T$, such that for all $0<t<T<\infty$, \eqref{villkor} holds with equality for $\meq_T$ and $Z_T$.}
			\end{itemize}
\end{lem}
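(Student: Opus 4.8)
The plan is to prove the two equivalences by first isolating a ``submartingale $\Rightarrow$ martingale'' mechanism, then treating each implication; throughout I write $\eta_t:=yZ_t/Z_t^\meq$ for the ``state trajectory'' attached to $y>0$, $\meq\in\q_{t,T}$, $Z\in\z^a_T$, noting $\eta_0=y$.

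First I would dispose of \textbf{i)$\,\Rightarrow\,$ii)}. The inequality \eqref{villkor} is immediate: for fixed $y,\meq,Z$ put $\eta=yZ_t/Z_t^\meq\in L^0_+(\f_t)$; since $V(\eta,t)=v(\eta;t,T)$ and the term associated with $(\meq,Z)$ is one member of the essential infimum defining $v(\eta;t,T)$, while $\eta Z_{t,T}/Z_{t,T}^\meq=yZ_T/Z_T^\meq$, \eqref{villkor} follows. For the attainment clause, fix $\bar T>0$; Proposition \ref{prop:existence} gives a minimiser $(\bar\meq,\bar Z)\in\q_{0,\bar T}\times\z^a_{\bar T}$ of $v(y;0,\bar T)$, so $V(y,0)=\e^{\bar\meq}[V(y\bar Z_{\bar T}/\bar Z_{\bar T}^{\bar\meq},\bar T)|\f_0]+\gamma_{0,\bar T}(\bar\meq)$. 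Setting $\tilde M_t:=V(y\bar Z_t/\bar Z_t^{\bar\meq},t)-\gamma_{t,\bar T}(\bar\meq)$ on $[0,\bar T]$ and using \eqref{kol} (which in particular forces $\gamma_{T,T}\equiv 0$), one checks that \eqref{villkor} for $(\bar\meq,\bar Z)$ is exactly the assertion that $\tilde M$ is a $\bar\meq$-submartingale, whereas the minimality relation reads $\tilde M_0=\e^{\bar\meq}[\tilde M_{\bar T}|\f_0]$; a submartingale with this property is a martingale, and reversing the use of \eqref{kol} this is precisely equality in \eqref{villkor} for $(\bar\meq,\bar Z)$ on every $[t,T]\subseteq[0,\bar T]$. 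This establishes ii). For \textbf{iii)} (under a) or b)) I would iterate the same construction along a grid $T^i\uparrow\infty$: Lemma \ref{dc} together with self-generation ($v(\cdot;T^i,T^{i+1})=V(\cdot,T^i)$) shows a minimiser of $v(y;0,T^{i+1})$ can be taken to restrict to $\bar\meq^i$ on $\f_{T^i}$ by pasting $\bar\meq^i$ with the optimal continuation of $v(\cdot;T^i,T^{i+1})$ started from $y\bar Z^i_{T^i}/\bar Z^{i,\bar\meq^i}_{T^i}$; assumption a) (stability under pasting, \eqref{stable}) or b) (the integrability used in the proof of Lemma \ref{dc}) guarantees the pasted measure lies in $\q_{0,T^{i+1}}$. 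The family $(\meq_{T^i})$ is then consistent by construction, the $\bar Z^i$ glue to a single $Z\in\z^a$, and $\meq_{T^i}|_{\f_T}\in\q_{0,T}$ for all $T\le T^i$ because penalties are non-negative and $\gamma_{0,T^i}(\meq_{T^i})=\gamma_{0,T}(\meq_{T^i}|_{\f_T})+\e^{\meq_{T^i}}[\gamma_{T,T^i}(\meq_{T^i})|\f_0]<\infty$.

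For \textbf{ii)/iii)$\,\Rightarrow\,$i)} I must recover $V(\eta,t)=v(\eta;t,T)$ for \emph{all} $\eta\in L^0_+(\f_t)$. For ``$\le$'' I would, given $t\le T$, $\meq\in\q_{t,T}$, $Z\in\z^a_T$, paste to form $\bar Z\in\z^a_T$ with $\bar Z_u=Z^\meq_u$ on $[0,t]$ and $\bar Z_u=Z_t^\meq Z_u/Z_t$ on $(t,T]$ (an $\z^a_T$-density, since $Z^\meq S$ is a local martingale on $[0,t]$ and $ZS$ on $[t,T]$); then $\bar Z_t/Z_t^\meq=1$ and $\bar Z_T/Z_T^\meq=Z_{t,T}/Z_{t,T}^\meq$, so \eqref{villkor} applied to $(\meq,\bar Z)$ at a deterministic $y$ gives $V(y,t)\le\e^\meq[V(yZ_{t,T}/Z_{t,T}^\meq,T)|\f_t]+\gamma_{t,T}(\meq)$; running this over rational $y$ and using that $V(\cdot,t)$ is continuous and monotone in $y$ (dominated convergence for the conditional expectation, bounding via convexity/monotonicity of $V$ and the standing integrability of $(U,\gamma)$) upgrades the inequality to arbitrary $\eta\in L^0_+(\f_t)$, hence $V(\eta,t)\le v(\eta;t,T)$. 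For ``$\ge$'' I would combine this with Lemma \ref{dc} to see that $v(\eta;t,\cdot)$ is non-decreasing with $v(\eta;t,t)=V(\eta,t)$, use the attainment clause of ii)/iii) to pin $v(y;0,T)=V(y,0)$ for all $y,T$, and then propagate equality through the recursion of Lemma \ref{dc}, the existence of dual minimisers (Proposition \ref{prop:existence}), and the martingale property of Paragraph~1 along the optimal trajectories, extending once more to all $\eta$ by the continuity/localisation argument applied to the attained equality. Finally, iii)$\,\Rightarrow\,$ii) is trivial.

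The hard part will be twofold. The first difficulty is the pasting step and the verification that the glued measures stay in $\q_{0,T}$ — this is precisely what forces the extra hypotheses a)/b) for condition iii), since without a stability or integrability assumption the pasted candidate need not have finite penalty. The second, more delicate point is the passage from the distinguished arguments $yZ_t/Z_t^\meq$ that appear in \eqref{villkor} to the arbitrary $\eta\in L^0_+(\f_t)$ in the definition of self-generation; I would handle it by the continuity-in-$y$ plus countable-dense-set localisation sketched above, the crux being to secure enough uniform integrability to pass the limit through the conditional expectations, which should follow from the joint integrability conditions imposed on $U$ and $\gamma$.
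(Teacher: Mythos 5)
Your outline follows essentially the same route as the paper: the inequality in ii) by reading \eqref{villkor} off one competitor in the essential infimum, the equality via the dual minimiser from Proposition \ref{prop:existence} (your ``submartingale with matched endpoints is a martingale'' is just a repackaging of the paper's contradiction argument using \eqref{kol}), the pasting construction for iii), and the passage back to i) by extending \eqref{villkor} from deterministic $y$ to arbitrary $\eta\in L^0_+(\f_t)$. Two points deserve correction. First, in the converse direction ii)$\Rightarrow$i) you route the ``$\ge$'' half (attainment of the essential infimum defining $v(\eta;t,T)$) through Lemma \ref{dc}; but Lemma \ref{dc} requires Assumption \ref{gammatc} (in particular the pasting stability \eqref{stable}) or an integrability hypothesis, neither of which is available for the i)$\Leftrightarrow$ii) equivalence, which assumes only \eqref{kol}. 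This detour is also unnecessary: the attained equality in \eqref{villkor} for the distinguished pair $(\bar\meq,\bar Z)$ extends to arbitrary $\eta\in L^0_+(\f_t)$ by exactly the same localisation argument as the inequality (note $\bar\meq\in\q_{0,\bar T}$ gives $\bar\meq|_{\f_T}\in\q_{t,T}$ via \eqref{kol} and non-negativity of $\gamma$), so that $(\bar\meq,\bar Z)$ itself realises the infimum in \eqref{lv} --- this is what the paper does, and your closing clause already gestures at it. Second, the extension from deterministic $y$ to general $\eta\in L^0_+(\f_t)$ cannot rest on ``dominated convergence over rational $y$'': a general $\eta$ is merely in $L^0_+$, so there is no dominating integrable bound. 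The correct mechanism (as in the paper, following Theorem 3.14 of \v{Z}itkovi\'c) is to first pass to simple $\f_t$-measurable $\tilde\eta=\sum_k y_k\ind_{A_k}$ --- which is exact, by multiplying \eqref{villkor} for each $y_k$ by $\ind_{A_k}$ and summing --- and then to general $\tilde\eta$ by a monotone approximation exploiting convexity of $V$, after which \eqref{13} follows by substituting $\tilde\eta=\eta Z^\meq_t/Z_t$ (which, incidentally, makes your pasting of $Z^\meq$ with $Z$ unnecessary and sidesteps the issue of defining the pasted density on $\{Z_t=0\}$).
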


	\begin{proof}
		First, we show that i) implies ii). To this end, assume that $V$ is self-generating, namely, for any $T>0$ and $t\le T$, it holds that
	\bq
		V(\eta,t)=
		\essinf_{\meq\in\q_{t,T}}\essinf_{Z\in\z^a_T}\Big\{
		\e^\meq\left[\left.V\left(\eta\zet/\zetq,T\right)\right|\f_t\right]
		+\gamma_{t,T}(\meq)\Big\},\quad \eta\in L^0_+.
	\eq
	Let $y>0$, $\tilde Z\in\z^a_T$ and $\tilde\meq\in\q_{t,T}$. Further, let $\eta:=y\tilde Z_t/Z_t^{\tilde\meq}$. Then, it follows that
		\bqq
			V(y\tilde Z_t/Z_t^{\tilde\meq},t)&=&
			\essinf_{\meq\in\q_{t,T}}\essinf_{Z\in\z^a_T}\bigg\{
			\e^\meq\bigg[V\bigg(y\frac{\tilde Z_t\zet}{Z_t^{\tilde\meq}\zetq},T\bigg)\bigg|\f_t\bigg]+\gamma_{t,T}(\meq)\bigg\}\\
			&\le&\e^{\tilde\meq}\big[V\left(y\tilde Z_T/Z_T^{\tilde\meq},T\right)\big|\f_t\big]+\gamma_{t,T}(\tilde\meq),
		\eqq
	which yields \eqref{villkor}.
	Next, let $\bar Z\in\z^a_{\bar T}$ and $\bar\meq\in\q_{0,\bar T}$ the optimal objects for which $v(y,0;\bar T)$ is attained; their existence is ensured by Proposition \ref{prop:existence}. 
	Let $Z^T:=\e[\bar Z|\f_T]$ and $\meq_T:=\bar\meq|\f_T$.
	Note that $\bar\meq\in\q_{0,\bar T}$, implies that $\bar\meq_T\in\q_{0,T}$ and $\bar\meq\in\q_{T,\bar T}$.
	Hence, by use of the same arguments as in \eqref{eq:dc_dual_optimal_restricted} (which makes use of \eqref{kol}) combined with the fact that $V(y,t)$ and $\gamma$ are self-generating, it follows that
	\bqqn	\label{eq:above}
		v(y;0,\bar T)
		&=&\e^{\bar\meq}\left[V\left(y \bar Z_{\bar T}/Z_{\bar T}^{\bar\meq},\bar T\right)\right]
			+\gamma_{0,\bar T}(\bar \meq)\nn\\
		&\ge& \e^{\bar\meq}\left[v\left(y \bar Z_{T}/Z_{T}^{\bar\meq};T,\bar T\right)\right]
			+\gamma_{0,T}(\bar\meq)
			~\ge~ v(y;0,T).
	\eqqn
	By once again using the property of self-generation, it follows that \eqref{eq:above} must hold with equality. In consequence, $v(y;0,T)$ is attained for $Z^T$ and $\meq_T$, $T\le \bar T$. 
	We now argue that for $t\le T\le \bar T$, \eqref{villkor} holds as equality for $\bar Z$ and $\bar \meq$. To this end, assume contrary to the claim that there is $\varepsilon>0$ and $A\in\f_t$, $\mep(A)>0$, such that
		\bq
			V(y\bar Z_t/Z_t^{\bar\meq},t)+\epsilon\ind_A \le \e^{\bar\meq}\left[\left.V(y\bar Z_T/Z_T^{\bar\meq},T)\right|\f_t\right]+\gamma_{t,T}(\bar\meq).
		\eq
	Taking the expectation under $\bar\meq$ and using \eqref{kol} we, then, obtain
		\bqn \label{eq:middle}
			\e^{\bar\meq}\left[V(y\bar Z_t/Z_t^{\bar\meq},t)\right]
			+\gamma_{0,t}(\bar\meq)
			+\epsilon\bar\meq(A) 
			~\le~ \e^{\bar\meq}\left[V(y\bar Z_T/Z_T^{\bar\meq},T)\right]+\gamma_{0,T}(\bar\meq).
		\eqn
	However, due to the above, $v(y;0,t)$ is attained for $\bar Z_t=Z^t$ and $Z^{\bar\meq}_t=Z^{\meq_t}_t$. Hence, we obtain the contradiction $v(y;0,t)<v(y;0,T)$ which completes the proof of the claim.

	In order to prove that ii) implies i), it suffices to show that, for any $0<t<T<\infty$ and $\eta\in L^0_+(\f_t)$, it holds that
	\bqn
		V(\eta,t)\le
		\e^\meq\big[V\big(\eta\zet/\zetq,T\big)\big|\f_t\big]
		+\gamma_{t,T}(\meq),\label{13}
	\eqn
	for all $\meq\in\q_{t,T}$ and $Z\in\z^a_T$ and that there exists some $\hat\meq\in\q_{t,T}$ and $\hat Z\in\z^a_T$ for which equality holds. Note that \eqref{villkor} implies that for a simple, positive and $\f_t$-measurable random variable $\tilde\eta=\sum_{k=1}^ny_k\ind_{A_k}$, we have that
		\bqn
			V(\tilde\eta Z_t/Z_t^\meq,t)\le \e^\meq\left[\left.V(\tilde\eta Z_T/Z_T^\meq,T)\right|\f_t\right]+\gamma_{t,T}(\meq),\label{realp}
		\eqn
	for all $\meq\in\q_{t,T}$ and $Z\in\z^a_T$. Using similar arguments to the ones used in the proof of Theorem 3.14 in \cite{gordan}, this implies that \eqref{realp} holds for arbitrary $\tilde\eta\in L^0_+(\f_t)$. For any  $\meq\in\q_{t,T}$ and $Z\in\z^a_T$, \eqref{13} is then obtained by letting $\tilde\eta=\eta Z^\meq_t/Z_t$. Equality in \eqref{13} follows by a similar argument where all the inequalities become equalities by the choice of $\meq_T\in\q_{t,T}$ and $Z^T\in\z^a_T$ for which \eqref{villkor} holds with equality.

	Next, we show the equivalence between i) and iii). Given a sequence as specified in iii), part ii) holds trivially. Hence, it only remains to show that i) implies iii). To this end, let $T_1<T_2$. Further, let $Z^1\in\z^a_{T_1}$ and $\meq_1\in\q_{T_1}$ be the optimal arguments for which $v(y;0,T_1)$ is attained; their existence is ensured by Proposition \ref{prop:existence}. In turn, let $\meq^*\in\q_{T_2}$ and $Z^*\in\z_{T_2}$ be the optimal arguments for which $v\big(yZ^1_{T_1}/Z^{\meq_1}_{T_1};T_1,T_2\big)$ is attained, and define $\meq_2$ and $Z^2$ as follows:
				\bq
					\frac{\mathrm{d}\meq_2}{\mathrm{d}\mep|_{\f_{T_2}}}=Z^{\meq_1}_{T_1} Z^{\meq^*}_{T_1,T_2}\qquad\textrm{and}\qquad
					Z^2=Z^1_{T_1}Z^*_{T_1,T_2}.
				\eq
		By use of the same argument as in \eqref{eq:dc_dual_optimal_extended} (which makes use of \eqref{kol} and \eqref{stable}) combined with the fact that $V(y,t)$ and $\gamma$ are self-generating, it follows that $Z^2\in\z^a_{T_2}$, $\meq_2\in\q_{0,T_2}$ and that
			\bqqn
				v(y;0,T_1)
				&=&
				\e\left[Z^{\meq^1}_{T_1}Z^{\meq^*}_{T_1,T_2}
				V\left(y\frac{Z^1_{T_1}Z^*_{T_1,T_2}}{Z^{\meq^1}_{T_1}Z^{\meq^*}_{T_1,T_2}},T_2\right)\right]
				+\gamma_{0,T_2}\left(Z^{\meq_1}_{T_1} Z^{\meq^*}_{T_1,T_2}\right)\\
				&=&
				\e^{\meq^2}\left[V\left(yZ^2_{T_2}/Z^{\meq^2}_{T_2},T_2\right)\right]
				+\gamma_{0,T_2}\left(\meq^2\right)	
				~\ge~ v(y;0,T_2).	\nn \label{mohaha}
			\eqqn
		In consequence, equality must hold and, thus, $v(y;0,T_2)$ is attained for $Z^2$ and $\meq_2$.
		As argued above (cf. \eqref{eq:dc_dual_optimal_restricted}), it follows for any $T<T_2$, that $v(y;0,T)$ is attained for $Z=Z^2_T$ and $\meq=\meq_2|_{\f_T}$. 
		Subsequent repetition of the above pasting procedure then yields $Z\in\z^a$ and a sequence $\big(\meq_{T^i}\big)$, $i\in\mathbb{N}$, with $\meq_{T^i}=\meq_{T^{i+1}}|_{\f_{T^i}}$ and $\meq_T:=\meq_{T^i}|_{\f_T}\in\q_{0,T}$, $T^i\ge T$, such that for all $T>0$, $v(y;0,T)$ is attained for $Z_T$ and $\meq_T$. 
		In turn, by once again using arguments similar to the ones used to show that i) implies ii), we obtain that for any $t<T<\infty$, \eqref{villkor} holds as equality for $Z_T$ and $\meq_T$. Hence, iii) holds and we conclude. 	
	\end{proof}

	We now argue how the results in Section \ref{sec:time_consistent} follow from the above Lemmata. 
	First, Theorem \ref{main} and Lemma \ref{dc} readily yield Proposition \ref{dcp}. 
	Further, according to Proposition 3.9 in \cite{gordan}, the fact that $U(x,T)\in L^1(\f_T,\meq)$ for all $\meq\in\tilde\q_{0,T}$, $T>0$, implies that assumption b) of Lemma \ref{charac} holds. Hence, combined with Theorem \ref{main}, Lemma \ref{charac} yields Proposition \ref{characp}.

	Next, we argue Proposition \ref{prop:time_consistency}. W.l.o.g., let $t=0$. By use of the same arguments as in the proof of Lemma \ref{charac} (see i implies iii), it follows that there is $Z\in\z^a$ and a sequence $\big(\bar\meq_{T^i}\big)$, $i\in\mathbb{N}$, with $\bar\meq_{T^i}=\bar\meq_{T^{i+1}}|_{\f_{T^i}}$ and $\bar\meq_T:=\bar\meq_{T^i}|_{\f_T}\in\q_{0,T}$, $T\le T^i$, such that, for all $T\ge 0$, $v(y;0,T)$ is attained for $Z_T$ and $\bar\meq_T$.
	Due to the existence of a saddle-point and the duality between $u(\cdot;0,T)$ and $v(\cdot;0,T)$, it follows (cf. Theorem 2.6 in \cite{schied}) that for each $x\in\R$, there is $Z\in\z^a$ and a sequence $\big(\bar\meq_{T^i}\big)$ satisfying the above, such that
	\bq
		u(x;0,T)~=~
		\esssup_{\pi\in\A}
		\e^{\bar\meq}\bigg[U\bigg(x+\int_0^T\pi_sdS_s,T\bigg)\bigg]
		+\gamma_{0,T}(\bar\meq).
	\eq
	By use of \eqref{kol}, the time--consistency now follows as for the classical utility maximization problem. For completeness, we argue this. To this end, let $T\le\bar T$. It follows that (cf. \eqref{eq:above}),
		\bqq
			u(x,0;\bar T)
			&=&
			\e^{\bar\meq}\Big[
			\e^{\bar\meq}\Big[U\Big(x+\int_0^{\bar T}\bar\pi^{0,\bar T}_sdS_s,\bar T\Big)|\f_T\Big]
			+\gamma_{T,\bar T}(\bar\meq)\Big]+\gamma_{0,T}(\bar\meq)\\
			&\le &
			\e^{\bar\meq}
			\Big[u\Big(x+\int_0^T\bar\pi^{0,\bar T}_sdS_s,T;\bar T\Big)\Big]+\gamma_{0,T}(\bar\meq)
			~\le~
			u(x,0;T). 
		\eqq
		In consequence, equality must hold and, thus, $\bar\pi^{0,T}_0=\bar\pi^{0,\bar T}_0$.  
		Next, let $u\le T$ and assume contrary to the claim that there is $\varepsilon>0$ and $A\in\f_u$ such that 
			\bqn	\label{eq:tc_primal}
				\e^{\bar\meq}\Big[U\Big(x+\int_0^T\bar\pi^{0,T}_sdS_s,T\Big)|\f_u\Big]+\gamma_{u,T}(\bar\meq)+\varepsilon\ind_A
			~\le~
			u\Big(x+\int_0^u\bar\pi^{0,T}_sdS_s,u;T\Big).
			\eqn
		Taking expectations under $\bar\meq$, using that $U$ and $\gamma$ are self-generating and that $\gamma$ satisfies \eqref{kol}, then yields (cf. \eqref{eq:middle}), 
			\bqn
				\e^{\bar\meq}
			\Big[U\Big(x+\int_0^T\bar\pi^{0,T}_sdS_s,T\Big)\Big]+\gamma_{0,T}(\bar\meq)
			~<~
			\e^{\bar\meq}
			\Big[U\Big(x+\int_0^u\bar\pi^{0,T}_sdS_s,u\Big)\Big]+\gamma_{0,u}(\bar\meq),
			\eqn
		which yields the contradiction $u(x,0;T)<u(x,0;u)$. Similarly, assuming the reverse strict inequality in \eqref{eq:tc_primal}, yields a contradiction. We easily conclude.


\begin{thebibliography}{10}

\bibitem{acciaio11}
B.~Acciaio and V.~Goldammer.
\newblock Optimal portfolio selection via conditional convex risk measures on
  {$L^p$}.
\newblock {\em Decisions in Economics and Finance}, 36(1):1--21, 2013.

\bibitem{acciaio11b}
B.~Acciaio and I.~Penner.
\newblock Dynamic risk measures.
\newblock {\em Advanced Mathematical Methods for Finance, \emph{ed. Di Nunno
  and {\O}ksendal, Springer}}, pages 1--34, 2011.

\bibitem{anscombe1963definition}
F.~J. Anscombe and R.~J. Aumann.
\newblock A definition of subjective probability.
\newblock {\em The Annals of Mathematical Statistics}, 34(1):199--205, 1963.

\bibitem{aubin84}
J.~Aubin and I.~Ekeland.
\newblock {\em Applied Nonlinear Analysis}.
\newblock New York, John Wiley \& Sons, 1984.

\bibitem{barberis2003survey}
N.~Barberis and R.~Thaler.
\newblock A survey of behavioral finance.
\newblock {\em Handbook of the Economics of Finance}, 1:1053--1128, 2003.

\bibitem{bellini02}
F.~Bellini and M.~Frittelli.
\newblock On the existence of minimax martingale measures.
\newblock {\em Mathematical Finance}, 12(1):1--21, 2002.

\bibitem{mikepure}
F.~Berrier, L.~Rogers, and M.~Tehranchi.
\newblock A characterization of forward utility functions.
\newblock Available at:
  http://www.statslab.cam.ac.uk/$\sim$mike/papers/forward-utility-consumption.pdf,
  2009.

\bibitem{bionnadal04}
J.~Bion-Nadal.
\newblock Conditional risk measure and robust representation of convex
  conditional risk measures.
\newblock {\em CMAP Preprint}, 557, 2004.

\bibitem{black68}
F.~Black.
\newblock Investment and consumption through time.
\newblock {\em \emph{Financial Note No. 6B. Arthur D. Little, Inc.}}, 1968.

\bibitem{bordigoni07}
G.~Bordigoni, A.~Matoussi, and M.~Schweizer.
\newblock A stochastic control approach to a robust utility maximization
  problem.
\newblock {\em Stochastic analysis and applications}, 2:125--151, 2007.

\bibitem{cerreiab}
S.~Cerreia-Vioglio, F.~Maccheroni, M.~Marinacci, and L.~Montrucchio.
\newblock Complete monotone quasiconcave duality.
\newblock {\em Mathematics of Operations Research}, 36:321--339, 2011.

\bibitem{cerreia}
S.~Cerreia-Vioglio, F.~Maccheroni, M.~Marinacci, and L.~Montrucchio.
\newblock Uncertainty averse preferences.
\newblock {\em Journal of Economic Theory}, 146:1275--1330, 2011.

\bibitem{chen02}
Z.~Chen and L.~Epstein.
\newblock Ambiguity, risk, and asset returns in continuous time.
\newblock {\em Econometrica}, 70(4):1403--1443, 2002.

\bibitem{choulli2007minimal}
T.~Choulli, C.~Stricker, and J.~Li.
\newblock Minimal {H}ellinger martingale measures of order q.
\newblock {\em Finance and Stochastics}, 11(3):399--427, 2007.

\bibitem{CoxHobsonObloj:12}
A.~Cox, D.~Hobson, and J.~Ob\l\'oj.
\newblock Utility theory front to back -- inferring utility from agents'
  choices.
\newblock {\em Int. J. Theor. Appl. Finan.}, 17(3):1450018, 2014.

\bibitem{delbaen10}
F.~Delbaen, S.~Peng, and E.~Gianin.
\newblock Representation of the penalty term of dynamic concave utilities.
\newblock {\em Finance Stochastics}, 14:449--472, 2010.

\bibitem{delbaen06}
F.~Delbaen.
\newblock The structure of m--stable sets and in particular of the set of risk
  neutral measures.
\newblock In {\em In Memoriam Paul-Andr{\'e} Meyer}, pages 215--258. Springer,
  2006.

\bibitem{denis2013}
L.~Denis and M.~Kervarec.
\newblock Optimal investment under model uncertainty in nondominated models.
\newblock {\em SIAM Journal on Control and Optimization}, 51(3):1803--1822,
  2013.

\bibitem{detlefsen05}
K.~Detlefsen and G.~Scandolo.
\newblock Conditional and dynamic convex risk measures.
\newblock {\em Finance and Stochastics}, 9(4):539--561, 2005.

\bibitem{drapeau10}
S.~Drapeau and M.~Kupper.
\newblock Risk preferences and their robust representation.
\newblock {\em Mathematics of Operations Research}, 38:28--62, 2013.

\bibitem{dreze1961fondements}
J.~H. Dr{\`e}ze.
\newblock Les fondements logiques de l'utilit\'e cardinale et de la
  probabilit\'e subjective.
\newblock {\em La Decision, Colloques Internationaux de CNRS}, pages 73--87,
  1961.

\bibitem{duffie1992stochastic}
D.~Duffie and L.~G. Epstein.
\newblock Stochastic differential utility.
\newblock {\em Econometrica: Journal of the Econometric Society}, pages
  353--394, 1992.

\bibitem{karoui}
N.~El~Karoui.
\newblock {\em Ecole d'Ete de Probabilites de Saint-Flour IX-1979}.
\newblock Springer Verlag, 1981.

\bibitem{karoui12}
N.~El~Karoui and M.~Mrad.
\newblock An exact connection between two solvable {SDE}s and a nonlinear
  utility stochastic {PDE}.
\newblock {\em {SIAM} J. Financial Mathematics}, 4(1):697--736, 2013.

\bibitem{karoui13}
N.~El~Karoui and M.~Mrad.
\newblock Stochastic utilities with a given optimal portfolio: Approach by
  stochastic flows.
\newblock {A}r{X}iv:1004.5192v2, 2013.

\bibitem{el1997backward}
N.~El~Karoui, S.~Peng, and M.~C. Quenez.
\newblock Backward stochastic differential equations in finance.
\newblock {\em Mathematical Finance}, 7(1):1--71, 1997.

\bibitem{ellsberg1961risk}
D.~Ellsberg.
\newblock Risk, ambiguity, and the {S}avage axioms.
\newblock {\em The Quarterly Journal of Economics}, pages 643--669, 1961.

\bibitem{faidi11}
W.~Faidi, A.~Matoussi, and M.~Mnif.
\newblock Maximization of recursive utilities: A dynamic maximum principle
  approach.
\newblock {\em {SIAM} J. Financial Mathematics}, 2(1):1014--1041, 2011.

\bibitem{filipovic12}
D.~Filipovi\'c, M.~Kupper, and N.~Vogelpoth.
\newblock Approaches to conditional risk.
\newblock {\em {SIAM} J. Financial Mathematics}, 3(1):402--432, 2012.

\bibitem{filipovic07}
D.~Filipovi\'c and G.~Svindland.
\newblock Convex risk measures on {$L^p$}.
\newblock Available at: http://www.math.lmu.de/$\sim$filipo/PAPERS/crmlp.pdf,
  2007.

\bibitem{irina}
H.~F\"ollmer and I.~Penner.
\newblock Convex risk measures and the dynamics of their penalty functions.
\newblock {\em Statistics \& Decisions}, 24(1):61--96, 2006.

\bibitem{follmer02}
H.~F\"ollmer and A.~Schied.
\newblock Convex measures of risk and trading constraints.
\newblock {\em Finance and Stochastics}, 6:429--447, 2002.

\bibitem{tre}
H.~F{\"o}llmer, A.~Schied, and S.~Weber.
\newblock Robust preferences and robust portfolio choice.
\newblock {\em Handbook of Numerical Analysis}, 15:29--87, 2009.

\bibitem{gundel06}
H.~F{\"o}llmer and A.~Gundel.
\newblock Robust projections in the class of martingale measures.
\newblock {\em Illinois J. Math.}, 50(1-4):439--472 (electronic), 2006.

\bibitem{fritelli02}
M.~Fritelli and E.~Rosazza~Gianin.
\newblock Putting order in risk measures.
\newblock {\em J. Bank. Finance}, 26:1473--1486, 2002.

\bibitem{gilboa89}
I.~Gilboa and D.~Schmeidler.
\newblock Maxmin expected utility with non-unique prior.
\newblock {\em Journal of Mathematical Economics}, 18:141--153, 1989.

\bibitem{HeHuang:94}
H.~He and C.~F. Huang.
\newblock Consumption-portfolio policies: An inverse optimal problem.
\newblock {\em Journal of Economic Theory}, 62(2):257 -- 293, 1994.

\bibitem{heinz96}
B.~Heinz.
\newblock {\em Probability Theory}.
\newblock Walter de Gruyter, 1996.

\bibitem{henderson07}
V.~Henderson and D.~Hobson.
\newblock Horizon-unbiased utility functions.
\newblock {\em Stochastic Process. Appl.}, 117(11):1621--1641, 2007.

\bibitem{hernandez2}
D.~Hern\'andez-Hern\'andez and A.~Schied.
\newblock Robust utility maximization in a stochastic factor model.
\newblock {\em Statistics \& Decisions}, 24(1):109--125, 2006.

\bibitem{hernandez}
D.~Hern\'andez-Hern\'andez and A.~Schied.
\newblock A control approach to robust utility maximization with logarithmic
  utility and time-consistent penalties.
\newblock {\em Stoch. Process. Appl.}, 117(8):980--1000, 2007.

\bibitem{Hershey:1985bn}
J.~C. Hershey and P.~J.~H. Schoemaker.
\newblock {Probability versus certainty equivalence methods in utility
  measurement: Are they equivalent?}
\newblock {\em Management Science}, 31(10):1213--1231, 1985.

\bibitem{jeanblanc12}
M.~Jeanblanc, A.~Matoussi, and A.~Ngoupeyou.
\newblock Robust utility maximization in a discontinuous filtration.
\newblock {A}r{X}iv:1201.2690v3, 2012.

\bibitem{Jin:2008ek}
H.~Jin and X.~Y. Zhou.
\newblock {Behavioral portfolio selection in continuous time}.
\newblock {\em Math. Finance}, 18(3):385--426, 2008.

\bibitem{Kahneman:1979wl}
D.~Kahneman and A.~Tversky.
\newblock {Prospect theory: An analysis of decision under risk}.
\newblock {\em Econometrica}, 1979.

\bibitem{kaina09}
M.~Kaina and L.~R\"uschendorf.
\newblock On convex risk measures on {$L^p$}-spaces.
\newblock {\em Mathematical Methods of Operations Research}, 69(3):475--495,
  2009.

\bibitem{kallblad}
S.~K\"allblad.
\newblock {\em Topics in portfolio choice: qualitative properties, time
  consistency and investment under model uncertainty}.
\newblock PhD thesis, University of Oxford, 2013.

\bibitem{kallblad2013risk}
S.~K{\"a}llblad.
\newblock Risk-and ambiguity-averse portfolio optimization with quasiconcave
  utility functionals.
\newblock {\em arXiv preprint arXiv:1311.7419}, 2013.

\bibitem{karatzas91}
I.~Karatzas, J.~Lehoczky, S.~Shreve, and G.~Xu.
\newblock Martingale and duality methods for utility maximization in an
  incomplete market.
\newblock {\em SIAM J. Control Optim.}, 29(3):702--730, 1991.

\bibitem{karni1985decision}
E.~Karni.
\newblock {\em Decision making under uncertainty: the case of state-dependent
  preferences}.
\newblock Harvard University Press Cambridge, 1985.

\bibitem{karni1983state}
E.~Karni, D.~Schmeidler, and K.~Vind.
\newblock On state dependent preferences and subjective probabilities.
\newblock {\em Econometrica: Journal of the Econometric Society}, pages
  1021--1031, 1983.

\bibitem{kloppel05}
S.~Kl\"oppel and M.~Schweizer.
\newblock Dynamic utility indifference valuation via convex risk measures.
\newblock {\em Mathematical Finance}, 17(4):599--627, 2007.

\bibitem{Knight:21}
F.~Knight.
\newblock {\em Risk, Uncertainty and Profit}.
\newblock Boston: Houghton Mifflin, 1921.

\bibitem{kramkov}
D.~Kramkov and W.~Schachermayer.
\newblock The asymptotic elasticity of utility functions and optimal
  investments in incomplete markets.
\newblock {\em Annals of Applied Probability}, 9:904--950, 1999.

\bibitem{laeven12}
R.~Laeven and M.~Stadje.
\newblock Robust portfolio choice and indifference valuation.
\newblock Technical Report 2012-006, Eurandom Preprint Series, 2012.
\newblock http://www.eurandom.tue.nl/reports/.

\bibitem{lazrak03}
A.~Lazrak and M.~Quenez.
\newblock A generalized stochastic differential utility.
\newblock {\em Mathematics of Operations Research}, 28(1):154--180, 2003.

\bibitem{maccheroni06}
F.~Maccheroni, M.~Marinacci, and A.~Rustichini.
\newblock Ambiguity aversion, robustness, and the variational representation of
  preferences.
\newblock {\em Econometrica}, 74:1447--1498, 2006.

\bibitem{mania08}
M.~Mania and R.~Tevzadze.
\newblock Backward stochastic partial differential equations related to utility
  maximization and hedging.
\newblock {\em Journal of Mathematical Sciences}, 153(3):291--380, 2008.

\bibitem{merton}
R.~Merton.
\newblock Lifetime portfolio selection under uncertainty: the continuous time
  case.
\newblock {\em The Review of Economics and Statistics}, 51:247--257, 1969.

\bibitem{monin2012dynamic}
P.~Monin.
\newblock On a dynamic adaptation of the distribution builder approach to
  investment decisions.
\newblock {\em Quantitative Finance}, 14(5):749--760, 2014.

\bibitem{muller05}
M.~M\"uller.
\newblock {\em Market completion and Robust utility maximization}.
\newblock Dissertation, Humboldt Universit\"at zu Berlin, 2005.

\bibitem{trob}
M.~Musiela and T.~Zariphopoulou.
\newblock Portfolio choice under dynamic investment performance criteria.
\newblock {\em Quantitative Finance}, 9(2):161--170, 2009.

\bibitem{zari10}
M.~Musiela and T.~Zariphopoulou.
\newblock Portfolio choice under space-time monotone performance criteria.
\newblock {\em {SIAM} J. Financial Mathematics}, 1(1):326--365, 2010.

\bibitem{zarispde}
M.~Musiela and T.~Zariphopoulou.
\newblock Stochastic partial differential equations and portfolio choice.
\newblock In {\em Contemporary Quantitative Finance}, pages 195--216. Springer,
  2010.

\bibitem{nad}
S.~Nadtochiy and M.~Tehranchi.
\newblock Optimal investment for all time horizons and {M}artin boundary of
  space-time diffusions.
\newblock {A}r{X}iv:1308.2254, 2013.

\bibitem{neveu75}
J.~Neveu.
\newblock {\em Discrete Parameter Martingales}.
\newblock North-Holland Publishing Company, 1975.

\bibitem{nutz13}
M.~Nutz.
\newblock Utility maximization under model uncertainty in discrete time.
\newblock {\em Mathematical Finance}, 2014.

\bibitem{owari12}
K.~Owari.
\newblock On admissible strategies in robust utility maximization.
\newblock {\em Mathematics and Financial Economics}, 6(2):77--92, 2012.

\bibitem{Savage:1954wo}
L.~J. Savage.
\newblock {\em {The Foundations of Statistics}}.
\newblock John Wiley {\&} Sons Inc., New York, 1954.

\bibitem{schachermayer01}
W.~Schachermayer.
\newblock Optimal investment in incomplete markets when wealth may become
  negative.
\newblock {\em Annals of Applied Probability}, 11(3):694--734, 2001.

\bibitem{schied}
A.~Schied.
\newblock Optimal investments for risk- and ambiguity-averse preferences: a
  duality approach.
\newblock {\em Finance and Stochastics}, 11:107--129, 2007.

\bibitem{wu}
A.~Schied and C.~Wu.
\newblock Duality theory for optimal investments under model uncertainty.
\newblock {\em Statistics \& Decisions}, 23(3):199--217, 2005.

\bibitem{schmeidler1989subjective}
D.~Schmeidler.
\newblock Subjective probability and expected utility without additivity.
\newblock {\em Econometrica: Journal of the Econometric Society}, pages
  571--587, 1989.

\bibitem{sharpe12}
W.~Sharpe.
\newblock Individual risk and return preferences: A preliminary survey.
\newblock Stanford University, Available at:
  {http://www.stanford.edu/$\sim$wfsharpe/art/rrsurvey/vienna2001.htm}, 2001.

\bibitem{simonnet96}
M.~Simonnet.
\newblock {\em Measures and Probabilities}.
\newblock Springer, 1996.

\bibitem{skiadas03}
C.~Skiadas.
\newblock Robust control and recursive utility.
\newblock {\em Finance and Stochastics}, 7(4):475--489, 2003.

\bibitem{MR0021298}
J.~von Neumann and O.~Morgenstern.
\newblock {\em Theory of {G}ames and {E}conomic {B}ehavior}.
\newblock Princeton University Press, Princeton, N. J., 1947.
\newblock 2d ed.

\bibitem{zariphopoulou10}
T.~Zariphopoulou and G.~Zitkovi\'c.
\newblock Maturity-independent risk measures.
\newblock {\em {SIAM} J. Financial Mathematics}, 1(1):266--288, 2010.

\bibitem{gordan}
G.~Zitkovi\'c.
\newblock A dual characterization of self-generation and exponential forward
  performances.
\newblock {\em Annals of Applied Probability}, 19:2176--2210, 2009.

\end{thebibliography}
\bibliographystyle{plaintest}  


\end{document}